\let\csname equation*\endcsname\relax
\let\csname endequation*\endcsname\relax
\renewcommand{\algocf@caption@boxruled}{%
  \hrule
  \hbox to \hsize{%
    \vrule\hskip-0.4pt
    \vbox{   
       \vskip\interspacetitleboxruled%
       \unhbox\algocf@capbox\hfill
       \vskip\interspacetitleboxruled
       }%
     \hskip-0.4pt\vrule%
   }\nointerlineskip%
}%
\newcommand{\norm}[1]{\left\lVert#1\right\rVert}
\DeclareMathOperator{\rank}{rank} %
\DeclareMathOperator{\diag}{diag} %
\DeclareMathOperator{\im}{Im} %
\DeclareMathOperator{\re}{Re} %
\DeclareMathAlphabet{\mathpzc}{OT1}{pzc}{m}{it}
\newcounter{local}
\renewcommand\theenumi{\protect\setcounter{local}%
  {171+\the\value{enumi}}\protect\ding{\value{local}}}
\theoremstyle{definition}
\newtheorem{thm}{Theorem}
\newtheorem{definition}{Definition}
\newtheorem{lem}{Lemma}
\newtheorem{rmk}{Remark}
\newtheorem{exam}{Example}
\newtheorem{prop}{Proposition}
\DeclareMathAlphabet{\mathpzc}{OT1}{pzc}{m}{it}
\begin{document}

\title[]{Pure Gaussian states from quantum harmonic oscillator chains with a single local dissipative process}

\author{Shan Ma$^1$,~Matthew J. Woolley$^1$,~Ian R. Petersen$^1$, and Naoki Yamamoto$^2$}

\address{$^1$School of Engineering and Information Technology, University of New South Wales 
at the Australian Defence Force Academy, Canberra, ACT 2600, Australia}
\address{$^2$Department of Applied Physics and Physico-Informatics, 
Keio University, Yokohama 223-8522, Japan}
\ead{shanma.adfa@gmail.com,  m.woolley@adfa.edu.au,  i.r.petersen@gmail.com, and yamamoto@appi.keio.ac.jp}
\vspace{10pt}

\begin{abstract}
We study the preparation of entangled pure Gaussian states via reservoir engineering. In particular, we consider a chain consisting of $(2\aleph+1)$ quantum harmonic oscillators where the central oscillator of the chain is coupled to a single reservoir. We then completely parametrize the class of  $(2\aleph+1)$-mode pure Gaussian states that can be prepared by this type of quantum harmonic  oscillator chain. This parametrization allows us to determine the steady-state entanglement properties of such quantum harmonic  oscillator chains. 
\end{abstract}
\begin{indented}
\item[]{\it Keywords}: Pure Gaussian states, Linear quantum systems, Reservoir engineering, Harmonic  oscillator chain, Nearest-neighbour Hamiltonian,  Local dissipation.
\end{indented}

\section{Introduction}
Gaussian states play an essential role in continuous-variable quantum information processing~\cite{BP03:book,AI07:jpa,weedbrook12:rmp}. Therefore, the preparation of pure Gaussian states is an important task~\cite{A06:prl}. Mathematically, any pure Gaussian state  can be prepared beginning with the vacuum state, and then applying a Gaussian unitary action whose Heisenberg action is a symplectic linear transformation on the vector of quadrature operators~\cite{SSM88:pra,SMD94:pra,A06:prl}. This method of pure Gaussian state preparation is a closed-system approach. Here we consider the preparation of pure Gaussian states via an open-system approach.   The main idea is that by engineering coherent and dissipative processes, a quantum system can be made strictly stable and will evolve into a given  pure Gaussian state.  
This approach is known as {\it reservoir engineering}~\cite{PCZ96:prl,WC14:pra,P04:jpa}. It is an efficient and robust approach to driving a quantum system into a desired target quantum state. In the finite-dimensional case,  the problem of pure quantum state stabilization by reservoir engineering has been studied theoretically  in~\cite{Y05:pra,KBDKMZ08:pra,VWC09:np}. 
In the infinite-dimensional case, the problem of preparing a pure Gaussian state  via reservoir engineering has recently been explored in~\cite{KY12:pra,Y12:ptrsa,IY13:pra,MWPY14:msc,MWPY16:arxiv,MPW15:arxiv}. In this paper, we focus on the preparation of pure Gaussian states via reservoir engineering.   
We consider an open quantum system, the time evolution of which is governed by   a Markovian Lindblad 
master equation~\cite{WM10:book}: 
\begin{align}
\label{MME}
        \frac{d}{d t}\hat{\rho} 
            &=-i[\hat{H},\; \hat{\rho}]
                +\sum\limits_{j=1}^{K}\left(
                       \hat{c}_{j}\hat{\rho} \hat{c}_{j}^*
                            -\frac{1}{2}\hat{c}_{j}^*\hat{c}_{j}\hat{\rho} 
                                -\frac{1}{2}\hat{\rho} \hat{c}_{j}^*\hat{c}_{j}\right),
\end{align}
where $\hat{\rho}$ is the density operator, $\hat{H}=\hat{H}^*$ 
is the  Hamiltonian operator, $\{\hat{c}_{j}\}$ is a set of Lindblad operators 
that represent the coupling of the system with its environment, and $K$ is the number of dissipative channels. For convenience, we collect all of the Lindblad operators into a vector  
$\hat{L}\triangleq\left[
\hat{c}_{1} \;\hat{c}_{2} \;\cdots \;\hat{c}_{K}
\right]^{\top}$, and we call $\hat{L}$ the \emph{coupling vector}. 
The Lindblad master equation~\eqref{MME} can typically be derived if the system is coupled weakly to a very large environment~\cite{BP02:book}. Under some circumstances,  the evolution described by the Lindblad 
master equation~\eqref{MME} will be strictly stable and will approach a time-independent (stationary) state, i.e.,  
$\lim_{t \rightarrow \infty}\hat{\rho}(t)=\hat{\rho}(\infty)$. 
Based on this fact, it has been shown in~\cite{KY12:pra,Y12:ptrsa} that 
any pure Gaussian state can be prepared in a dissipative quantum system by engineering a suitable pair of operators $\left(\hat{H},\;\hat{L}\right)$. Using the result developed in~\cite{KY12:pra,Y12:ptrsa}, it has been found that for many pure Gaussian states, the quantum systems generating them can be difficult to implement experimentally, mainly because either the Hamiltonian $\hat{H}$ or the coupling vector $\hat{L}$ has a nonlocal coupling structure. 

In this paper, we restrict our attention to a chain of $(2\aleph+1)$ quantum harmonic oscillators which are numbered from left to right as $1,\cdots,(2\aleph+1)$ with nearest-neighbour Hamiltonian interactions. The central oscillator of the chain is coupled to a single reservoir. More specifically, the quantum harmonic  oscillator chain we consider has two crucial features. (\textrm{i}) The Hamiltonian $\hat{H}$ is of the form 
$\hat{H}=\sum\limits_{j=1}^{2\aleph+1}\frac{\omega_{j}}{2}\left(\hat{q}_{j}^{2}+\hat{p}_{j}^{2} \right)+\sum\limits_{j=1}^{2\aleph}g_{j}\left(\hat{q}_{j}\hat{q}_{j+1}+\hat{p}_{j}\hat{p}_{j+1} \right)$, where $\omega_{j}\in \mathbb{R}$, $j=1,2,\cdots,2\aleph+1 $, and $g_{j}\in \mathbb{R}$,  $j=1,2,\cdots,2\aleph $.  This type of Hamiltonian describes a set of nearest-neighbour beam-splitter-like interactions. (\textrm{ii})
 Only the central (i.e., the $(\aleph+1)$th) oscillator of the chain is coupled to the reservoir.  That is, the coupling vector $\hat{L}$ reduces to a single Lindblad operator which is of the form $\hat{L}=c_{1}\hat{q}_{\aleph+1}+c_{2}\hat{p}_{\aleph+1}$, where $c_{1}\in \mathbb{C}$ and $c_{2} \in \mathbb{C}$. A quantum harmonic  oscillator chain subject to the above two constraints should be relatively easy to implement experimentally.  We then develop an exhaustive parametrization of all those pure Gaussian states that can be prepared by this type of quantum harmonic  oscillator chain. This  parametrization allows us to determine the entanglement properties  of the corresponding pure Gaussian states. For example, for the quantum chain considered in~\cite{ZLV15:pra}, oscillators located at equal distances, on the left and right, from the central one are entangled in pairs. However, using the parametrization developed in this paper, we can find pure Gaussian states for which any two oscillators (except the central oscillator) in the chain are entangled.  Note that a chain structure of quantum harmonic oscillators has also been studied in, e.g.,~ \cite{AEP02:pra,BR04:pra}. 
 
It is worth remarking that although in this paper, we only consider the case where the chain has an odd number of quantum harmonic  oscillators, the  method developed here can be easily extended to handle the case where the number of the oscillators is even, as 
we have previously done in~\cite{MPW15:arxiv,MWPY16:cdc}. The method developed in this paper can also be easily extended to handle the case where the reservoir acts locally on an arbitrary oscillator  in the chain (not necessarily the central one). The parametrizations of pure Gaussian steady states in these cases involve a similar method of analysis to the method here, and hence are omitted.    
 
 The paper is organized as follows. In Section~\ref{Preliminaries}, we summarize some basic concepts and results on pure Gaussian states. In Section~\ref{constraints}, we define the type of open quantum harmonic  oscillator chain under consideration. Section~\ref{parametrization} and Section~\ref{Algorithm} contain the main result of this paper.  In Section~\ref{parametrization}, we characterize all stationary pure Gaussian states that can be prepared by a quantum harmonic  oscillator chain with a single reservoir acting locally on the central oscillator of the chain. This characterization is formulated as a theorem. To make the result more accessible to the reader, we also provide an equivalent algorithm in Section~\ref{Algorithm} for finding such pure Gaussian states. Applying the algorithm generates pure Gaussian state covariance matrices. The algorithm also enables us to determine the steady-state entanglement properties  of the quantum harmonic  oscillator chains.  The proof of the main theorem is left to the Appendix.   

\textit{Notation.} 
We use $\mathbb{R}$ to denote  the set of real numbers and $\mathbb{C}$ to denote the set of complex numbers. The set of real $m\times n$ matrices is denoted $\mathbb{R}^{m \times n}$, and  the set of complex-entried 
$m\times n$ matrices is denoted $\mathbb{C}^{m \times n}$. $I_{n}$ is the $n\times n$ identity matrix. $0_{m\times n}$ is the $m\times n$ zero matrix. $\norm{\cdot}$ denotes
  the Euclidean norm ($l_{2}$-norm) of a vector.  The superscript ${}^{\ast}$ denotes either 
the complex conjugate of a complex number or the adjoint of an operator.
For a matrix $A=[A_{jk}] \in \mathbb{C}^{m \times n}$, $A^{\top}=[A_{kj}]$ denotes the transpose of $A$, and $A^{\dagger}=[A_{kj}^{\ast}]$ denotes the complex conjugate transpose of $A$. For a matrix $A=[A_{jk}]$ with operator-valued entries, $A^{\top}=[A_{kj}]$ denotes the transpose of $A$, and $A^{\dagger}=[A_{kj}^{\ast}]$ denotes the transpose of $A$ with its elements replaced by the corresponding adjoint operators. For a real symmetric matrix $A=A^{\top}\in \mathbb{R}^{n \times n}$, $A>0$ means that $A$ is positive definite. We denote by $\diag[A_{1},\cdots,A_{n}]$ the block diagonal matrix whose diagonal blocks are $A_{j}$, $j=1,2,\cdots,n$. $\det(A)$ denotes the determinant of the matrix $A$.

\section{Preliminaries} \label{Preliminaries}
We consider a continuous-variable quantum system consisting of $N$ canonical bosonic modes. Suppose $\hat{q}_{j}$ and $ \hat{p}_{j}$ are the position and momentum operators for the $j$th mode, respectively. In particular, these operators satisfy the following commutation relations (we use $\hbar=1$ throughout the paper)
\begin{align*}
\left[\hat{q}_{j}, \hat{p}_{k}\right]=i\delta_{jk}, \quad \left[\hat{q}_{j}, \hat{q}_{k}\right]=0,\quad \text{and}\;\; \left[\hat{p}_{j}, \hat{p}_{k}\right]=0. 
\end{align*}
It is convenient to arrange the self-adjoint operators $\hat{q}_{j}$, $ \hat{p}_{j}$ into a column vector 
$\hat{x}=\left[\hat{q}_{1}\;\cdots\;\hat{q}_{N}\;\; \hat{p}_{1}\;\cdots\;\hat{p}_{N}\right]^{\top}$.  Then the commutation relations can be written as 
$\left[\hat{x}_{j}, \hat{x}_{k}\right]=i\Sigma_{jk}$, 
where $\Sigma_{jk}$ is the $(j,k)$ element of  the  matrix 
$\Sigma = \begin{bmatrix}
         0 & I_{N}\\
-I_{N} &0
\end{bmatrix}$.

Let $\hat{\rho}$ be the density operator of the system. 
Then the mean value of the  
vector $\hat{x} $ is given by 
$\langle \hat{x} \rangle 
=\left[\tr(\hat{q}_{1}\hat{\rho})\;\cdots\;\tr(\hat{q}_{N}\hat{\rho})
\; \tr(\hat{p}_{1}\hat{\rho})\;\cdots\;\tr(\hat{p}_{N}\hat{\rho})\right]^{\top}$ 
and the covariance matrix of the  
vector $\hat{x} $  is given by   $V=\frac{1}{2}\langle \triangle\hat{x}{\triangle\hat{x}}^{\top}+(\triangle\hat{x}{\triangle\hat{x}}^{\top})^{\top} \rangle$, where $\triangle\hat{x}=\hat{x}-\langle \hat{x}\rangle$. 
A Gaussian state is entirely characterized by its  mean vector $\langle \hat{x}\rangle$ 
and its covariance matrix $V$. Because the mean vector $\langle \hat{x}\rangle$ 
contains no information about noise and entanglement, it is irrelevant for our purpose and will be set to zero without loss of generality.  The purity of a Gaussian state is given  by $p=\tr(\hat{\rho}^{2})=2^{-N}/\sqrt{\det(V)}$. A Gaussian state with  covariance matrix $V$ is pure  if and only if $\det(V)=2^{-2N}$. 

The covariance matrix of a pure Gaussian state is a real and symmetric matrix which must satisfy $V+\frac{i}{2}\Sigma\ge 0$. It then follows that $V>0$~\cite{BF06:JPA,SMD94:pra,PSL09:pra,LS15:jpa}.  However, not all real, positive definite matrices correspond to the covariance matrix of a pure Gaussian state. 
If a matrix $V$ corresponds to the covariance matrix of an $N$-mode pure Gaussian state, it can always be decomposed as 
\begin{align}\label{covariance}
V=\frac{1}{2}\begin{bmatrix}
Y^{-1} &Y^{-1}X\\
XY^{-1} &XY^{-1}X+Y 
\end{bmatrix},
\end{align}
where $X=X^{\top}\in \mathbb{R}^{N \times N}$, $Y=Y^{\top}\in \mathbb{R}^{N \times N}$ and $Y>0$~\cite{MFL11:pra}. For example, the covariance matrix $V$ of  the $N$-mode vacuum state is given by $V=\frac{1}{2}I_{2N}$. In this case, using~\eqref{covariance}, we obtain $X=0_{N\times N}$ and $Y=I_{N}$. Let us define $Z\triangleq X+iY$. Given the matrix $Z$, a covariance matrix can be constructed from the real part $X$ and the imaginary part $Y$ of $Z$ using~\eqref{covariance}. Thus, the matrix $Z$ uniquely characterizes a pure Gaussian state. We refer to $Z=X+iY$ as the  \emph{Gaussian graph matrix}~\cite{MFL11:pra}. Note that, to ensure that the corresponding state is physical, the Gaussian graph matrix $Z$ must satisfy $\re(Z)=\re(Z)^{\top}$ and $\im(Z)=\im(Z)^{\top}>0$.

Suppose that the system Hamiltonian in~\eqref{MME} is quadratic in the quadrature operators, i.e., 
$\hat{H}=\frac{1}{2}\hat{x}^{\top}G\hat{x}$,
with $G=G^{\top}\in \mathbb{R}^{2N \times 2N}$, the coupling vector is  linear in the quadrature operators, i.e., 
$\hat{L} = C \hat{x}$,
with $C\in \mathbb{C}^{K \times 2N}$, and the dynamics of the density operator $\hat{\rho}$ 
obey the Markovian Lindblad master equation~\eqref{MME}.   Then from~\eqref{MME}, we can obtain the following dynamical equations for  
the mean vector $\langle \hat{x}(t) \rangle$ and the covariance matrix $V(t)$ of the canonical operators: 
\begin{numcases}{}
\frac{d\langle\hat{x}(t)\rangle}{dt}=\mathcal{A}\langle\hat{x}(t)\rangle, \label{meanfunction} \\
\frac{dV(t)}{dt}=\mathcal{A}V(t)+V(t)\mathcal{A}^{\top}+\mathcal{D},  \label{covfunction}
\end{numcases}
where $\mathcal{A}=\Sigma\left(G+\im(C^{\dagger}C)\right)$ and  
$\mathcal{D}=\Sigma\re(C^{\dagger}C)\Sigma^{\top}$ are referred to as drift and diffusion matrices, respectively~\cite{WD05:prl}, \cite[Chapter 6]{WM10:book}. 
The linearity of the dynamics guarantees that if the  system is initially prepared in a Gaussian state, then the system will maintain this  Gaussian character, with the mean vector $\langle \hat{x}(t) \rangle$ 
and the covariance matrix $V(t)$ evolving according to~\eqref{meanfunction} and~\eqref{covfunction}, respectively. 
We shall be particularly interested in the unique steady state of the master
equation~\eqref{MME} with the covariance matrix $V(\infty)$.
 Recently, a necessary and sufficient condition has been obtained in~\cite{KY12:pra,Y12:ptrsa} for preparing an arbitrary pure Gaussian steady state via reservoir engineering.  The result is summarized in the following Lemma. 

\begin{lem}[\cite{KY12:pra,Y12:ptrsa}]\label{lemnaoki}
Let $Z=X+iY$ be the Gaussian graph matrix of an $N$-mode pure Gaussian state. Then this pure Gaussian state is  the steady state of the master
equation~\eqref{MME} if and only if
\begin{align} \label{G}
G=\begin{bmatrix}
XRX+YRY-\Gamma Y^{-1}X-XY^{-1}\Gamma^{\top} &-XR+\Gamma Y^{-1}\\
-RX+Y^{-1}\Gamma^{\top} &R
\end{bmatrix},
\end{align}
 and 
\begin{align} \label{C}
C=P^{\top}\left[-Z\;\; I_{N}\right], 
\end{align} 
where  $R=R^{\top}\in\mathbb{R}^{ N\times N}$,  $\Gamma=-\Gamma^{\top}\in\mathbb{R}^{ N\times N}$, and  $P\in \mathbb{C}^{N\times K}$ are free matrices satisfying the following rank condition
\begin{align}
\rank\left([P\;\;\;QP\;\;\;\cdots\;\;\;Q^{N-1}P]\right)=N,\;\;  Q\triangleq -iRY+Y^{-1}\Gamma. \label{rankconstraint}
\end{align}
\end{lem}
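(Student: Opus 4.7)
The plan is to exploit the symplectic decomposition $V=\frac{1}{2}TT^{\top}$ with
\[
T=\begin{bmatrix}Y^{-1/2} & 0\\ XY^{-1/2} & Y^{1/2}\end{bmatrix},
\]
which satisfies $T\Sigma T^{\top}=\Sigma$ and maps the vacuum covariance $\frac{1}{2}I_{2N}$ onto $V$. Conjugating the steady-state Lyapunov equation $\mathcal{A}V+V\mathcal{A}^{\top}+\mathcal{D}=0$ by $T^{-1}$ reduces the analysis to the vacuum: writing $\tilde{\mathcal{A}}\triangleq T^{-1}\mathcal{A}T$ and $\tilde{\mathcal{D}}\triangleq T^{-1}\mathcal{D}T^{-\top}$, the equation becomes $\tilde{\mathcal{A}}+\tilde{\mathcal{A}}^{\top}+2\tilde{\mathcal{D}}=0$, so the symmetric part of $\tilde{\mathcal{A}}$ must equal $-\tilde{\mathcal{D}}$. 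Both directions then amount to identifying symmetric and anti-symmetric parts of $\tilde{\mathcal{A}}$, followed by a controllability argument for Hurwitz stability.

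For sufficiency, I would substitute \eqref{G} and \eqref{C} into $\mathcal{A}=\Sigma(G+\im(C^{\dagger}C))$ and $\mathcal{D}=\Sigma\re(C^{\dagger}C)\Sigma^{\top}$, then conjugate by $T^{-1}$. A block computation shows that $\tilde{\mathcal{A}}$ splits into a symmetric piece that cancels $\tilde{\mathcal{D}}$ exactly, plus an anti-symmetric piece parametrized freely by the symmetric $R$ and the anti-symmetric $\Gamma$. This directly verifies the Lyapunov equation and hence the steady-state property of $V$.

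For necessity, I would use that purity is equivalent to $V+\frac{i}{2}\Sigma\ge 0$ having an $N$-dimensional complex kernel, and a short computation via the factorization above shows that this kernel is the column span of $\bigl[\begin{smallmatrix}-Z\\ I_{N}\end{smallmatrix}\bigr]$. Combining this kernel condition with the Lyapunov equation and with $\mathcal{D}\ge 0$ forces $\mathcal{D}$ to annihilate the kernel; unpacking $\mathcal{D}=\Sigma\re(C^{\dagger}C)\Sigma^{\top}$ then forces each row of $C$ to lie in the row span of $[-Z\;\;I_{N}]$, giving $C=P^{\top}[-Z\;\;I_{N}]$ for some $P\in\mathbb{C}^{N\times K}$, which is \eqref{C}. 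Substituting this back into the Lyapunov equation and matching the four blocks determines $G$ up to a symmetric free block ($R$) and an anti-symmetric free block ($\Gamma$), yielding exactly \eqref{G}.

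Finally, uniqueness of the steady state requires $\mathcal{A}$ to be Hurwitz. In the transformed coordinates, $\tilde{\mathcal{A}}$ can be block-triangularized so that its spectrum is controlled by the $N\times N$ matrix $Q=-iRY+Y^{-1}\Gamma$ coupled through $P$. The rank condition \eqref{rankconstraint} is then recognized as the Popov-Belevitch-Hautus test for the pair $(Q,P)$, which together with the structure of $Q$ forces every eigenvalue of $\mathcal{A}$ to have strictly negative real part. I expect the main obstacle to be the necessity direction, specifically deducing from the kernel condition on $V+\frac{i}{2}\Sigma$ and from $\mathcal{D}\ge 0$ that $C$ must take the nullifier form, since this step requires a careful spectral/complexification argument rather than a direct block manipulation.
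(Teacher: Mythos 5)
The first thing to note is that the paper contains no proof of Lemma~\ref{lemnaoki}: it is imported verbatim from \cite{KY12:pra,Y12:ptrsa}, so your attempt can only be measured against the arguments in those references (and in \cite{MWPY16:arxiv}), not against anything in this manuscript. Your outline does reconstruct essentially their strategy: reduce the steady-state condition to the Lyapunov equation $\mathcal{A}V+V\mathcal{A}^{\top}+\mathcal{D}=0$ plus Hurwitz stability of $\mathcal{A}$, use purity to force the dark-state (nullifier) form of $C$, and parametrize the residual freedom in $G$ by a symmetric $R$ and an antisymmetric $\Gamma$. The steps that can be checked do check out: your $T$ is symplectic with $V=\frac{1}{2}TT^{\top}$; the kernel of $V+\frac{i}{2}\Sigma$ is indeed the column span of $\bigl[\begin{smallmatrix}-Z\\ I_{N}\end{smallmatrix}\bigr]$; and the step you single out as the main obstacle is in fact the easy one. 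Contracting the Lyapunov equation with a kernel vector $w$ (so that $Vw=-\frac{i}{2}\Sigma w$) and using $\mathcal{A}\Sigma+\Sigma\mathcal{A}^{\top}=2\Sigma\im(C^{\dagger}C)\Sigma$ collapses the whole identity to $\norm{C\Sigma w}^{2}=0$; since $\Sigma\bigl[\begin{smallmatrix}-Z\\ I_{N}\end{smallmatrix}\bigr]=\bigl[\begin{smallmatrix}I_{N}\\ Z\end{smallmatrix}\bigr]$, this gives $C\bigl[\begin{smallmatrix}I_{N}\\ Z\end{smallmatrix}\bigr]=0$ and hence $C=P^{\top}[-Z\;\;I_{N}]$ directly, with no spectral or complexification argument required.

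The genuine gap is the stability step, which is where most of the technical weight of the lemma actually sits. Condition~\eqref{rankconstraint} is the Kalman controllability condition for the $N\times N$ complex pair $(Q,P)$ with $Q=-iRY+Y^{-1}\Gamma$; "recognizing it as the PBH test" does not explain why controllability of this reduced pair is \emph{equivalent} to the $2N\times 2N$ real matrix $\mathcal{A}=\Sigma\bigl(G+\im(C^{\dagger}C)\bigr)$ being Hurwitz. One must actually exhibit the complex similarity that block-triangularizes $\mathcal{A}$ using the invariant subspace spanned by $\bigl[\begin{smallmatrix}-Z\\ I_{N}\end{smallmatrix}\bigr]$ and its conjugate, identify the two $N\times N$ diagonal blocks in terms of $Q$, $Y$ and $P$ (only one of them acquires the dissipative correction coming from $PP^{\dagger}$), and then run a Lyapunov/observability argument on that block to get both implications: controllable $\Rightarrow$ Hurwitz, which sufficiency needs for uniqueness and attractivity of the steady state, and Hurwitz $\Rightarrow$ controllable, which necessity needs to conclude that \eqref{rankconstraint} must hold. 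As written, your sketch asserts this equivalence rather than proving it, so neither direction of the ``if and only if'' is complete; this, rather than the nullifier step, is where the careful work lies. A secondary, smaller omission: in the sufficiency direction the claim that the symmetric part of $\tilde{\mathcal{A}}$ cancels $\tilde{\mathcal{D}}$ ``exactly'' is precisely the block computation that singles out the form \eqref{G}, and it should be displayed rather than asserted, since it is also what shows that $R$ and $\Gamma$ exhaust the freedom.
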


\begin{rmk}
A pair $\left(A_{1},\;A_{2}\right)$ where $A_{1}\in \mathbb{C}^{n\times n}$ and $A_{2}\in \mathbb{C}^{n\times m}$ is said to be \emph{controllable} if the matrix $[A_{2}\;\;\;A_{1}A_{2}\;\;\;\cdots\;\;\;A_{1}^{n-1}A_{2}]$ has full row rank~\cite{ZJK96:book}. It then follows
that the rank condition~\eqref{rankconstraint} is equivalent to $\left(Q,\;P\right)$ being controllable. 
\end{rmk}

\begin{rmk}
The rank condition~\eqref{rankconstraint} guarantees the strict stability of the resulting linear quantum system with $\hat{H}=\frac{1}{2}\hat{x}^{\top}G\hat{x}$ and $\hat{L}=C\hat{x}$; see~\cite{Y12:ptrsa,MWPY16:arxiv} for details. 
\end{rmk}

\section{Constraints} \label{constraints}
In the sequel, we restrict our consideration to a special class of linear open quantum systems, i.e., quantum harmonic  oscillator chains subject to constraints.   Then in Section~\ref{parametrization}, we will investigate which pure Gaussian states can be prepared by this type of quantum harmonic  oscillator chain. The system we consider is a  chain consisting of $(2\aleph+1)$ harmonic oscillators, labelled $1$ to $(2\aleph+1)$ from left to right, subject to the following two constraints.    
\begin{enumerate}
\item The Hamiltonian $\hat{H}$ is of the form 
$\hat{H}=\sum\limits_{j=1}^{2\aleph+1}\frac{\omega_{j}}{2}\left(\hat{q}_{j}^{2}+\hat{p}_{j}^{2} \right)+\sum\limits_{j=1}^{2\aleph}g_{j}\left(\hat{q}_{j}\hat{q}_{j+1}+\hat{p}_{j}\hat{p}_{j+1} \right)$, where $\omega_{j}\in \mathbb{R}$, $j=1,2,\cdots,2\aleph+1 $, and $g_{j}\in \mathbb{R}$,  $j=1,2,\cdots,2\aleph $. \label{constraint1}
\item Only the central  oscillator of the chain is coupled to the reservoir. That is, the coupling vector $\hat{L}$ is of the form $\hat{L}=c_{1}\hat{q}_{\aleph+1}+c_{2}\hat{p}_{\aleph+1}$, where $c_{1} \in \mathbb{C}$ and $ c_{2} \in \mathbb{C}$. \label{constraint2}
\end{enumerate}

\begin{rmk}
The structure of the linear quantum system subject to the constraints~\ref{constraint1} and~\ref{constraint2} is shown in Fig.~\ref{fig1}. The system is a chain composed of $(2\aleph+1)$ quantum harmonic oscillators with nearest–-neighbour Hamiltonian interactions. Only the central (i.e., $(\aleph+1)$th) oscillator of the chain is coupled to the reservoir. The Hamiltonian described in the constraint~\ref{constraint1} can be rewritten in terms of annihilation and creation operators as
\begin{align}
\hat{H}&=\sum\limits_{j=1}^{2\aleph+1}\frac{\omega_{j}}{2}\left(\hat{q}_{j}^{2}+\hat{p}_{j}^{2} \right)+\sum\limits_{j=1}^{2\aleph}g_{j}\left(\hat{q}_{j}\hat{q}_{j+1}+\hat{p}_{j}\hat{p}_{j+1} \right) \notag\\
&=\sum\limits_{j=1}^{2\aleph+1}\frac{\omega_{j}}{2}\left(\hat{a}_{j}^{\ast}\hat{a}_{j}+\hat{a}_{j}\hat{a}_{j}^{\ast} \right)+\sum\limits_{j=1}^{2\aleph}g_{j}\left(\hat{a}_{j}^{\ast}\hat{a}_{j+1}+\hat{a}_{j}\hat{a}_{j+1}^{\ast} \right), \notag\\
&=\sum\limits_{j=1}^{2\aleph+1}\frac{\omega_{j}}{2}\left(2\hat{a}_{j}^{\ast}\hat{a}_{j}+1\right)+\sum\limits_{j=1}^{2\aleph}g_{j}\left(\hat{a}_{j}^{\ast}\hat{a}_{j+1}+\hat{a}_{j}\hat{a}_{j+1}^{\ast} \right),\notag\\
&\cong\sum\limits_{j=1}^{2\aleph+1}\omega_{j} \hat{a}_{j}^{\ast}\hat{a}_{j}+\sum\limits_{j=1}^{2\aleph}g_{j}\left(\hat{a}_{j}^{\ast}\hat{a}_{j+1}+\hat{a}_{j}\hat{a}_{j+1}^{\ast} \right), \label{beam_interaction}
\end{align}
where $\hat{a}_{j}=\frac{\hat{q}_{j}+i\hat{p}_{j}}{\sqrt{2}}$ and $\hat{a}_{j}^{\ast}=\frac{\hat{q}_{j}-i\hat{p}_{j}}{\sqrt{2}}$ are the annihilation and creation operators for the $j$th oscillator, respectively. The last relation follows from the fact that a constant term in the Hamiltonian does not produce any dynamics, and hence can be dropped. 
It can be seen immediately from~\eqref{beam_interaction} that the nearest–-neighbour Hamiltonian coupling is a beam-splitter-like interaction. Note that in the constraint~\ref{constraint1}, we require only that the parameters $\omega_{j}$, $j=1,2,\cdots,2\aleph+1 $, and $g_{j}$,  $j=1,2,\cdots,2\aleph $, are real. These parameters do not necessarily have the same or opposite values. Thus, the linear quantum harmonic  oscillator chain subject to the constraints~\ref{constraint1} and~\ref{constraint2}  is more general than the system studied in~\cite{ZLV15:pra}, where some symmetries and antisymmetries are assumed within the parameters $\omega_{j}$, $j=1,2,\cdots,2\aleph+1 $, and $g_{j}$,  $j=1,2,\cdots,2\aleph $. 
\begin{figure}[htbp]
\begin{center}
\includegraphics[height=2.5cm]{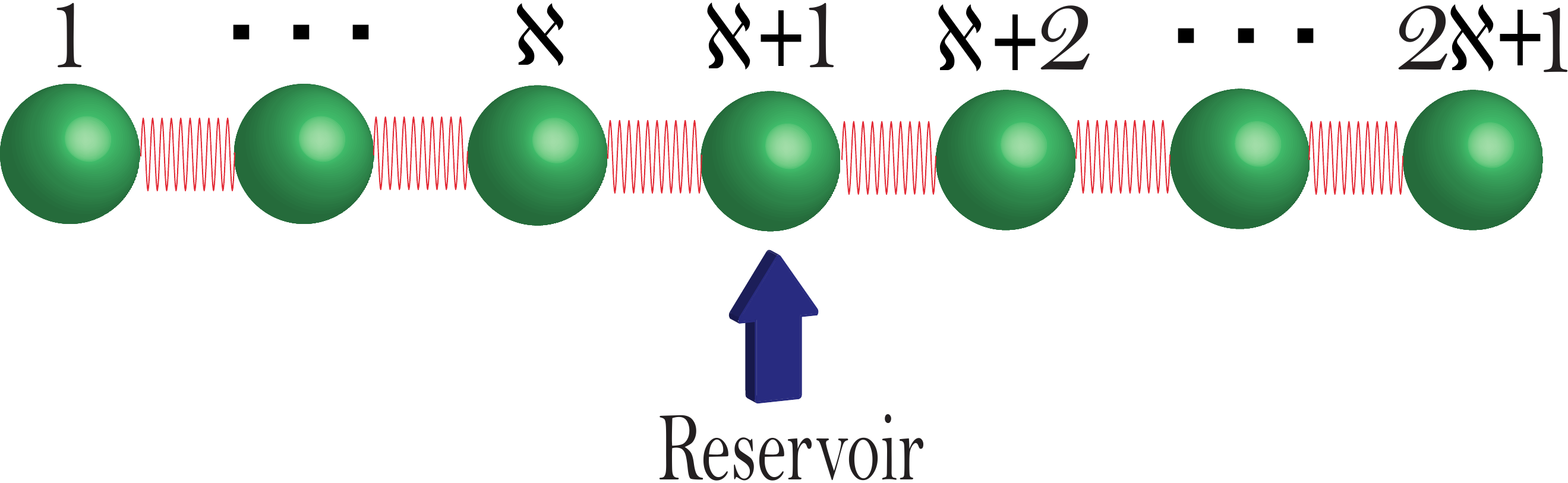}
\caption{
A chain consisting of $(2\aleph+1)$ quantum harmonic oscillators with nearest–-neighbour Hamiltonian interactions.  Only the central oscillator of the chain is coupled to the reservoir.}
\label{fig1}
\end{center}
\end{figure}
\end{rmk}

\begin{prop} \label{rmkvacuum}
The $(2\aleph+1)$-mode vacuum state, as a trivial pure Gaussian state, can be prepared by a quantum harmonic  oscillator chain subject to the two constraints~\ref{constraint1} and \ref{constraint2}.
\end{prop}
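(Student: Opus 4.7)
The plan is to invoke Lemma \ref{lemnaoki} with the Gaussian graph matrix of the vacuum state and show that the resulting free parameters $(R,\Gamma,P)$ can be chosen to fall within the restricted class of Hamiltonians and coupling operators defined by constraints \ref{constraint1} and \ref{constraint2}.

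First I would observe that the $(2\aleph+1)$-mode vacuum has covariance $V=\frac{1}{2}I_{2N}$ with $N=2\aleph+1$, hence $X=0_{N\times N}$, $Y=I_{N}$, and Gaussian graph matrix $Z=iI_{N}$. Substituting into the formulas of Lemma \ref{lemnaoki} yields
\begin{equation*}
G=\begin{bmatrix}R & \Gamma\\ -\Gamma & R\end{bmatrix},\qquad C=\bigl[-iP^{\top},\; P^{\top}\bigr],\qquad Q=-iR+\Gamma,
\end{equation*}
where $R=R^{\top}\in\mathbb{R}^{N\times N}$, $\Gamma=-\Gamma^{\top}\in\mathbb{R}^{N\times N}$, and $P\in\mathbb{C}^{N\times 1}$.

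Next I would match these forms to constraints \ref{constraint1} and \ref{constraint2}. Because the chain Hamiltonian of constraint \ref{constraint1} contains no $\hat{q}_{j}\hat{p}_{k}$ cross terms, the off-diagonal block of $G$ must vanish, forcing $\Gamma=0$. To reproduce the nearest-neighbour beam-splitter interactions in the $\hat{q}\hat{q}$ and $\hat{p}\hat{p}$ blocks, I take $R$ to be the symmetric tridiagonal matrix with diagonal entries $\omega_{1},\ldots,\omega_{2\aleph+1}$ and off-diagonal entries $g_{1},\ldots,g_{2\aleph}$. For constraint \ref{constraint2}, I set $P=p\,e_{\aleph+1}$ with $0\neq p\in\mathbb{C}$ and $e_{\aleph+1}$ the $(\aleph+1)$-th standard basis vector; the particular choice $p=i/\sqrt{2}$ reproduces the canonical cooling operator $\hat{L}=\hat{a}_{\aleph+1}$.

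The only nontrivial step remaining is the rank condition \eqref{rankconstraint}, which under the above choices reduces to controllability of the pair $(R,e_{\aleph+1})$. This is the main obstacle: for left-right symmetric parameter choices (for instance $\omega_{j}=0$, $g_{j}=1$), $R$ admits an eigenvector that is antisymmetric about the central site and hence vanishes there, so $e_{\aleph+1}$ is orthogonal to it and controllability fails. I would therefore exhibit an explicit asymmetric choice, such as $\omega_{j}=j$ for $j=1,\ldots,2\aleph+1$ and $g_{j}=1$ for $j=1,\ldots,2\aleph$, and verify by a short eigenvector calculation that every eigenvector of $R$ has nonzero $(\aleph+1)$-th component; alternatively, invoke a genericity argument, since the non-controllable locus is a proper algebraic subvariety of the parameter space $\mathbb{R}^{4\aleph+1}$ and controllable choices therefore form an open dense set. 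Once such parameters are fixed, Lemma \ref{lemnaoki} guarantees that the engineered master equation has the $(2\aleph+1)$-mode vacuum as its unique pure Gaussian steady state, proving the proposition.
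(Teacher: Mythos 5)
Your overall strategy is exactly the paper's: set $Z=iI_{2\aleph+1}$ so that $X=0$, $Y=I$, take $\Gamma=0$, take $R$ tridiagonal symmetric and $P$ supported on the $(\aleph+1)$th entry, and reduce everything to the rank condition~\eqref{rankconstraint}, which with these choices is controllability of $(R,e_{\aleph+1})$. You also correctly identify that this controllability check is the only nontrivial step and that left--right symmetric parameters fail because antisymmetric eigenvectors vanish at the central site. The gap is that you never actually discharge this step. Your concrete candidate $\omega_j=j$, $g_j=1$ is not verified: controllability of $(R,e_{\aleph+1})$ for an unreduced tridiagonal $R$ is equivalent to the two $\aleph\times\aleph$ corner blocks (rows/columns $1,\dots,\aleph$ and $\aleph+2,\dots,2\aleph+1$) having disjoint spectra, and for your choice the Gershgorin discs of these blocks overlap in $[\aleph,\aleph+2]$, so disjointness is not obvious and a ``short eigenvector calculation'' does not obviously close it. The fallback genericity argument is circular as stated: to conclude that the non-controllable locus is a \emph{proper} subvariety of the constrained parameter space you must exhibit one controllable point in that space, which is precisely what is being proved.

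The paper closes exactly this step by a deliberate choice of $R$: all $g_j=1$ and $\omega_j=-2$ for $j\le\aleph$, $\omega_{\aleph+1}=0$, $\omega_j=2$ for $j\ge\aleph+2$. Then the two corner blocks are tridiagonal Toeplitz matrices whose eigenvalues are known in closed form, $-2-2\cos\bigl(j\pi/(\aleph+1)\bigr)<0$ and $2-2\cos\bigl(j\pi/(\aleph+1)\bigr)>0$, so their spectra are separated by $0$; combined with controllability of each block from its boundary entry (a standard cascade of controllability lemmas, cited from~\cite{MWPY16:cdc}), this gives controllability of $(R,e_{\aleph+1})$ and hence the rank condition. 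If you replace your unverified example by such a spectrally separated choice (or prove disjointness of the corner-block spectra for your $\omega_j=j$ example), your proof is complete and coincides with the paper's.
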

\begin{proof}
We prove this result by construction. We choose  $P=\begin{bmatrix} 0_{1 \times \aleph } &i &0_{ 1 \times \aleph} \end{bmatrix}^{\top}$,  $\Gamma=0$, and 
$
R=\begin{bmatrix}
-2 &1  &       & &&&&           &0\\
1  &-2 &\ddots \\
           &\ddots    &\ddots  &1\\
          &          &1 &-2 &1 \\
&&&1 &0 &1  &       &\\ 
&&&&1& 2 &1  &       &\\
&&&&&1  &2 &\ddots \\
& &&&&          &\ddots    &\ddots  &1\\
0&&&&&          &          &1 &2 
\end{bmatrix}\in\mathbb{R}^{ (2\aleph+1)\times (2\aleph+1)}$ in~\eqref{G} and~\eqref{C}. We next show  that the resulting quantum system with Hamiltonian $\hat{H}=\frac{1}{2}\hat{x}^{\top}G\hat{x}$ and  coupling vector $\hat{L}=C\hat{x}$ satisfies the constraints~\ref{constraint1} and \ref{constraint2} and generates the $(2\aleph+1)$-mode vacuum state.   Recall that the Gaussian graph matrix $Z$ corresponding to the $(2\aleph+1)$-mode vacuum state is given by $Z=iI_{2\aleph+1}$. Therefore, we have $X=0$ and $Y=I_{2\aleph+1}$. We need to show that the rank constraint~\eqref{rankconstraint} holds with the chosen matrices $P$, $\Gamma$ and $R$. That is, we need to show $(Q,\;P)$ is controllable. Since $Q=-iR$, it suffices to  show $(R,\;P)$ is controllable.  Let $\mathcal{P}_{1}\triangleq\begin{bmatrix}
0_{1\times \aleph} &1 &0_{1\times \aleph}\\
I_{\aleph} &0_{\aleph\times 1} &0_{\aleph\times \aleph}\\
0_{\aleph\times \aleph} &0_{\aleph\times 1} &I_{\aleph}
\end{bmatrix}$. Then we have $\mathcal{P}_{1} P= \begin{bmatrix}i \\ 0_{2\aleph \times 1} \end{bmatrix}$ and $\mathcal{P}_{1} R \mathcal{P}_{1}^{\top}=
\begin{bmatrix}
0 &\breve{R}_{21}^{\top}  \\
\breve{R}_{21}     &\breve{R}_{22}   
\end{bmatrix}$,
  where  $\breve{R}_{21} =\begin{bmatrix}
0_{(\aleph-1) \times 1}\\1 \\   1\\ 0_{(\aleph-1)\times 1}
\end{bmatrix}$, $
\breve{R}_{22}=\left[ {\begin{array}{*{20}c}
\breve{R}_{22,1} &\vline &0_{\aleph\times \aleph}   \\
\hline
0_{\aleph\times \aleph} &\vline & \breve{R}_{22,2}
\end{array} } \right]$, $\breve{R}_{22,1}=\begin{bmatrix}
-2 &1  &       &0\\
1  &-2  &\ddots \\
           &\ddots    &\ddots  &1\\
 0         &          &1 &-2
\end{bmatrix}$ and $\breve{R}_{22,2}=\begin{bmatrix}
2 &1  &       &0\\
1  &2  &\ddots \\
           &\ddots    &\ddots  &1\\
 0         &          &1 &2 
\end{bmatrix}$. Using Lemma~4 in~\cite{MWPY16:cdc}, we only need to show that $\left(\mathcal{P}_{1} R \mathcal{P}_{1}^{\top}, \mathcal{P}_{1} P\right)$ is controllable. Since $\mathcal{P}_{1} P= \begin{bmatrix}i \\ 0_{2\aleph \times 1} \end{bmatrix}$  and $\mathcal{P}_{1} R \mathcal{P}_{1}^{\top}=
\begin{bmatrix}
0 &\breve{R}_{21}^{\top}  \\
\breve{R}_{21}     &\breve{R}_{22}   
\end{bmatrix}$, according to Lemma~5 in~\cite{MWPY16:cdc}, it suffices to show $(\breve{R}_{22},\; \breve{R}_{21})$ is controllable. According to Lemma~6 in~\cite{MWPY16:cdc}, it suffices to show  that $(\breve{R}_{22,1},\; \begin{bmatrix}
0_{(\aleph-1) \times 1}\\1 
\end{bmatrix}$) and $(\breve{R}_{22,2},\; \begin{bmatrix}
1 \\ 0_{(\aleph-1) \times 1} 
\end{bmatrix}$) are both controllable and that $\breve{R}_{22,1}$ and $\breve{R}_{22,2}$ have no common eigenvalues. Applying Lemma~5 in~\cite{MWPY16:cdc} recursively,  we can easily establish that $(\breve{R}_{22,2},\; \begin{bmatrix}
1 \\ 0_{(\aleph-1) \times 1} 
\end{bmatrix}$) is controllable. Using a similar method, it can be established that $(\breve{R}_{22,1},\; \begin{bmatrix}
0_{(\aleph-1) \times 1}\\1 
\end{bmatrix}$) is controllable. Next we show that $\breve{R}_{22,1}$ and $\breve{R}_{22,2}$ have no common eigenvalues.  It follows from Theorem~2.2 in~\cite{KST99:LAA} that the eigenvalues of $\breve{R}_{22,1}$ are $\lambda_{1}=-2-2\cos(j\pi/(\aleph+1))$, $j=1,2,\cdots,\aleph$, and the eigenvalues of $\breve{R}_{22,2}$ are $\lambda_{2}=2-2\cos(j\pi/(\aleph+1))$, $j=1,2,\cdots,\aleph$. Since $\lambda_{1}<0<\lambda_{2}$, it follows that $\breve{R}_{22,1}$ and $\breve{R}_{22,2}$ have no common eigenvalues. Combining the results above, we conclude that the rank constraint~\eqref{rankconstraint} holds. Therefore, the resulting linear quantum system is strictly stable and generates the $(2\aleph+1)$-mode vacuum state.   Substituting the matrices $P$, $\Gamma$ and $R$ into~\eqref{G} and~\eqref{C}, we obtain the system Hamiltonian $
\hat{H}=\frac{1}{2}\hat{x}^{\top}G\hat{x}=-\sum\limits_{j=1}^{\aleph}\left(\hat{q}_{j}^{2}  + \hat{p}_{j}^{2} \right)+\sum\limits_{j=\aleph+2}^{2\aleph+1}\left(\hat{q}_{j}^{2}  + \hat{p}_{j}^{2} \right)+\sum\limits_{j=1}^{2\aleph}\left(\hat{q}_{j}\hat{q}_{j+1}+\hat{p}_{j}\hat{p}_{j+1} \right)
$, which satisfies the constraint~\ref{constraint1}. 
The coupling vector is given by 
$\hat{L}=C\hat{x}=\hat{q}_{\aleph+1}+i\hat{p}_{\aleph+1}=\sqrt{2}\hat{a}_{\aleph+1}$, 
where $\hat{a}_{\aleph+1}=(\hat{q}_{\aleph+1}+i\hat{p}_{\aleph+1})/\sqrt{2}$ is the annihilation operator for the $(\aleph+1)$th mode. It can be seen that the coupling vector $\hat{L}$ obtained here satisfies the constraint~\ref{constraint2}. Thus, we obtain a desired quantum system that satisfies the constraints~\ref{constraint1} and~\ref{constraint2}, and also generates the $(2\aleph+1)$-mode vacuum state.
\end{proof}

\section{Parametrization} \label{parametrization}
In Proposition~\ref{rmkvacuum}, we have shown that the $(2\aleph+1)$-mode vacuum state can be prepared by a quantum harmonic  oscillator chain subject to the two constraints~\ref{constraint1} and \ref{constraint2}. It is natural to ask if  there exist other pure Gaussian states that can be prepared by quantum harmonic  oscillator chains subject to~\ref{constraint1} and \ref{constraint2}.  The aim of this section is to develop an answer to this question. We will provide a full parametrization of such pure Gaussian states. The main result is given in Theorem~\ref{thm1}. Before providing it, we need several preliminary results. 

\begin{definition}[\cite{HJ12:book}]
A matrix $T$ is called tridiagonal if $T_{jk}=0$ whenever $|j-k|>1$.  
\end{definition}

\begin{definition}[\cite{P98:book}]
A symmetric tridiagonal matrix $T=\begin{bmatrix}
\alpha_{1} &\beta_{1}  &       &0\\
\beta_{1}  &\alpha_{2} &\ddots \\
           &\ddots    &\ddots  &\beta_{n-1}\\
 0         &          &\beta_{n-1} &\alpha_{n} 
\end{bmatrix}$ is said to be unreduced if $\beta_{j}\ne 0$, $j=1,2,\cdots,n-1$.  
\end{definition}

\begin{definition}[\cite{HJ12:book}]
A \emph{Jacobi matrix} is a real symmetric tridiagonal matrix with positive subdiagonal entries. 
\end{definition}

\begin{exam}
$
T_{+}=\begin{bmatrix}
\alpha_{1} &\beta_{1}  &       &0\\
\beta_{1}  &\alpha_{2} &\ddots \\
           &\ddots    &\ddots  &\beta_{n-1}\\
 0         &          &\beta_{n-1} &\alpha_{n} 
\end{bmatrix}$ is a Jacobi matrix if $\alpha_{j}\in\mathbb{R}$, $j=1,2,\cdots,n$ and $\beta_{j}\in\mathbb{R}$, $\beta_{j} >0$, $j=1,2,\cdots,n-1$. 
\end{exam}

\begin{lem}[\cite{P98:book}]\label{jacobiunique} 
Suppose $T_{+}=\mathcal{Q}_{+}^{\top}D\mathcal{Q}_{+}$, where $T_{+}$ is a Jacobi matrix, $\mathcal{Q}_{+}=\begin{bmatrix}
\mathfrak{q}_{1} &\mathfrak{q}_{2} &\cdots &\mathfrak{q}_{n}
\end{bmatrix}$ is a real orthogonal matrix and $D$ is a real diagonal matrix. Then $T_{+}$ and $\mathcal{Q}_{+}$ are  uniquely determined by $D$ and $\mathfrak{q}_{1}$ or by $D$ and $\mathfrak{q}_{n}$. 
\end{lem}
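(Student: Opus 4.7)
The plan is to exploit the three-term recurrence hidden inside the decomposition. Since $\mathcal{Q}_{+}$ is orthogonal, $T_{+}=\mathcal{Q}_{+}^{\top}D\mathcal{Q}_{+}$ is equivalent to $\mathcal{Q}_{+}T_{+}=D\mathcal{Q}_{+}$. Reading this column by column, and using that the $k$-th column of the tridiagonal matrix $T_{+}$ has at most three nonzero entries, I would obtain, with the conventions $\mathfrak{q}_{0}=\mathfrak{q}_{n+1}=0$ and $\beta_{0}=\beta_{n}=0$,
\[
\beta_{k}\mathfrak{q}_{k+1} \;=\; D\mathfrak{q}_{k}-\alpha_{k}\mathfrak{q}_{k}-\beta_{k-1}\mathfrak{q}_{k-1},\qquad k=1,2,\ldots,n.
\]
This is the classical Lanczos-type recurrence, and it will be the engine driving the uniqueness argument.

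For the base step at $k=1$, the recurrence reduces to $\beta_{1}\mathfrak{q}_{2}=(D-\alpha_{1}I)\mathfrak{q}_{1}$. Taking the inner product with the unit vector $\mathfrak{q}_{1}$ and using $\mathfrak{q}_{1}^{\top}\mathfrak{q}_{2}=0$ forces $\alpha_{1}=\mathfrak{q}_{1}^{\top}D\mathfrak{q}_{1}$, which is manifestly determined by $(D,\mathfrak{q}_{1})$. Since $\|\mathfrak{q}_{2}\|=1$ and the Jacobi hypothesis requires $\beta_{1}>0$, we then obtain $\beta_{1}=\|(D-\alpha_{1}I)\mathfrak{q}_{1}\|$ (the positive square root) and $\mathfrak{q}_{2}=\beta_{1}^{-1}(D-\alpha_{1}I)\mathfrak{q}_{1}$. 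Inductively, assuming $\mathfrak{q}_{1},\ldots,\mathfrak{q}_{k}$ and $\alpha_{1},\ldots,\alpha_{k-1},\beta_{1},\ldots,\beta_{k-1}$ have already been fixed by $(D,\mathfrak{q}_{1})$, the recurrence yields $\alpha_{k}=\mathfrak{q}_{k}^{\top}D\mathfrak{q}_{k}$ (by inner product with $\mathfrak{q}_{k}$), then $\beta_{k}=\|D\mathfrak{q}_{k}-\alpha_{k}\mathfrak{q}_{k}-\beta_{k-1}\mathfrak{q}_{k-1}\|$ (the unique positive value compatible with $\|\mathfrak{q}_{k+1}\|=1$), and finally $\mathfrak{q}_{k+1}$ by the obvious normalization. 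At $k=n$ the right-hand side must vanish (since $\beta_{n}=0$), so only $\alpha_{n}=\mathfrak{q}_{n}^{\top}D\mathfrak{q}_{n}$ remains to be read off; this closes the induction and shows that $(D,\mathfrak{q}_{1})$ uniquely determines every $\alpha_{k}$, $\beta_{k}$, and $\mathfrak{q}_{k}$.

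For the symmetric statement, I would simply conjugate by the reversal matrix $J=[e_{n},e_{n-1},\ldots,e_{1}]$: $JT_{+}J$ is again a Jacobi matrix (its subdiagonals are $\beta_{n-1},\ldots,\beta_{1}$, still positive), and the identity $JT_{+}J=(\mathcal{Q}_{+}J)^{\top}D(\mathcal{Q}_{+}J)$ displays it in the required orthogonal-diagonalization form whose first column is now $\mathfrak{q}_{n}$. Applying the already-proved half of the lemma to this flipped system gives uniqueness from $(D,\mathfrak{q}_{n})$, after which $T_{+}=J(JT_{+}J)J$ and $\mathcal{Q}_{+}=(\mathcal{Q}_{+}J)J$ recover the original objects. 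The only potential obstacle is that the recurrence might stall — that some intermediate $\beta_{k}$ could vanish, leaving $\mathfrak{q}_{k+1}$ ill-defined by the normalization — but the positive-subdiagonal clause built into the definition of a Jacobi matrix rules this out by hypothesis, so no further analysis is required.
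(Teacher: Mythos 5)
Your proof is correct and is exactly the argument the paper relies on: the three-term Lanczos recurrence you derive column-by-column is precisely what the paper presents as Algorithms~1 and~2 immediately after the lemma (which it cites from Parlett rather than reproving), and your reversal-matrix reduction for the $\mathfrak{q}_{n}$ case matches the construction the paper uses in its proof of Lemma~4. No gaps; the positivity of the subdiagonal in the Jacobi hypothesis is correctly identified as the reason the recurrence determines each $\beta_{k}$ and $\mathfrak{q}_{k+1}$ uniquely.
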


Suppose $T_{+}=\mathcal{Q}_{+}^{\top}D\mathcal{Q}_{+}$. Then given $D$ and $\mathfrak{q}_{1}$, we can use the following iterative algorithm to solve for $T_{+}$ and $\mathcal{Q}_{+}$~\cite[Chapter 7]{P98:book}. 

\RestyleAlgo{boxruled}
\begin{algorithm}[H]$\;$\\
  \caption{Given $D\in \mathbb{R}^{n \times n}$ and $\mathfrak{q}_{1}\in \mathbb{R}^{n \times 1}$, solve for $T_{+}$ and $\mathcal{Q}_{+}$.\label{algo1}}
\textbf{Initialize}: Define $\beta_{0}=0$ and $\mathfrak{q}_{0}=0_{n\times 1}$. Set $j=1$.\\
\textbf{repeat}\\
1. Compute $\alpha_{j}=\mathfrak{q}_{j}^{\top}D\mathfrak{q}_{j}$.  \\
2. Stopping criterion. \textbf{Quit} if $j=n$. \\
3. Compute $\mathfrak{r}_{j}\triangleq D\mathfrak{q}_{j}-\mathfrak{q}_{j}\alpha_{j}-\mathfrak{q}_{j-1}\beta_{j-1}$. \\
4. Compute $\beta_{j}=\norm{\mathfrak{r}_{j}}$.\\
5. Compute $\mathfrak{q}_{j+1}= \mathfrak{r}_{j}/\beta_{j}$.\\
6. Update $j:=j+1$. 
\end{algorithm}

For convenience, we introduce the following notation. 
\begin{itemize}
\item $\textbf{Alg1}_{T_{+}}(D, \mathfrak{q}_{1})$: $\;$ the Jacobi matrix $T_{+}$ obtained from Algorithm~\ref{algo1} for  given  $D$ and $\mathfrak{q}_{1}$.
\item $\textbf{Alg1}_{\mathcal{Q}_{+}}(D, \mathfrak{q}_{1})$: $\;$  the real orthogonal matrix $\mathcal{Q}_{+}$  obtained from Algorithm~\ref{algo1} for  given  $D$ and  $\mathfrak{q}_{1}$.
\end{itemize}

Suppose $T_{+}=\mathcal{Q}_{+}^{\top}D\mathcal{Q}_{+}$. Then  given $D$ and $\mathfrak{q}_{n}$, we can use the following  iterative algorithm to solve for $
T_{+} $ and $\mathcal{Q}_{+} $~\cite[Chapter 7]{P98:book}. 

\RestyleAlgo{boxruled}
\begin{algorithm}[H]$\;$\\
  \caption{ Given $D\in \mathbb{R}^{n \times n}$ and $\mathfrak{q}_{n}\in \mathbb{R}^{n \times 1}$, solve for $T_{+}$ and $\mathcal{Q}_{+}$.\label{algo2}} 
\textbf{Initialize}: Define $\beta_{n}=0$ and $\mathfrak{q}_{n+1}=0_{n\times 1}$. Set $j=n$.\\
\textbf{repeat}\\
1. Compute $\alpha_{j}=\mathfrak{q}_{j}^{\top}D\mathfrak{q}_{j}$. \\
2. Stopping criterion. \textbf{Quit} if $j=1$.\\
3. Compute $\mathfrak{r}_{j-1}\triangleq D\mathfrak{q}_{j}-\mathfrak{q}_{j}\alpha_{j}-\mathfrak{q}_{j+1}\beta_{j}$. \\
4. Compute $\beta_{j-1}=\norm{\mathfrak{r}_{j-1}}$.\\
5. Compute $\mathfrak{q}_{j-1}= \mathfrak{r}_{j-1}/\beta_{j-1}$.\\
6. Update $j:=j-1$. 
\end{algorithm}

For convenience, we introduce the following notation.
\begin{itemize}
\item $\textbf{Alg2}_{T_{+}}(D, \mathfrak{q}_{n})$: $\;$  the Jacobi matrix $T_{+}$ obtained from Algorithm~\ref{algo2} for given $D$ and $\mathfrak{q}_{n}$. 
\item $\textbf{Alg2}_{\mathcal{Q}_{+}}(D, \mathfrak{q}_{n})$: $\;$ the real orthogonal matrix $\mathcal{Q}_{+}$  obtained from Algorithm~\ref{algo2} for given  $D$ and $\mathfrak{q}_{n}$.
\end{itemize}

\begin{rmk}
Algorithm~\ref{algo1} and Algorithm~\ref{algo2} are referred to as \emph{Lanczos algorithms}~\cite{GV96:book}. Note that Algorithm~\ref{algo1} and Algorithm~\ref{algo2} work well under the conditions described in Lemma~\ref{jacobiunique}. However, if we feed an arbitrary real diagonal matrix $D$ and an arbitrary real  unit vector  $\mathfrak{q}_{1}$ into Algorithm~\ref{algo1}, the algorithm may  fail to find a  Jacobi matrix $T_{+}$ and a real orthogonal matrix $\mathcal{Q}_{+}$. 
For example, if $D=I_{n}$ and $\mathfrak{q}_{1}=\begin{bmatrix}1 &0_{1\times(n-1)}\end{bmatrix}^{\top}$, then Algorithm~\ref{algo1} will terminate at the first step since $\beta_{1}=0$. Hence there does not exist a Jacobi matrix $
T_{+} $ and  a real orthogonal matrix $\mathcal{Q}_{+}=\begin{bmatrix}
\mathfrak{q}_{1} &\mathfrak{q}_{2} &\cdots &\mathfrak{q}_{n}
\end{bmatrix}$ such that $T_{+}=\mathcal{Q}_{+}^{\top}D\mathcal{Q}_{+}$ in this case. A similar situation can occur for Algorithm~\ref{algo2}. 
\end{rmk}

To ensure that  Algorithm~\ref{algo1} works, we have the following result. 
\begin{lem}[\cite{GV96:book}]\label{uniqueq1}
Suppose $D\in \mathbb{R}^{n \times n}$ is a real diagonal matrix and $\mathfrak{q}_{1}\in \mathbb{R}^{n \times 1}$ is a real  unit vector. If 
\begin{align*}
\rank\left(\begin{bmatrix}
\mathfrak{q}_{1} &D\mathfrak{q}_{1} &\cdots &D^{n-1}\mathfrak{q}_{1}
\end{bmatrix}\right)=n, 
\end{align*}
then $D$ and $\mathfrak{q}_{1}$  uniquely determine a Jacobi matrix $T_{+}$ and a real orthogonal matrix $\mathcal{Q}_{+}=\begin{bmatrix}
\mathfrak{q}_{1} &\mathfrak{q}_{2} &\cdots &\mathfrak{q}_{n}
\end{bmatrix}$, such that $T_{+}=\mathcal{Q}_{+}^{\top}D\mathcal{Q}_{+}$. In addition,  $T_{+}$ and $\mathcal{Q}_{+}$ can be obtained from Algorithm~\ref{algo1}. 
\end{lem}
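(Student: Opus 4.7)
The plan is to show that Algorithm~\ref{algo1}, when fed $(D,\mathfrak{q}_{1})$ satisfying the rank hypothesis, runs to completion without early termination, produces an orthonormal set $\mathfrak{q}_{1},\ldots,\mathfrak{q}_{n}$ together with strictly positive subdiagonal entries $\beta_{1},\ldots,\beta_{n-1}$, and hence delivers a Jacobi matrix $T_{+}$ and a real orthogonal matrix $\mathcal{Q}_{+}$ with $T_{+}=\mathcal{Q}_{+}^{\top}D\mathcal{Q}_{+}$. Uniqueness will then be a direct invocation of Lemma~\ref{jacobiunique}.

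First I would set up a simultaneous induction on $j$ carrying the hypotheses (a) $\mathfrak{q}_{1},\ldots,\mathfrak{q}_{j}$ are orthonormal, (b) $\beta_{1},\ldots,\beta_{j-1}>0$, and (c) $\mathrm{span}\{\mathfrak{q}_{1},\ldots,\mathfrak{q}_{j}\}=\mathrm{span}\{\mathfrak{q}_{1},D\mathfrak{q}_{1},\ldots,D^{j-1}\mathfrak{q}_{1}\}$. For the inductive step, the key computation is that $\mathfrak{r}_{j}=D\mathfrak{q}_{j}-\alpha_{j}\mathfrak{q}_{j}-\beta_{j-1}\mathfrak{q}_{j-1}$ is automatically orthogonal to $\mathfrak{q}_{j}$ (by the choice $\alpha_{j}=\mathfrak{q}_{j}^{\top}D\mathfrak{q}_{j}$) and to $\mathfrak{q}_{j-1}$ (by the choice $\beta_{j-1}=\|\mathfrak{r}_{j-1}\|$ together with symmetry of $D$); orthogonality to $\mathfrak{q}_{k}$ for $k\le j-2$ follows because $D\mathfrak{q}_{k}\in\mathrm{span}\{\mathfrak{q}_{1},\ldots,\mathfrak{q}_{k+1}\}\subseteq\mathrm{span}\{\mathfrak{q}_{1},\ldots,\mathfrak{q}_{j-1}\}$ by (c) and the tridiagonal structure already established, so $\mathfrak{q}_{k}^{\top}D\mathfrak{q}_{j}=(D\mathfrak{q}_{k})^{\top}\mathfrak{q}_{j}=0$.

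The central obstacle — and the only place the rank hypothesis is really used — is showing $\beta_{j}>0$ for every $j<n$, so that the division in step~5 is legitimate. Suppose for contradiction that $\beta_{j}=0$ for some $j<n$. Then $D\mathfrak{q}_{j}=\alpha_{j}\mathfrak{q}_{j}+\beta_{j-1}\mathfrak{q}_{j-1}\in\mathrm{span}\{\mathfrak{q}_{1},\ldots,\mathfrak{q}_{j}\}$, and combined with the tridiagonal relations for earlier indices this makes $\mathrm{span}\{\mathfrak{q}_{1},\ldots,\mathfrak{q}_{j}\}$ invariant under $D$. Since $\mathfrak{q}_{1}$ lies in this $j$-dimensional invariant subspace, so does the whole Krylov sequence $\{\mathfrak{q}_{1},D\mathfrak{q}_{1},\ldots,D^{n-1}\mathfrak{q}_{1}\}$, forcing the Krylov matrix to have rank at most $j<n$, which contradicts the hypothesis.

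Once the induction terminates at $j=n$, the relations $D\mathfrak{q}_{j}=\beta_{j-1}\mathfrak{q}_{j-1}+\alpha_{j}\mathfrak{q}_{j}+\beta_{j}\mathfrak{q}_{j+1}$ (with $\beta_{0}=\beta_{n}=0$) read columnwise as $D\mathcal{Q}_{+}=\mathcal{Q}_{+}T_{+}$, where $T_{+}$ is the real symmetric tridiagonal matrix with diagonal $\alpha_{j}$ and subdiagonal $\beta_{j}>0$; thus $T_{+}$ is a Jacobi matrix and $T_{+}=\mathcal{Q}_{+}^{\top}D\mathcal{Q}_{+}$. Uniqueness of the pair $(T_{+},\mathcal{Q}_{+})$ given $(D,\mathfrak{q}_{1})$ is then exactly the content of Lemma~\ref{jacobiunique}, completing the proof. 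The bookkeeping for the three-term recurrence orthogonality is the only delicate part; everything else is a clean consequence of the Krylov/invariant-subspace dichotomy.
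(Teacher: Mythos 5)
Your argument is correct, and it is worth noting that the paper itself offers no proof of this lemma to compare against: it is imported verbatim from Golub and Van Loan \cite{GV96:book}, and the paper only proves the mirror statement (Lemma~\ref{uniqueqn}) by reducing it to this one via the reversal permutation. What you have written is essentially the standard textbook justification of the Lanczos process: the three-term recurrence gives orthogonality of $\mathfrak{r}_{j}$ to $\mathfrak{q}_{j}$ by the choice of $\alpha_{j}$, to $\mathfrak{q}_{j-1}$ by the identity $\mathfrak{q}_{j-1}^{\top}D\mathfrak{q}_{j}=(D\mathfrak{q}_{j-1})^{\top}\mathfrak{q}_{j}=\beta_{j-1}$, and to all earlier $\mathfrak{q}_{k}$ by symmetry of $D$ together with the already-established tridiagonal relations; and the Krylov rank hypothesis is used exactly once, to rule out $\beta_{j}=0$ for $j<n$ via the invariant-subspace dichotomy. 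Both of these are right, and the appeal to Lemma~\ref{jacobiunique} for uniqueness is legitimate since that lemma asserts uniqueness of any factorization $T_{+}=\mathcal{Q}_{+}^{\top}D\mathcal{Q}_{+}$ with $T_{+}$ Jacobi once $D$ and the first column of $\mathcal{Q}_{+}$ are fixed. The one loose end is the terminal column: the relation $D\mathfrak{q}_{n}=\beta_{n-1}\mathfrak{q}_{n-1}+\alpha_{n}\mathfrak{q}_{n}$, which you encode by declaring $\beta_{n}=0$, is not produced by the algorithm (it quits at $j=n$) and needs the one-line observation that the residual $D\mathfrak{q}_{n}-\alpha_{n}\mathfrak{q}_{n}-\beta_{n-1}\mathfrak{q}_{n-1}$ is orthogonal to all of $\mathfrak{q}_{1},\ldots,\mathfrak{q}_{n}$, which by then form a complete orthonormal basis of $\mathbb{R}^{n}$, and hence vanishes. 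With that sentence added the proof is complete.
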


To ensure that  Algorithm~\ref{algo2} works, we have a similar result. 
\begin{lem}\label{uniqueqn}
Suppose $D\in \mathbb{R}^{n \times n}$ is a real diagonal matrix and $\mathfrak{q}_{n}\in \mathbb{R}^{n \times 1}$ is a real  unit vector. If 
\begin{align*}
\rank\left(\begin{bmatrix}
\mathfrak{q}_{n} &D\mathfrak{q}_{n} &\cdots &D^{n-1}\mathfrak{q}_{n}
\end{bmatrix}\right)=n, 
\end{align*}
then $D$ and $\mathfrak{q}_{n}$  uniquely determine a Jacobi matrix $T_{+}$ and a real orthogonal matrix $\mathcal{Q}_{+}=\begin{bmatrix}
\mathfrak{q}_{1} &\mathfrak{q}_{2} &\cdots &\mathfrak{q}_{n}
\end{bmatrix}$, such that $T_{+}=\mathcal{Q}_{+}^{\top}D\mathcal{Q}_{+}$.   In addition,  $T_{+}$ and $\mathcal{Q}_{+}$ can be obtained from Algorithm~\ref{algo2}. 
\end{lem}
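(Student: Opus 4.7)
The plan is to reduce Lemma~\ref{uniqueqn} to Lemma~\ref{uniqueq1} by exploiting a reversal symmetry. Introduce the $n\times n$ exchange (flip) matrix $J$ whose $(j,k)$ entry is $\delta_{j,n+1-k}$; note that $J=J^{\top}=J^{-1}$, and that conjugation $T\mapsto JTJ$ simply reverses the row and column ordering of $T$. The key observation is that if $T_{+}$ is a Jacobi matrix in the sense of the paper (symmetric tridiagonal, positive subdiagonals), then $JT_{+}J$ is again symmetric tridiagonal with the same (reordered) positive subdiagonal entries, hence also a Jacobi matrix.

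First I would set up the bijection between data. Given a candidate decomposition $T_{+}=\mathcal{Q}_{+}^{\top}D\mathcal{Q}_{+}$ with $\mathcal{Q}_{+}=[\mathfrak{q}_{1}\;\cdots\;\mathfrak{q}_{n}]$ whose last column is the prescribed $\mathfrak{q}_{n}$, define $\widetilde{T}_{+}\triangleq JT_{+}J$ and $\widetilde{\mathcal{Q}}_{+}\triangleq\mathcal{Q}_{+}J=[\mathfrak{q}_{n}\;\mathfrak{q}_{n-1}\;\cdots\;\mathfrak{q}_{1}]$. Then $\widetilde{\mathcal{Q}}_{+}$ is orthogonal, its \emph{first} column equals the given $\mathfrak{q}_{n}$, and $\widetilde{T}_{+}=\widetilde{\mathcal{Q}}_{+}^{\top}D\widetilde{\mathcal{Q}}_{+}$. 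Since this map is an involution between decompositions ending in $\mathfrak{q}_{n}$ and decompositions beginning with $\mathfrak{q}_{n}$, existence and uniqueness in the two settings are equivalent. Applying Lemma~\ref{uniqueq1} to the diagonal matrix $D$ and the unit vector $\mathfrak{q}_{n}$ (whose Krylov matrix has full rank by hypothesis) yields a unique Jacobi $\widetilde{T}_{+}$ and orthogonal $\widetilde{\mathcal{Q}}_{+}$, and hence a unique $T_{+}=J\widetilde{T}_{+}J$ and $\mathcal{Q}_{+}=\widetilde{\mathcal{Q}}_{+}J$ for the backward problem.

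Next I would verify that Algorithm~\ref{algo2} on input $(D,\mathfrak{q}_{n})$ produces exactly this decomposition by showing it is, step for step, Algorithm~\ref{algo1} on input $(D,\mathfrak{q}_{n})$ read in reverse index order. Setting $\widetilde{\mathfrak{q}}_{k}\triangleq\mathfrak{q}_{n+1-k}$, $\widetilde{\alpha}_{k}\triangleq\alpha_{n+1-k}$, $\widetilde{\beta}_{k}\triangleq\beta_{n-k}$, the update rule $\mathfrak{r}_{j-1}=D\mathfrak{q}_{j}-\mathfrak{q}_{j}\alpha_{j}-\mathfrak{q}_{j+1}\beta_{j}$ in Algorithm~\ref{algo2} becomes $\widetilde{\mathfrak{r}}_{k}=D\widetilde{\mathfrak{q}}_{k}-\widetilde{\mathfrak{q}}_{k}\widetilde{\alpha}_{k}-\widetilde{\mathfrak{q}}_{k-1}\widetilde{\beta}_{k-1}$ with $k=n+1-j$, which is precisely the recurrence of Algorithm~\ref{algo1}, together with the boundary conventions $\widetilde{\mathfrak{q}}_{0}=0$, $\widetilde{\beta}_{0}=0$ inherited from $\mathfrak{q}_{n+1}=0$, $\beta_{n}=0$. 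Thus Algorithm~\ref{algo2} terminates successfully and outputs the unique $(T_{+},\mathcal{Q}_{+})$ identified above.

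The only mildly delicate point is verifying that the subdiagonal entries produced really are strictly positive (so that the output qualifies as a Jacobi matrix and no division by zero occurs in step~4); this is where the Krylov rank hypothesis is used, and it is inherited directly from the corresponding statement for Algorithm~\ref{algo1} via the identification above. Everything else is bookkeeping about the index reversal, so I do not anticipate substantive obstacles beyond carefully matching the two index conventions.
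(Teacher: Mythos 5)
Your proof is correct and follows essentially the same route as the paper's: both reduce the backward problem to Lemma~\ref{uniqueq1} by conjugating with the exchange (flip) permutation matrix (the paper's $\mathcal{P}_{T}$, your $J$), which preserves the Jacobi property and exchanges the roles of first and last columns of $\mathcal{Q}_{+}$, with uniqueness then following from Lemma~\ref{jacobiunique}. Your explicit verification that Algorithm~\ref{algo2} is Algorithm~\ref{algo1} under index reversal is a detail the paper leaves implicit, but it does not change the argument.
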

\begin{proof}
Because $\rank\left(\begin{bmatrix}
\mathfrak{q}_{n} &D\mathfrak{q}_{n} &\cdots &D^{n-1}\mathfrak{q}_{n}
\end{bmatrix}\right)=n$, it follows from Lemma~\ref{uniqueq1} that $D$ and $\mathfrak{q}_{n}$  uniquely determine a Jacobi matrix $\tilde{T}_{+}$ and a real orthogonal matrix $\tilde{\mathcal{Q}}_{+}=\begin{bmatrix}
\mathfrak{q}_{n} & \mathfrak{q} _{n-1} &\cdots & \mathfrak{q} _{1}
\end{bmatrix}$, such that $\tilde{T}_{+}=\tilde{\mathcal{Q}}_{+}^{\top}D\tilde{\mathcal{Q}}_{+}$. Let $\mathcal{P}_{T}=\begin{bmatrix}
0&&&1\\
&&1\\
&\adots\\
1 &&&0
\end{bmatrix}$. Then we have  $\mathcal{P}_{T}\tilde{T}_{+}\mathcal{P}_{T}=\mathcal{P}_{T} \tilde{\mathcal{Q}}_{+}^{\top}D\tilde{\mathcal{Q}}_{+} \mathcal{P}_{T}$. Let  $T_{+}=\mathcal{P}_{T}\tilde{T}_{+}\mathcal{P}_{T}$ and $\mathcal{Q}_{+}=\tilde{\mathcal{Q}}_{+} \mathcal{P}_{T}$. We have $T_{+}=\mathcal{Q}_{+}^{\top}D\mathcal{Q}_{+}$.  It is straightforward to show that $T_{+}$
  is a Jacobi matrix and that $\mathcal{Q}_{+}$ is a real orthogonal matrix with the last column being $\mathfrak{q}_{n}$. The uniqueness of $T_{+}$ and $\mathcal{Q}_{+}$  follows immediately from Lemma~\ref{jacobiunique}. Thus,  $T_{+}$ and $\mathcal{Q}_{+}$ can be obtained from Algorithm~\ref{algo2}.   
\end{proof}

Next we provide our main result which parametrizes the class of pure Gaussian states that can be prepared by quantum harmonic  oscillator chains subject to the constraints~\ref{constraint1} and~\ref{constraint2}.
\begin{thm} \label{thm1}
A $(2\aleph+1)$-mode pure Gaussian state can be prepared by a quantum harmonic  oscillator chain subject to the  constraints~\ref{constraint1} and~\ref{constraint2} if and only if its Gaussian graph matrix can be written as 
\begin{align}
Z=\mathcal{P}_{1}^{\top}\begin{bmatrix}
\bar{z} &0_{1\times 2\aleph}\\
0_{2\aleph\times 1} & \mathcal{Q}^{\top}\mathcal{P}_{2}^{\top} \bar{Z}  \mathcal{P}_{2} \mathcal{Q}
\end{bmatrix}\mathcal{P}_{1},\;\; \bar{Z}=\begin{bmatrix}\tilde{Z}_{1} &&0\\ &\ddots\\0 &&\tilde{Z}_{ \aleph }\end{bmatrix}, \label{thmZ}
\end{align}
where  $ \bar{z}\in \Lambda $, $\tilde{Z}_{j}\in\left\{\begin{bmatrix}\frac{\bar{z}^{2}-1}{2\bar{z}} &\frac{\bar{z}^{2}+1}{2\bar{z}}\\ \frac{\bar{z}^{2}+1}{2\bar{z}} &\frac{\bar{z}^{2}-1}{2\bar{z}} \end{bmatrix},\; \;\begin{bmatrix}\frac{\bar{z}^{2}-1}{2\bar{z}} &-\frac{\bar{z}^{2}+1}{2\bar{z}}\\ -\frac{\bar{z}^{2}+1}{2\bar{z}} &\frac{\bar{z}^{2}-1}{2\bar{z}} \end{bmatrix} \right\}$, $j=1,2,\cdots, \aleph$, $\mathcal{P}_{1}=\begin{bmatrix}
0_{1\times \aleph} &1 &0_{1\times \aleph}\\
I_{\aleph} &0_{\aleph\times 1} &0_{\aleph\times \aleph}\\
0_{\aleph\times \aleph} &0_{\aleph\times 1} &I_{\aleph}
\end{bmatrix}$, $\mathcal{P}_{2}$ is a $2\aleph\times 2\aleph$ permutation matrix, $\mathcal{Q}=\begin{bmatrix}
\mathcal{Q}_{11} &0_{\aleph\times \aleph}\\
0_{\aleph\times \aleph} &\mathcal{Q}_{22} 
\end{bmatrix}$ is a $2\aleph\times 2\aleph$ real orthogonal matrix with 
\begin{align}
\mathcal{Q}_{11}&=\textbf{Alg2}_{\mathcal{Q}_{+}}(\ohill{R}_{11}, \bar{\mathfrak{q}}_{\aleph}\bar{\delta}_{\aleph})\diag\left[\bar{\delta}_{1}, \cdots,\bar{\delta}_{\aleph}\right],\; \bar{\delta}_{j}=\pm 1,  \label{Q11}\\
\mathcal{Q}_{22}&=\textbf{Alg1}_{\mathcal{Q}_{+}}(\ohill{R}_{22}, \tilde{\mathfrak{q}}_{1}\tilde{\delta}_{1})\diag\left[\tilde{\delta}_{1}, \cdots,\tilde{\delta}_{\aleph}\right], \; \tilde{\delta}_{j}=\pm 1, \label{Q22}\\
\ohill{R}_{11}&= \begin{bmatrix}
I_{\aleph} &0_{\aleph\times \aleph}
\end{bmatrix}\mathcal{P}_{2}^{\top}\bar{R}\mathcal{P}_{2}\begin{bmatrix}
I_{\aleph} &0_{\aleph\times \aleph}
\end{bmatrix}^{\top},\label{themR11}\\ 
\ohill{R}_{22}&= \begin{bmatrix}
0_{\aleph\times \aleph} &I_{\aleph}
\end{bmatrix}\mathcal{P}_{2}^{\top}\bar{R}\mathcal{P}_{2}\begin{bmatrix}
0_{\aleph\times \aleph} &I_{\aleph}
\end{bmatrix}^{\top},\label{themR22}\\
\bar{R}&=\diag\left[
r_{1},\;\; 
-r_{1},\;\; 
r_{2},\; \;
-r_{2},\;\; 
\cdots ,\;\; 
r_{\aleph},\; \;
-r_{\aleph}\right], \notag\\
&\text{with}\; r_{j}\in \mathbb{R}, \; \;  r_{j} \ne 0,\;\;  |r_{j}|\ne |r_{k}|\;  \text{ whenever}\;\; j\ne k, \\
\bar{\mathfrak{q}}_{\aleph}&= \pm \frac{ \begin{bmatrix}
I_{\aleph} &0_{\aleph\times \aleph}
\end{bmatrix}\mathcal{P}_{2}^{\top}\wp}{\norm{  \begin{bmatrix}
I_{\aleph} &0_{\aleph\times \aleph}
\end{bmatrix}\mathcal{P}_{2}^{\top}\wp}}, \label{qn}\\
\tilde{\mathfrak{q}}_{1}&=  \pm\frac{ \begin{bmatrix}
0_{\aleph\times \aleph} & I_{\aleph}
\end{bmatrix}\mathcal{P}_{2}^{\top}\wp}{\norm{ \begin{bmatrix}
0_{\aleph\times \aleph} & I_{\aleph}
\end{bmatrix}\mathcal{P}_{2}^{\top}\wp}}, \label{q1}\\
\wp&\in \mathbb{R}^{2 \aleph \times 1} \text{  is a real eigenvector having no zero entries} \notag  \\ 
&\quad \text{ associated with the eigenvalue  $-\frac{1}{\bar{z}}$ of  $\bar{Z}$.} \label{thmwp}
\end{align}
Here $\Lambda \triangleq \Big\{ z \;\;\big|\;\;  z \in \mathbb{C} \;\; \text{and}\;\;  \im(z) >0\Big\}$.
\end{thm}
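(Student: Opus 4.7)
The plan is to use Lemma~\ref{lemnaoki} to translate constraints~\ref{constraint1} and~\ref{constraint2} into algebraic conditions on $Z$ and on the free parameters $(R,\Gamma,P)$, and then to reduce the resulting structural problem to a canonical form via the Lanczos inversions of Algorithms~\ref{algo1}--\ref{algo2}. Since $K=1$, constraint~\ref{constraint2} together with $C=P^{\top}[-Z,I_{N}]$ forces $P=c_{2}e_{\aleph+1}$ and $e_{\aleph+1}^{\top}Z=-(c_{1}/c_{2})e_{\aleph+1}^{\top}$, so the $(\aleph+1)$th row and column of $Z$ are concentrated on the diagonal; applying the permutation $\mathcal{P}_{1}$ gives the block shape $\diag(\bar z,Z')$ with $\bar z\in\Lambda$. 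Constraint~\ref{constraint1} demands $G=\diag(G_{1},G_{1})$ with $G_{1}$ real symmetric tridiagonal. Substituting this into the $G$-formula of Lemma~\ref{lemnaoki}, vanishing of the off-diagonal blocks yields $\Gamma=XRY$ and $\Gamma^{\top}=YRX$; antisymmetry of $\Gamma$ then gives $XRY+YRX=0$, and equating the two diagonal blocks (each to $R$) simplifies to $YRY-XRX=R$. These two real identities package into the single complex equation $ZRZ=-R$, supplemented by the requirement that $R$ be real symmetric tridiagonal.

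Conjugating this identity by $\mathcal{P}_{1}$, write $\tilde R=\mathcal{P}_{1}R\mathcal{P}_{1}^{\top}=\begin{bmatrix} a & b^{\top}\\ b & R''\end{bmatrix}$. Tridiagonality of $R$ together with the shape of $\mathcal{P}_{1}$ forces $b$ to be supported only at the two entries corresponding to the oscillators adjacent to the central one, and forces $R''=R''_{L}\oplus R''_{R}$, because the left and right halves of the chain decouple once the central oscillator is removed. Expanding $ZRZ=-R$ in the permuted basis separates into $(\bar z^{2}+1)a=0$, $Z'b=-b/\bar z$ (so $b$ is a real eigenvector of $Z'$ for eigenvalue $-1/\bar z$), and $Z'R''Z'=-R''$. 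The controllability rank condition~\eqref{rankconstraint} forces each of $R''_{L},R''_{R}$ to be unreduced (all relevant $g_{j}\neq 0$); after absorbing signs via diagonal $\pm 1$ similarities they become genuine Jacobi matrices with simple real spectra, and every Jacobi eigenvector has its first and last entries nonzero.

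The technical core is the analysis of $Z'R''Z'=-R''$. Diagonalize $R''$ by a real orthogonal $U=\diag(U_{L},U_{R})$ so that $U^{\top}R''U=D$, and set $\hat Z=U^{\top}Z'U$; then $\hat Z$ is complex symmetric with $\hat Z D\hat Z=-D$. Introducing $S$ with $S^{2}=D$ and $A=S\hat Z S^{-1}$, one computes $A^{\top}A=-I$, so $A=iO$ for some real orthogonal $O$, and symmetry of $\hat Z$ reduces to $ODO=D$. For a simple-spectrum $D$ this constraint forces $O$ to be block-diagonal in $2\times 2$ rotation blocks, each pairing an eigenvalue $+r_{k}$ of $D$ with its negative $-r_{k}$; hence the spectrum of $R''$ must take the interlaced form $\bar R=\diag(r_{1},-r_{1},\dots,r_{\aleph},-r_{\aleph})$ with $r_{j}\neq 0$ and $|r_{j}|\neq|r_{k}|$ for $j\neq k$, and a permutation $\mathcal{P}_{2}$ converts between this order and $\diag(D_{L},D_{R})$. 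Finally, $U^{\top}b$ is an eigenvector of $\hat Z$ for $-1/\bar z$ whose components are nonzero in every block (by the extreme-entry property of Jacobi eigenvectors), which forces every $2\times 2$ block of $\hat Z$ to admit $-1/\bar z$ as an eigenvalue and therefore to coincide with exactly one of the two forms $\tilde Z_{j}$ listed in the theorem.

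To finish, each of $U_{L}$ and $U_{R}$ is the spectral transform of a Jacobi matrix; by Lemmas~\ref{jacobiunique}, \ref{uniqueq1} and~\ref{uniqueqn} it is uniquely determined by the spectrum together with one distinguished column---the last column for $U_{L}$ (via Algorithm~\ref{algo2}) and the first column for $U_{R}$ (via Algorithm~\ref{algo1}), because the central oscillator is adjacent to the last site of the left chain and to the first site of the right chain. Translating the equation $Z'b=-b/\bar z$ back through $\mathcal{P}_{2}$ identifies those distinguished columns as the upper and lower halves of $\mathcal{P}_{2}^{\top}\wp$ (up to normalisation and an overall sign), where $\wp$ is the real eigenvector of $\bar Z$ for eigenvalue $-1/\bar z$; the hypothesis that $\wp$ has no zero entries is exactly what makes the Lanczos iteration run to completion and, equivalently, what makes $(Q,P)$ controllable. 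Sufficiency is obtained by reversing the construction: given $Z$ of the form~\eqref{thmZ}, build the matching $R$ via the inverse Lanczos procedure, set $\Gamma=XRY$ and $P=c_{2}e_{\aleph+1}$, and verify via Lemma~\ref{lemnaoki} that constraints~\ref{constraint1}, \ref{constraint2} and~\eqref{rankconstraint} hold. I expect the main obstacle to be the matrix analysis in paragraph three---establishing $A^{\top}A=-I$, deducing the reduction $ODO=D$, and using the unreducedness/controllability hypothesis to force the eigenvalue $-1/\bar z$ to appear in every $2\times 2$ block so that the $\tilde Z_{j}$'s can only take the two stated forms.
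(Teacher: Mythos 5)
Your proof follows the same architecture as the paper's: Lemma~\ref{lemnaoki} converts constraints~\ref{constraint1} and~\ref{constraint2} into $P=\tau_{p}e_{\aleph+1}$, the block form of $Z$ under conjugation by $\mathcal{P}_{1}$, $\Gamma=XRY$, and the single identity $ZRZ=-R$ with $R$ unreduced real symmetric tridiagonal; the two half-chains are diagonalized, and the Lanczos uniqueness results (Lemmas~\ref{jacobiunique}, \ref{uniqueq1}, \ref{uniqueqn}) recover $\mathcal{Q}_{11}$ and $\mathcal{Q}_{22}$ from the distinguished columns, which you correctly identify with the two halves of $\mathcal{P}_{2}^{\top}\wp$. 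Your use of the nonvanishing of the extreme entries of Jacobi eigenvectors to show $\wp$ has no zero entries is a legitimate alternative to the paper's block-by-block controllability argument, and your sufficiency sketch matches the paper's construction.

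The gap is exactly where you flagged it, in the analysis of $\hat{Z}D\hat{Z}=-D$. First, from $A^{\top}A=-I$ you may only conclude that $A/i$ is \emph{complex} orthogonal, not real orthogonal: since $D$ is indefinite, $S=D^{1/2}$ is complex, and $A=S\hat{Z}S^{-1}$ is not purely imaginary for generic $\bar{z}\in\Lambda$. Indeed, for the known solution block $\begin{bmatrix}a&b\\b&a\end{bmatrix}$ with $D$-block $\diag(r,-r)$ one gets $A=\begin{bmatrix}a&-ib\\ib&a\end{bmatrix}$ with $a=(\bar{z}^{2}-1)/(2\bar{z})$, whose real part vanishes only if $\re\bar{z}=0$ or $|\bar{z}|=1$. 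The complex orthogonal group is non-compact (for $D=\diag(r,-r)$, \emph{every} complex rotation satisfies $ODO=D$), so $ODO=D$ alone does not force the finite list of $2\times 2$ blocks; the collapse to the two stated forms $\tilde{Z}_{j}$ must come from the eigenvalue condition $Z'b=-b/\bar{z}$ together with $\im Z>0$, or, as the paper does, from the non-derogatory property of $-\breve{R}_{22}\breve{Z}$ inherited from~\eqref{rankconstraint}. Second, your pairing argument presupposes that the combined spectrum of $D=\diag(D_{L},D_{R})$ is simple, nonzero, and free of coincidences $|d_{j}|=|d_{k}|$ beyond the required $\pm$ pairs. Each half-chain, being unreduced Jacobi, has simple spectrum, but nothing you have said prevents $D_{L}$ and $D_{R}$ from sharing an eigenvalue, prevents a zero eigenvalue, or prevents an unpaired index (which would leave a $1\times 1$ block $\hat{Z}_{jj}=i$). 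In the paper all of these exclusions --- i.e.\ $r_{j}\neq 0$ and $|r_{j}|\neq|r_{k}|$ for $j\neq k$ --- are \emph{derived} from controllability via the non-derogatory property, not assumed. Until these two points are repaired, the claim that $\hat{Z}$ is permutation-conjugate to $\diag[\tilde{Z}_{1},\dots,\tilde{Z}_{\aleph}]$ with blocks of the two stated forms is not established.
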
 

\begin{rmk}
If $\bar{z}=i$, we have $\tilde{Z}_{j}=iI_{2}$, $j=1,2,\cdots, \aleph$. It follows that $\bar{Z}=iI_{2\aleph}$. Then we have $Z=iI_{2\aleph+1}$ which corresponds to the Gaussian graph matrix of the $(2\aleph+1)$-mode vacuum state. 
\end{rmk}
\begin{rmk}
If $\bar{z}\ne i$, the vector $\wp$ in~\eqref{thmwp} is of the form $\wp=\begin{bmatrix}\tilde{\wp}_{1}^{\top} &\cdots &
\tilde{\wp}_{\aleph}^{\top} \end{bmatrix}^{\top}$, where
\begin{equation}
\left\{\begin{aligned}
\tilde{\wp}_{j}&=\begin{bmatrix} \tau_{j} \\ -\tau_{j}\end{bmatrix} ,\;\; \tau_{j}\in \mathbb{R},\;\; \tau_{j} \ne 0,\;\;
 \text{if}\;\; \tilde{Z}_{j}=\begin{bmatrix}\frac{\bar{z}^{2}-1}{2\bar{z}} &\frac{\bar{z}^{2}+1}{2\bar{z}}\\ \frac{\bar{z}^{2}+1}{2\bar{z}} &\frac{\bar{z}^{2}-1}{2\bar{z}} \end{bmatrix}, \\
\tilde{\wp}_{j}&=\begin{bmatrix} \tau_{j} \\ \tau_{j}\end{bmatrix},\;\; \tau_{j}\in \mathbb{R},\;\; \tau_{j} \ne 0,\;\; \text{if}\;\; \tilde{Z}_{j}=\begin{bmatrix}\frac{\bar{z}^{2}-1}{2\bar{z}} &-\frac{\bar{z}^{2}+1}{2\bar{z}}\\ -\frac{\bar{z}^{2}+1}{2\bar{z}} &\frac{\bar{z}^{2}-1}{2\bar{z}} \end{bmatrix}.
\end{aligned}\right. \notag
\end{equation}
\end{rmk}

The proof of Theorem~\ref{thm1} is provided in the Appendix. Next we give an example to illustrate Theorem~\ref{thm1}. 
\begin{exam}
Consider a $7$-mode ($\aleph=3$) pure Gaussian state with Gaussian graph matrix given by
\footnotesize
\begin{align}
Z=             i\begin{bmatrix}
\cosh(2\alpha)         & 0       &0               & 0        & 0        & 0 &- \sinh(2\alpha)    \\
    0                   &\cosh(2\alpha) &  0             &0         & 0        &\sinh(2\alpha) &  0   \\
    0                   &0        &\cosh(2\alpha) &0         & -  \sinh(2\alpha)  &  0 &  0    \\
    0                   &0        &0               &\cosh(2\alpha)+\sinh(2\alpha) &0  &  0 &  0   \\
     0                  & 0       &-\sinh(2\alpha)  &0         &\cosh(2\alpha)     &  0      &  0      \\
     0                 &\sinh(2\alpha)   & 0              & 0          &  0        &\cosh(2\alpha) &  0  \\
-   \sinh(2\alpha)       &  0       &    0           &  0        &  0         &0    &\cosh(2\alpha) 
\end{bmatrix}. \label{exampleZ}
\end{align} 
\normalsize
 We already know from~\cite{ZLV15:pra} that this pure Gaussian state can be generated by a quantum harmonic  oscillator chain subject to the two constraints~\ref{constraint1} and~\ref{constraint2}. Next we show that the parametrization given by Theorem~\ref{thm1} successfully includes the Gaussian graph matrix~\eqref{exampleZ} as a special case. In Theorem~\ref{thm1} let us choose 
\begin{align*}
\bar{z}&=i\left(\cosh(2\alpha)+\sinh(2\alpha)\right),\\
\tilde{Z}_{j}&=i\begin{bmatrix}
\cosh(2\alpha) & (-1)^{j} \sinh(2\alpha) \\
(-1)^{j}\sinh(2\alpha)    &\cosh(2\alpha)              
\end{bmatrix}, \;\; j=1,\;2,\;3,\\
\mathcal{P}_{2}&=\begin{bmatrix}
1 &0 &0 &0 &0 &0\\
0 &0 &0 &0 &0 &1\\
0 &1 &0 &0 &0 &0\\
0 &0 &0 &0 &1 &0\\
0 &0 &1  &0 &0 &0\\
0 &0 &0 &1 &0 &0
\end{bmatrix},\\
\bar{R}&=\diag\big[-1-2\sqrt{2}, 1+2\sqrt{2}, -1, 1, -1+2\sqrt{2}, 1-2\sqrt{2}\big],\\
\wp &= \begin{bmatrix}1 &1 & \sqrt{2}  &- \sqrt{2}  & 1  & 1 \end{bmatrix}^{\top},\\
\bar{\mathfrak{q}}_{3}&= \frac{ \begin{bmatrix}
I_{3} &0_{3\times 3}
\end{bmatrix}\mathcal{P}_{2}^{\top}\wp}{\norm{  \begin{bmatrix}
I_{3} &0_{3\times 3}
\end{bmatrix}\mathcal{P}_{2}^{\top}\wp}} = \begin{bmatrix}\frac{1}{2}  &\frac{\sqrt{2}}{2} &\frac{1}{2} \end{bmatrix}^{\top},\\
\tilde{\mathfrak{q}}_{1}&=  \frac{ \begin{bmatrix}
0_{3\times 3} & I_{3}
\end{bmatrix}\mathcal{P}_{2}^{\top}\wp}{\norm{ \begin{bmatrix}
0_{3\times 3} & I_{3}
\end{bmatrix}\mathcal{P}_{2}^{\top}\wp}}= \begin{bmatrix}\frac{1}{2} &-\frac{\sqrt{2}}{2} &\frac{1}{2}  \end{bmatrix}^{\top},\\
\bar{\delta}_{j}&=\tilde{\delta}_{j}= 1, \;\; j=1,\;2,\;3. 
\end{align*}
Then substituting $\bar{R}$ and $\mathcal{P}_{2}$ into~\eqref{themR11} and~\eqref{themR22} yields 
\begin{align*}
\ohill{R}_{11}&= \diag\big[-1-2\sqrt{2},\;   -1, \; -1+2\sqrt{2}\big], \\
\ohill{R}_{22}&= \diag\big[1-2\sqrt{2},\;   1, \; 1+2\sqrt{2}\big]. 
\end{align*}
To solve for $\mathcal{Q}_{11}$ and $\mathcal{Q}_{22}$ in~\eqref{Q11} and~\eqref{Q22}, we need to apply Algorithm~\ref{algo2} and Algorithm~\ref{algo1}, respectively. We find 
$
\mathcal{Q}_{11}=\mathcal{Q}_{22}=\begin{bmatrix}
\frac{1}{2} &-\frac{\sqrt{2}}{2} &\frac{1}{2}\\
-\frac{\sqrt{2}}{2} &0 &\frac{\sqrt{2}}{2}\\
\frac{1}{2}  &\frac{\sqrt{2}}{2} &\frac{1}{2}
\end{bmatrix}$. 
Substituting $\mathcal{P}_{1}$, $\mathcal{Q}=\diag\left[
\mathcal{Q}_{11},\;\mathcal{Q}_{22} 
\right]$, $\mathcal{P}_{2}$, $\bar{z}$ and $\tilde{Z}_{j}$, $j=1,\;2,\;3$, obtained above into~\eqref{thmZ} gives exactly the same $Z$ as~\eqref{exampleZ}. Thus, we conclude that the Gaussian graph matrix~\eqref{exampleZ} is included in the parametrization given by Theorem~\ref{thm1}. 

Using Lemma~\ref{lemnaoki}, we can construct a quantum harmonic  oscillator chain that satisfies the constraints~\ref{constraint1} and~\ref{constraint2} and also generates the pure Gaussian state with the Gaussian graph matrix~\eqref{exampleZ}.  
Let $
R=\mathcal{P}_{1}^{\top}\begin{bmatrix}
0 &\breve{R}_{21}^{\top}\\
\breve{R}_{21} & \mathcal{Q}^{\top}\mathcal{P}_{2}^{\top} \bar{R}  \mathcal{P}_{2} \mathcal{Q}
\end{bmatrix}\mathcal{P}_{1}$,
where $\breve{R}_{21}=\begin{bmatrix}
0_{2 \times 1}\\
\bar{\mathfrak{q}}_{3}^{\top} \begin{bmatrix}I_{3} &0_{3\times 3}
\end{bmatrix}\mathcal{P}_{2}^{\top}\wp \\
\tilde{\mathfrak{q}}_{1}^{\top} \begin{bmatrix}
0_{3\times 3} & I_{3}
\end{bmatrix}\mathcal{P}_{2}^{\top}\wp\\
0_{2\times 1}\end{bmatrix}$, $\Gamma=0$ and $P=i\frac{\cosh(\alpha)-\sinh(\alpha)}{\sqrt{2}}\begin{bmatrix}
0_{3\times 1}\\
1\\
0_{3\times 1}
\end{bmatrix}$ in~\eqref{G} and~\eqref{C}. Then we obtain $R = \begin{bmatrix}

   -1    &2  & 0         &0         &0         &0         &0\\
    2   &-1   & 2        & 0        & 0        & 0         &0\\
    0   & 2   &-1    &2         &0        & 0        & 0\\
         0         &0   & 2        & 0    &2         &0         &0\\
         0        & 0        & 0   & 2   & 1   & 2  & 0\\
         0         &0        & 0        & 0   & 2   & 1   & 2\\
         0         &0        & 0        & 0   &0   & 2  &  1 \end{bmatrix}$. It can be verified that the rank constraint~\eqref{rankconstraint} holds. Therefore, the resulting quantum harmonic  oscillator chain is strictly stable and generates the pure Gaussian state with the Gaussian graph matrix~\eqref{exampleZ}.  The system Hamiltonian is given by 
 $
\hat{H}=\frac{1}{2}\hat{x}^{\top}G\hat{x}=-\frac{1}{2}\sum\limits_{j=1}^{3}\left(\hat{q}_{j}^{2}  + \hat{p}_{j}^{2} \right)+\frac{1}{2}\sum\limits_{j=5}^{7}\left(\hat{q}_{j}^{2}  + \hat{p}_{j}^{2} \right)+2\sum\limits_{j=1}^{6}\left(\hat{q}_{j}\hat{q}_{j+1}+\hat{p}_{j}\hat{p}_{j+1}\right)
$, which satisfies the constraint~\ref{constraint1}. 
The coupling vector is given by 
$\hat{L}=C\hat{x}= 
i\frac{\cosh(\alpha)-\sinh(\alpha)}{\sqrt{2}}\left[-i(\left(\cosh(2\alpha)+\sinh(2\alpha)\right)\hat{q}_{4}+\hat{p}_{4}\right]
= \frac{\cosh(\alpha)+\sinh(\alpha)}{\sqrt{2}} \hat{q}_{4}+i\frac{\cosh(\alpha)-\sinh(\alpha)}{\sqrt{2}} \hat{p}_{4}  
=\cosh(\alpha)\hat{a}_{4}+\sinh(\alpha)\hat{a}_{4}^{\ast}  
 $, which satisfies the constraint~\ref{constraint2}.   
Lastly, we remark that at steady state, the oscillators symmetrically located with respect to the central one  are entangled in pairs. The steady-state entanglement can be measured by the  logarithmic negativity $\mathcal{E}$~\cite{VW02:pra,P05:prl,ASI04:pra}. The pairwise bipartite entanglement values  are given by $\mathcal{E}_{(1,7)}=\mathcal{E}_{(2,6)} =\mathcal{E}_{(3,5)} =2|\alpha|$. For example,  the pairwise bipartite entanglement values for $\alpha=0.5$ are shown in Fig.~\ref{entanglement1}. We also see that the central (fourth) oscillator  is not entangled with the other oscillators. 
  \begin{figure}[htbp]
\begin{center}
\includegraphics[width=7.5cm]{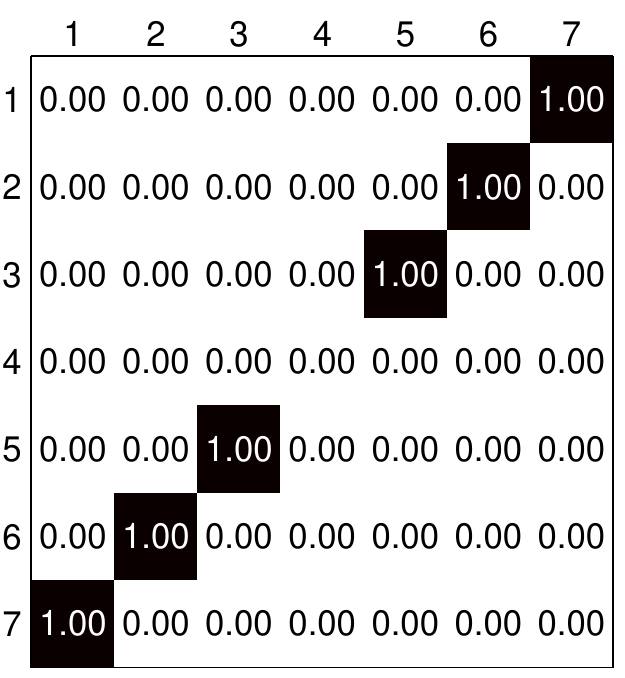}
\caption{
Pairwise bipartite entanglement values in the chain of $7$ quantum harmonic oscillators. Oscillators symmetrically located with respect to the central one  are entangled in pairs. The central (fourth) oscillator is not entangled with the other oscillators.}
\label{entanglement1}
\end{center}
\end{figure}
\end{exam}

\section{Algorithm} \label{Algorithm}
In this section, we will show how to use Theorem~\ref{thm1} to construct useful pure Gaussian states. In particular, according to Theorem~\ref{thm1} and its proof, we outline an algorithm which allows us to find a pure Gaussian state that can be prepared by a quantum harmonic  oscillator chain subject to the two constraints~\ref{constraint1} and \ref{constraint2}. The algorithm consists of six steps. 

\subsection{Algorithm for finding pure Gaussian states}
\noindent
\begin{tabularx}{\textwidth}{|X|}
\hline
\textbf{Step~1.} Choose a complex number $\bar{z}$ from the set $\Lambda$. Choose a  permutation matrix $\mathcal{P}_{2}$. Choose each $\bar{\delta}_{j}$ from the set $\{1,\; -1\}$ for $j=1,2,\cdots, \aleph$. Choose each $\tilde{\delta}_{j}$ from the set  $\{1,\; -1\}$ for $j=1,2,\cdots, \aleph$. Choose $r_{j}$, $j=1,2,\cdots, \aleph$, such that $r_{j}\in \mathbb{R}$, $r_{j} \ne 0$, $|r_{j}|\ne |r_{k}|$  whenever $ j\ne k$.  Let $\bar{R}=\diag\left[
r_{1},\;\; 
-r_{1},\;\; 
r_{2},\; \;
-r_{2},\;\; 
\cdots ,\;\; 
r_{\aleph},\; \;
-r_{\aleph}\right]$.\\ \vspace{0.1cm}

\textbf{Step~2.} Choose each $\tilde{Z}_{j}$ from the set $\left\{ \begin{bmatrix}\frac{\bar{z}^{2}-1}{2\bar{z}} &\frac{\bar{z}^{2}+1}{2\bar{z}}\\ \frac{\bar{z}^{2}+1}{2\bar{z}} &\frac{\bar{z}^{2}-1}{2\bar{z}} \end{bmatrix}, \begin{bmatrix}\frac{\bar{z}^{2}-1}{2\bar{z}} &-\frac{\bar{z}^{2}+1}{2\bar{z}}\\ -\frac{\bar{z}^{2}+1}{2\bar{z}} &\frac{\bar{z}^{2}-1}{2\bar{z}} \end{bmatrix} \right\}$  for $j=1,2,\cdots, \aleph$. Let $\bar{Z}=\diag[\tilde{Z}_{1},\cdots,\tilde{Z}_{ \aleph }]$.\\ 
    \hline 
    \end{tabularx}
    
  \noindent
\begin{tabularx}{\textwidth}{|X|}
\hline  
\textbf{Step~3.} If $\tilde{Z}_{j}=\begin{bmatrix}\frac{\bar{z}^{2}-1}{2\bar{z}} &\frac{\bar{z}^{2}+1}{2\bar{z}}\\ \frac{\bar{z}^{2}+1}{2\bar{z}} &\frac{\bar{z}^{2}-1}{2\bar{z}} \end{bmatrix}$, choose $\tilde{\wp}_{j}=\begin{bmatrix} \tau_{j} \\ -\tau_{j}\end{bmatrix}$, where $\tau_{j}\in \mathbb{R}$ and $\tau_{j} \ne 0$.  Otherwise, choose $\tilde{\wp}_{j}=\begin{bmatrix} \tau_{j} \\ \tau_{j}\end{bmatrix}$, where $\tau_{j}\in \mathbb{R}$ and $\tau_{j} \ne 0$.  Let $\wp=\begin{bmatrix}\tilde{\wp}_{1}^{\top} &\cdots &
\tilde{\wp}_{\aleph}^{\top} \end{bmatrix}^{\top}$.  \\ \vspace{0.1cm}

\textbf{Step~4.} Choose $\bar{\mathfrak{q}}_{\aleph}$ from the set $\left\{\pm \frac{ \begin{bmatrix}
I_{\aleph} &0_{\aleph\times \aleph}
\end{bmatrix}\mathcal{P}_{2}^{\top}\wp}{\norm{  \begin{bmatrix}
I_{\aleph} &0_{\aleph\times \aleph}
\end{bmatrix}\mathcal{P}_{2}^{\top}\wp}}\right\}$. Choose $\tilde{\mathfrak{q}}_{1}$ from the set $\left\{\pm\frac{ \begin{bmatrix}
0_{\aleph\times \aleph} & I_{\aleph}
\end{bmatrix}\mathcal{P}_{2}^{\top}\wp}{\norm{ \begin{bmatrix}
0_{\aleph\times \aleph} & I_{\aleph}
\end{bmatrix}\mathcal{P}_{2}^{\top}\wp}}\right\}$. 
 Calculate $\ohill{R}_{11}= \begin{bmatrix}
I_{\aleph} &0_{\aleph\times \aleph}
\end{bmatrix}\mathcal{P}_{2}^{\top}\bar{R}\mathcal{P}_{2}\begin{bmatrix}
I_{\aleph} &0_{\aleph\times \aleph}
\end{bmatrix}^{\top}$ and $
\ohill{R}_{22}= \begin{bmatrix}
0_{\aleph\times \aleph} &I_{\aleph}
\end{bmatrix}\mathcal{P}_{2}^{\top}\bar{R}\mathcal{P}_{2}\begin{bmatrix}
0_{\aleph\times \aleph} &I_{\aleph}
\end{bmatrix}^{\top}$. \\ \vspace{0.1cm}

\textbf{Step~5.} Feed the real diagonal matrix $\ohill{R}_{11}$ and the real unit vector $\bar{\mathfrak{q}}_{\aleph}\bar{\delta}_{\aleph}$ into Algorithm~\ref{algo2} to obtain the real orthogonal matrix $\textbf{Alg2}_{\mathcal{Q}_{+}}(\ohill{R}_{11}, \bar{\mathfrak{q}}_{\aleph}\bar{\delta}_{\aleph})$. Then calculate $\mathcal{Q}_{11}=\textbf{Alg2}_{\mathcal{Q}_{+}}(\ohill{R}_{11}, \bar{\mathfrak{q}}_{\aleph}\bar{\delta}_{\aleph})\diag\left[\bar{\delta}_{1},\; \cdots,\;\bar{\delta}_{\aleph}\right]$. Feed the real diagonal matrix $\ohill{R}_{22}$ and the real unit vector $\tilde{\mathfrak{q}}_{1}\tilde{\delta}_{1}$ into Algorithm~\ref{algo1} to obtain the real orthogonal matrix $\textbf{Alg1}_{\mathcal{Q}_{+}}(\ohill{R}_{22}, \tilde{\mathfrak{q}}_{1}\tilde{\delta}_{1})$. Then calculate $\mathcal{Q}_{22}=\textbf{Alg1}_{\mathcal{Q}_{+}}(\ohill{R}_{22}, \tilde{\mathfrak{q}}_{1}\tilde{\delta}_{1})\diag\left[\tilde{\delta}_{1},\;\; \cdots,\;\;\tilde{\delta}_{\aleph}\right]$. Let  $\mathcal{Q}=\diag\big[\mathcal{Q}_{11},\;
 \mathcal{Q}_{22} \big]$. \\ \vspace{0.1cm}

\textbf{Step~6.} Calculate the Gaussian graph matrix $Z=\mathcal{P}_{1}^{\top}\begin{bmatrix}
\bar{z} &0_{1\times 2\aleph}\\
0_{2\aleph\times 1} & \mathcal{Q}^{\top}\mathcal{P}_{2}^{\top} \bar{Z}  \mathcal{P}_{2} \mathcal{Q}
\end{bmatrix}\mathcal{P}_{1}$, where $\mathcal{P}_{1}=\begin{bmatrix}
0_{1\times \aleph} &1 &0_{1\times \aleph}\\
I_{\aleph} &0_{\aleph\times 1} &0_{\aleph\times \aleph}\\
0_{\aleph\times \aleph} &0_{\aleph\times 1} &I_{\aleph}
\end{bmatrix}$. After obtaining $Z$, calculate the covariance matrix $V$ of the pure Gaussian state using the formula~\eqref{covariance}. Now we obtain a desired pure Gaussian state with the covariance matrix $V$.\\
\hline
    \end{tabularx}

\begin{rmk}
Once we obtain a pure Gaussian state using the algorithm above, we can immediately find a dissipative quantum harmonic  oscillator chain that generates such a state and also satisfies the constraints~\ref{constraint1} and~\ref{constraint2}. For example, let $
R=\mathcal{P}_{1}^{\top}\begin{bmatrix}
0 &\breve{R}_{21}^{\top}\\
\breve{R}_{21} & \mathcal{Q}^{\top}\mathcal{P}_{2}^{\top} \bar{R}  \mathcal{P}_{2} \mathcal{Q}
\end{bmatrix}\mathcal{P}_{1}$,
where $\breve{R}_{21}=\begin{bmatrix}
0_{(\aleph-1) \times 1}\\
\bar{\mathfrak{q}}_{\aleph}^{\top} \begin{bmatrix}I_{\aleph} &0_{\aleph\times \aleph}
\end{bmatrix}\mathcal{P}_{2}^{\top}\wp \\
\tilde{\mathfrak{q}}_{1}^{\top} \begin{bmatrix}
0_{\aleph\times \aleph} & I_{\aleph}
\end{bmatrix}\mathcal{P}_{2}^{\top}\wp\\
0_{(\aleph-1)\times 1}\end{bmatrix}$, $\Gamma=XRY$ and $P=\begin{bmatrix}
0_{\aleph\times 1}\\
\tau_{p}\\
0_{\aleph\times 1}
\end{bmatrix}$, where 
$\tau_{p}\ne 0$ and $\tau_{p}\in\mathbb{C}$ in~\eqref{G} and~\eqref{C}. Then calculate the matrices $G$ and $C$ using~\eqref{G} and~\eqref{C}, respectively. The resulting linear quantum system with Hamiltonian $\hat{H}=\frac{1}{2}\hat{x}^{\top}G\hat{x}$ and coupling vector $\hat{L}=C\hat{x}$ is strictly stable and   generates the pure Gaussian state. Also, this system is a  quantum  harmonic  oscillator chain that satisfies the two constraints~\ref{constraint1} and~\ref{constraint2}. 
\end{rmk}

\begin{exam}
We use the above algorithm to construct a $7$-mode ($\aleph=3$) pure Gaussian state. We  choose
\begin{align*}
\bar{z}&=0.1+0.45i,\\
\mathcal{P}_{2}&=\begin{bmatrix}
     0     &0     &0     &0     &1     &0\\
     1     &0     &0     &0     &0     &0\\
     0     &0     &1     &0     &0     &0\\
     0     &1     &0     &0     &0     &0\\
     0     &0     &0     &1     &0     &0\\
     0     &0     &0     &0     &0     &1
     \end{bmatrix},\\
\bar{\delta}_{j}&=\tilde{\delta}_{j}=1,\quad j=1,2, 3,\\
\bar{R}&=\diag\left[
-4.2,\;\; 
4.2,\;\; 
-1.5,\; \;
1.5,\;\; 
2,\; \;
-2\right],\\
\tilde{Z}_{1}&=\tilde{Z}_{3}=\begin{bmatrix}
-0.1853 + 1.2838i  &-0.2853 + 0.8338i\\
-0.2853 + 0.8338i  &-0.1853 + 1.2838i
\end{bmatrix},\\
\tilde{Z}_{2}&=\begin{bmatrix}
-0.1853 + 1.2838i  &0.2853 - 0.8338i\\
0.2853 - 0.8338i  &-0.1853 + 1.2838i
\end{bmatrix},\\
\wp&=\begin{bmatrix} 1 &1 &2 &-2 &3 &3\end{bmatrix}^{\top},\\
\bar{\mathfrak{q}}_{3}&=\frac{ \begin{bmatrix}
I_{3} &0_{3\times 3}
\end{bmatrix}\mathcal{P}_{2}^{\top}\wp}{\norm{  \begin{bmatrix}
I_{3} &0_{3\times 3}
\end{bmatrix}\mathcal{P}_{2}^{\top}\wp}}=\frac{1}{3}\begin{bmatrix}
1 &-2 &2
\end{bmatrix}^{\top},\\
\tilde{\mathfrak{q}}_{1}&=\frac{ \begin{bmatrix}
0_{3\times 3} & I_{3}
\end{bmatrix}\mathcal{P}_{2}^{\top}\wp}{\norm{ \begin{bmatrix}
0_{3\times 3} & I_{3}
\end{bmatrix}\mathcal{P}_{2}^{\top}\wp}}=\frac{1}{\sqrt{19}}\begin{bmatrix}
3 &1 &3
\end{bmatrix}^{\top}.
\end{align*}
Then applying Algorithm~\ref{algo2} and Algorithm~\ref{algo1}, respectively, we find  
\begin{align*}
\mathcal{Q}_{11}=   
\begin{bmatrix}
  0.6892    & 0.6433    & 0.3333        \\
    0.6548  &  -0.3561  &  -0.6667         \\
    0.3102  & -0.6778  &  0.6667\\
    \end{bmatrix},\\
    \mathcal{Q}_{22}= \begin{bmatrix}
    0.6882   & 0.7074    &0.1608 \\
         0.2294  & -0.4225   & 0.8769 \\
           0.6882 &  -0.5666  & -0.4530
         \end{bmatrix}. 
\end{align*}
Substituting $\bar{z}$, $\mathcal{P}_{1}$, $\mathcal{P}_{2}$, $\mathcal{Q}=\diag\big[\mathcal{Q}_{11},\;
 \mathcal{Q}_{22} \big]$, and $\bar{Z}=\diag[\tilde{Z}_{1},\tilde{Z}_{2},\tilde{Z}_{ 3}]$ into~\eqref{thmZ}, we obtain the Gaussian graph matrix $Z=X+iY$, where 
 \begin{align*}
 X&=\begin{bmatrix}
   -0.0694  & -0.1581  &    0.0655  &         0     &-0.0451      &0.0831    & -0.1724  \\
   -0.1581  &   -0.0476  &    0.0612  &         0    & -0.0421    &  0.0775   &  -0.1609  \\
    0.0655  &    0.0612  &   -0.4389   &        0    & -0.0218    &  0.0402   &  -0.0834  \\
         0  &         0  &         0   &   0.1000    &       0    &       0    &       0  \\
   -0.0451  &   -0.0421  &   -0.0218   &        0    & -0.4556    & -0.0276    &  0.0574  \\
    0.0831  &    0.0775  &    0.0402   &        0    & -0.0276    &  0.0434    &  0.1174  \\
   -0.1724  &   -0.1609  &   -0.0834   &        0    &  0.0574    &0.1174    & -0.1437  \\
 \end{bmatrix},\\
 Y&=\begin{bmatrix}
    0.9451   &   0.4621   &  -0.1916    &       0     & 0.1318    & -0.2428     & 0.5039  \\
    0.4621   &   0.8813   &  -0.1788    &       0     & 0.1231    & -0.2266    &  0.4703  \\
   -0.1916   &  -0.1788    &  2.0250    &       0     & 0.0638    & -0.1174    &  0.2437  \\
         0   &        0    &       0    &  0.4500     &      0    &       0    &       0  \\
    0.1318    &  0.1231   &   0.0638    &       0     & 2.0738    &  0.0808    & -0.1677  \\
   -0.2428    & -0.2266   &  -0.1174     &      0    &  0.0808    &  0.6153    & -0.3432  \\
    0.5039    &  0.4703   &   0.2437     &      0    & -0.1677    & -0.3432     & 1.1624  \\
 \end{bmatrix}. 
 \end{align*}
 The covariance matrix $V$ can be computed from $Z$ by using the formula~\eqref{covariance}. The pairwise bipartite entanglement between all pairs in the chain can be immediately quantified from its symmetrically ordered covariance matrix $V$ using the logarithmic negativity. The pairwise  bipartite entanglement values are given in Fig.~\ref{entanglement}. We see that every two oscillators (except the central oscillator) are entangled. Hence, this pure Gaussian steady state shows different entanglement properties from that in~\cite{ZLV15:pra}.
  \begin{figure}[htbp]
\begin{center}
\includegraphics[width=8cm]{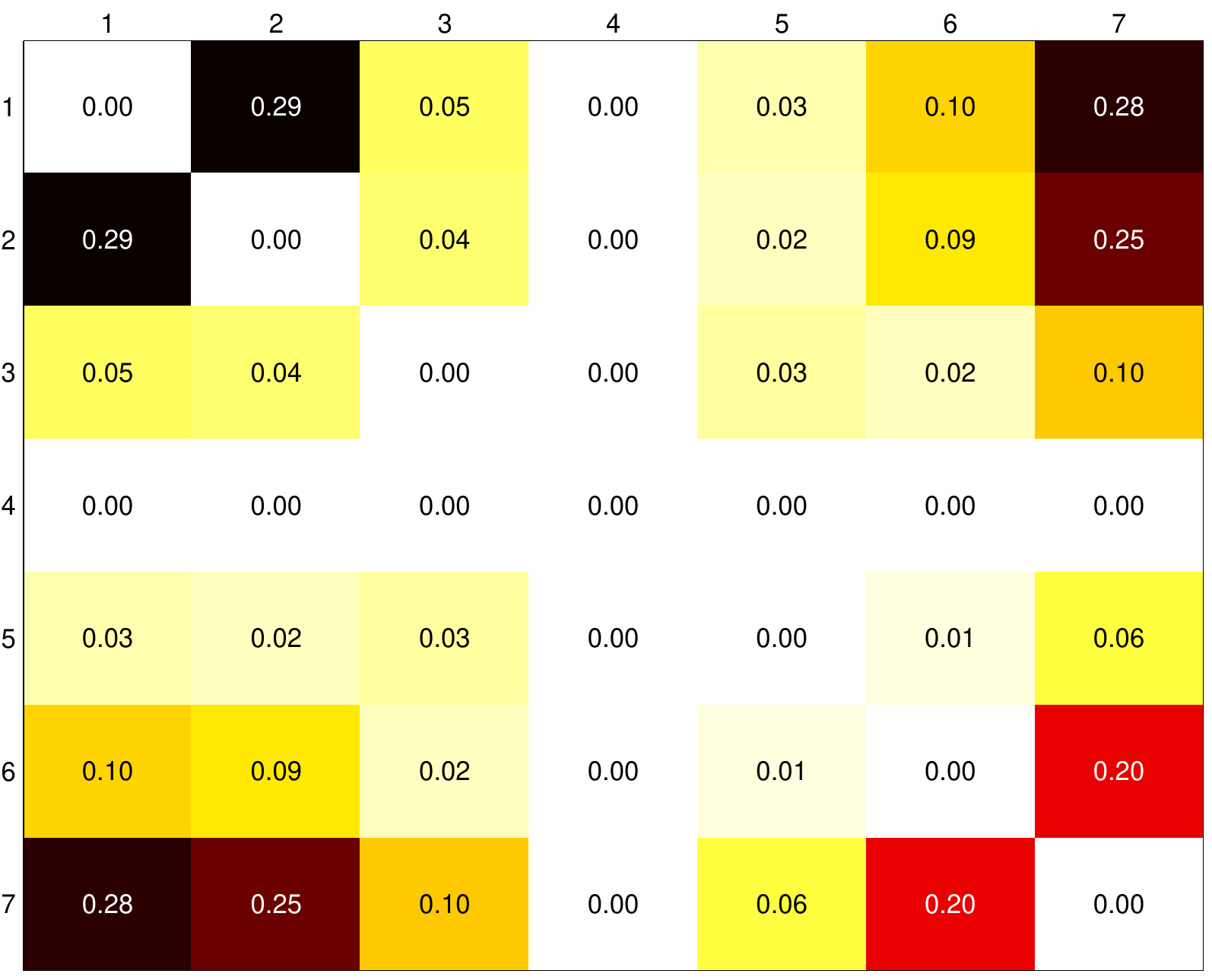}
\caption{
Pairwise bipartite entanglement values in the chain of $7$ quantum harmonic oscillators. Every two oscillators (except the central oscillator) are entangled. The central (fourth) oscillator is not entangled with the other oscillators.}
\label{entanglement}
\end{center}
\end{figure}
 \end{exam}

\begin{exam}
The above algorithm can  be used to find pure Gaussian states with an arbitrary odd mode number. For example, 
Fig.~\ref{entanglement31} shows the pairwise bipartite entanglement values measured by logarithmic negativity of a $31$-mode ($\aleph=15$) pure Gaussian steady state. Due to space limitations, the Gaussian graph matrix and the covariance matrix of this pure Gaussian state are not provided. 
  \begin{figure}[htbp]
\begin{center}
\includegraphics[width=15.5cm]{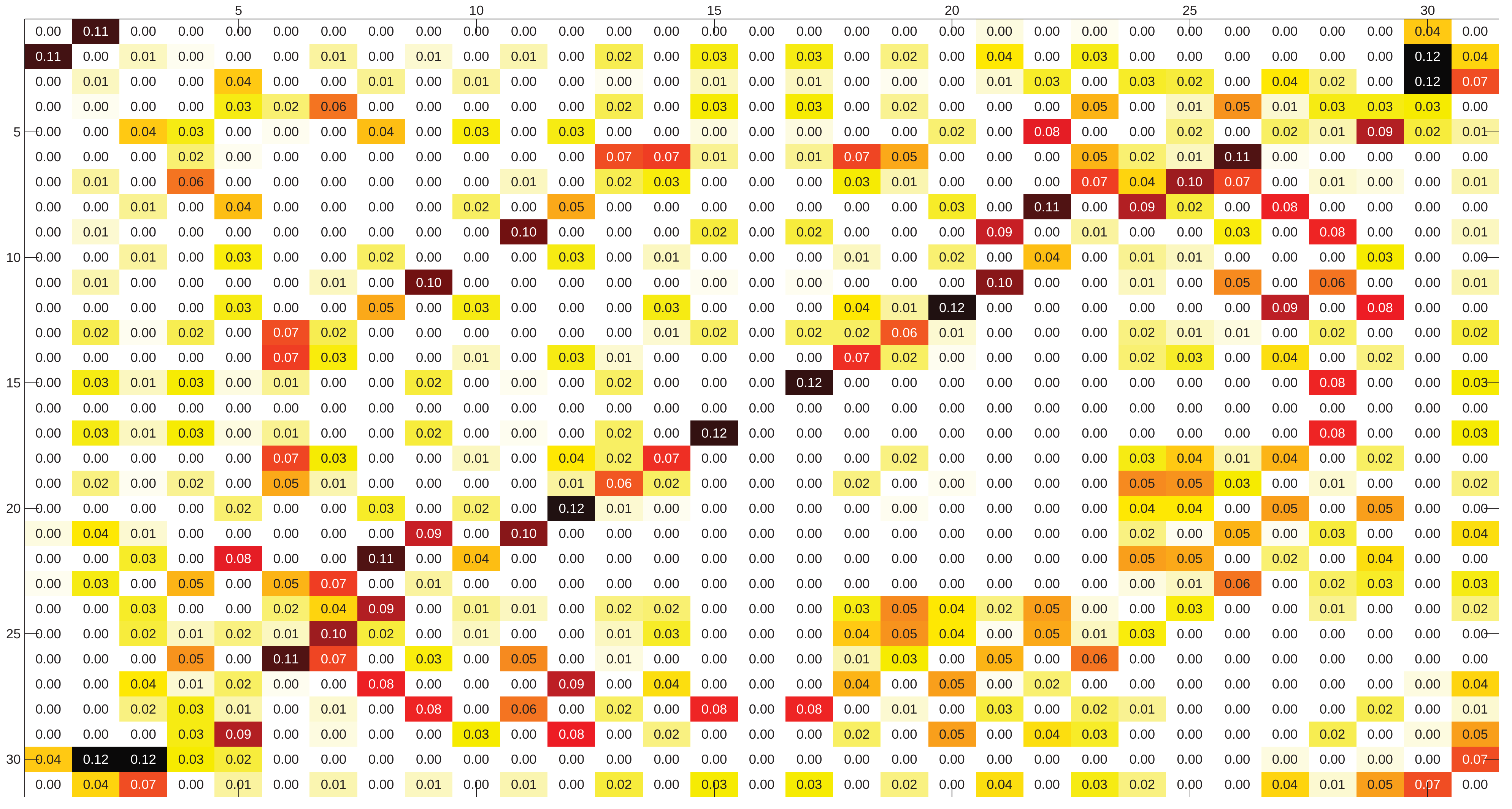}
\caption{
Pairwise bipartite entanglement values in a chain of $31$ quantum harmonic oscillators. The central (sixteenth) oscillator is not entangled with the other oscillators.}
\label{entanglement31}
\end{center}
\end{figure}
\end{exam}

\section{Conclusion} 
In this paper, we consider a chain of $(2\aleph+1)$ quantum harmonic   oscillators subject to constraints. (\textrm{i}) The Hamiltonian $\hat{H}$ is of the form 
$\hat{H}=\sum\limits_{j=1}^{2\aleph+1}\frac{\omega_{j}}{2}\left(\hat{q}_{j}^{2}+\hat{p}_{j}^{2} \right)+\sum\limits_{j=1}^{2\aleph}g_{j}\left(\hat{q}_{j}\hat{q}_{j+1}+\hat{p}_{j}\hat{p}_{j+1} \right)$, where $\omega_{j}\in \mathbb{R}$, $j=1,2,\cdots,2\aleph+1 $, and $g_{j}\in \mathbb{R}$,  $j=1,2,\cdots,2\aleph $.  This type of Hamiltonian describes a set of nearest-neighbour beam-splitter-like interactions. (\textrm{ii})
 Only the central  oscillator of the chain is coupled to the reservoir. That is, the coupling vector $\hat{L}$ is of the form $\hat{L}=c_{1}\hat{q}_{\aleph+1}+c_{2}\hat{p}_{\aleph+1}$,  where $c_{1}\in \mathbb{C}$ and $ c_{2} \in \mathbb{C}$.  Then we derive a sufficient and necessary condition for  a pure Gaussian state   to be prepared in a dissipative quantum harmonic  oscillator chain subject to the above two constraints. These conditions are expressed in terms of a set of constraints on Gaussian graph matrices $Z$. In Section~\ref{Algorithm}, we provide an algorithm for finding those pure Gaussian states by constructing their covariance matrices.  In future work, it would be interesting to investigate the  steady-state entanglement properties in such quantum harmonic  oscillator chains, complementing the work on the  entanglement area law developed in~\cite{AEP02:pra,PEDC05:prl,CEPD06:pra,EP10:arxiv}.  
 
 \ack
 This work was supported by the Australian Research Council (ARC), the Australian Academy of Science, and the Japan Society for the Promotion of Science (JSPS).
 
\section*{Appendix}
In this section, we provide the proof of Theorem~\ref{thm1}. The following preliminary results will be used in the proof. 
\begin{lem} \label{lemunreduced}
Suppose $T=\mathcal{Q}_{{T}}^{\top}D\mathcal{Q}_{T}$, where $T$ is an unreduced   real symmetric tridiagonal matrix, $D$ is a real diagonal matrix and $\mathcal{Q}_{T}=\begin{bmatrix}
\mathfrak{q}_{1} &\mathfrak{q}_{2} &\cdots &\mathfrak{q}_{n}
\end{bmatrix}$ is a  real  orthogonal matrix. Then there exists a diagonal matrix  $J=\diag[\delta_{1},\cdots,\delta_{n}]$, $\delta_{j}=\pm 1$, $j=1,2,\cdots,n$, such that 
\begin{align*}
\mathcal{Q}_{T}=\textbf{Alg1}_{\mathcal{Q}_{+}}(D, \mathfrak{q}_{1}\delta_{1})J=\textbf{Alg2}_{\mathcal{Q}_{+}}(D, \mathfrak{q}_{n}\delta_{n})J.  
\end{align*}
\end{lem}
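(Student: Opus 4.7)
My plan is to reduce the claim to the uniqueness results for Jacobi matrices (Lemmas~\ref{jacobiunique}, \ref{uniqueq1}, \ref{uniqueqn}) by absorbing the signs of the subdiagonal entries of $T$ into a diagonal similarity $J=\diag[\delta_1,\cdots,\delta_n]$ with $\delta_j=\pm 1$. Write $T$ with diagonal entries $\alpha_j$ and subdiagonal entries $\beta_j$, with $\beta_j\neq 0$ by unreducedness. I fix $\delta_1\in\{+1,-1\}$ arbitrarily and then define $\delta_{j+1}$ inductively by $\delta_{j+1}=\mathrm{sign}(\beta_j)\,\delta_j$, so that $JTJ$ has diagonal $\alpha_j$ and positive subdiagonal $|\beta_j|$, i.e.\ $T_+:=JTJ$ is a Jacobi matrix.

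Next I substitute the given factorization into this similarity: since $J=J^\top=J^{-1}$, we obtain
\begin{equation*}
T_+=JTJ=J\mathcal{Q}_T^\top D\,\mathcal{Q}_T J=(\mathcal{Q}_T J)^\top D\,(\mathcal{Q}_T J).
\end{equation*}
Set $\mathcal{Q}_+:=\mathcal{Q}_T J$; this is real orthogonal, its first column is $\mathfrak{q}_1\delta_1$ and its last column is $\mathfrak{q}_n\delta_n$. So $T_+=\mathcal{Q}_+^\top D\,\mathcal{Q}_+$ is a Jacobi decomposition whose first/last column is the candidate input to Algorithm~\ref{algo1}/Algorithm~\ref{algo2}.

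The remaining task is to show that the rank hypotheses of Lemma~\ref{uniqueq1} (resp.\ Lemma~\ref{uniqueqn}) are met for the vector $\mathfrak{q}_1\delta_1$ (resp.\ $\mathfrak{q}_n\delta_n$). I would do this using the three-term Lanczos recurrence implicit in $D\mathcal{Q}_+=\mathcal{Q}_+ T_+$: writing $\mathfrak{q}_j':=\mathfrak{q}_j\delta_j$ one has $\beta_j\mathfrak{q}_{j+1}'=D\mathfrak{q}_j'-\alpha_j\mathfrak{q}_j'-\beta_{j-1}\mathfrak{q}_{j-1}'$ with each $\beta_j>0$, so an easy induction gives $\mathfrak{q}_j'\in\mathrm{span}\{\mathfrak{q}_1',D\mathfrak{q}_1',\cdots,D^{j-1}\mathfrak{q}_1'\}$. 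Orthonormality of $\mathfrak{q}_1',\ldots,\mathfrak{q}_n'$ then forces the Krylov matrix $[\mathfrak{q}_1'\ D\mathfrak{q}_1'\ \cdots\ D^{n-1}\mathfrak{q}_1']$ to have rank $n$, and scaling by $\delta_1\in\{\pm 1\}$ does not change its rank. Lemma~\ref{uniqueq1} therefore applies and gives $\textbf{Alg1}_{\mathcal{Q}_+}(D,\mathfrak{q}_1\delta_1)=\mathcal{Q}_+=\mathcal{Q}_T J$, so $\mathcal{Q}_T=\textbf{Alg1}_{\mathcal{Q}_+}(D,\mathfrak{q}_1\delta_1)J$. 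The argument for the last column is identical after a reversal: the same Krylov computation starting from $\mathfrak{q}_n'$ uses $\beta_{j-1}\mathfrak{q}_{j-1}'=D\mathfrak{q}_j'-\alpha_j\mathfrak{q}_j'-\beta_j\mathfrak{q}_{j+1}'$, Lemma~\ref{uniqueqn} then yields $\textbf{Alg2}_{\mathcal{Q}_+}(D,\mathfrak{q}_n\delta_n)=\mathcal{Q}_+=\mathcal{Q}_T J$, and rearranging gives the second identity.

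The only subtle step is verifying the Krylov rank condition, since the algorithms genuinely can fail on arbitrary $(D,\mathfrak{q})$ pairs (as noted in the remark after Algorithm~\ref{algo2}). However, the unreducedness of $T$ is exactly what propels the Lanczos recurrence forward without early termination, and that is the single property I rely on to close the argument.
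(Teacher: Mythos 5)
Your proof is correct and follows essentially the same route as the paper's: both absorb the signs of the subdiagonal entries of $T$ into a diagonal similarity $J$ with entries $\pm 1$ so that $T_{+}=JTJ$ is a Jacobi matrix, identify $\mathcal{Q}_{T}J$ as the orthogonal factor whose first and last columns are $\mathfrak{q}_{1}\delta_{1}$ and $\mathfrak{q}_{n}\delta_{n}$, and conclude by uniqueness of the Jacobi factorization. The only difference is one of bookkeeping: the paper cites Lemma~7.2.1 of Parlett for the existence of $J$ and then invokes Lemma~\ref{jacobiunique} directly, whereas you construct $J$ explicitly via $\delta_{j+1}=\mathrm{sign}(\beta_{j})\delta_{j}$ and verify the Krylov rank hypotheses of Lemmas~\ref{uniqueq1} and~\ref{uniqueqn} through the Lanczos three-term recurrence --- a correct, self-contained substitute for the cited results.
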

\begin{proof}
Suppose $
T=\begin{bmatrix}
\alpha_{1} &\beta_{1}  &       &0\\
\beta_{1}  &\alpha_{2} &\ddots \\
           &\ddots    &\ddots  &\beta_{n-1}\\
 0         &          &\beta_{n-1} &\alpha_{n} 
\end{bmatrix}$, where $\alpha_{j}\in\mathbb{R}$, $j=1,2,\cdots,n$ and $\beta_{j} \in\mathbb{R}$, $\beta_{j}\ne 0$, $j=1,2,\cdots,n-1$. Note that $\beta_{j}$ is not necessarily positive.  Let $T_{+}$ be the matrix obtained by replacing each $\beta_{j}$ by $|\beta_{j}|$, $j=1,2,\cdots,n-1$. Then $T_{+}$ is a Jacobi matrix. According to Lemma~7.2.1 in~\cite{P98:book}, there exists a diagonal matrix of the form $J=\diag[\delta_{1},\cdots,\delta_{n}]$, $\delta_{j}=\pm 1$, $j=1,2,\cdots,n$, such that 
\begin{align}
T_{+}=J TJ=J\mathcal{Q}_{T}^{\top}D\mathcal{Q}_{T}J. \label{unreduced11}
\end{align} 
The first column of $\mathcal{Q}_{T}J$ is  $\mathfrak{q}_{1}\delta_{1}$ and the last column of $\mathcal{Q}_{T}J$ is  $\mathfrak{q}_{n}\delta_{n}$. Using Lemma~\ref{jacobiunique},  we   obtain
\begin{numcases}{}
T_{+}=\textbf{Alg1}_{T_{+}}(D, \mathfrak{q}_{1}\delta_{1})=\textbf{Alg2}_{T_{+}}(D, \mathfrak{q}_{n}\delta_{n}), \notag\\
 \mathcal{Q}_{T}J = \textbf{Alg1}_{\mathcal{Q}_{+}}(D, \mathfrak{q}_{1}\delta_{1})= \textbf{Alg2}_{\mathcal{Q}_{+}}(D, \mathfrak{q}_{n}\delta_{n}).  \label{unreduced12}
\end{numcases}  
It follows from~\eqref{unreduced12} that $\mathcal{Q}_{T}=\textbf{Alg1}_{\mathcal{Q}_{+}}(D, \mathfrak{q}_{1}\delta_{1})J=\textbf{Alg2}_{\mathcal{Q}_{+}}(D, \mathfrak{q}_{n}\delta_{n})J$. 
\end{proof}

\begin{lem} \label{lemdiagonal}
Given $\bar{z}\ne i$ and $\bar{z}\in\Lambda$, let $\mathpzc{Z} \in \left\{\begin{bmatrix}\frac{\bar{z}^{2}-1}{2\bar{z}} &\frac{\bar{z}^{2}+1}{2\bar{z}}\\ \frac{\bar{z}^{2}+1}{2\bar{z}} &\frac{\bar{z}^{2}-1}{2\bar{z}} \end{bmatrix},\quad\begin{bmatrix}\frac{\bar{z}^{2}-1}{2\bar{z}} &-\frac{\bar{z}^{2}+1}{2\bar{z}}\\ -\frac{\bar{z}^{2}+1}{2\bar{z}} &\frac{\bar{z}^{2}-1}{2\bar{z}} \end{bmatrix} \right\}$. Suppose $\mathpzc{Z}\mathpzc{R}\mathpzc{Z}=-\mathpzc{R}$, where $\mathpzc{R}$ is a real diagonal matrix. Then $\mathpzc{R}$ is of the form  $\mathpzc{R}=\begin{bmatrix}
\tau  &0\\
0 &-\tau
\end{bmatrix}$, where $\tau\in\mathbb{R}$. 
\end{lem}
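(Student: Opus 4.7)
The plan is to proceed by direct computation. I would first parametrize the problem by writing $a \triangleq \frac{\bar{z}^{2}-1}{2\bar{z}}$ and $b \triangleq \pm\frac{\bar{z}^{2}+1}{2\bar{z}}$ (choosing the sign to match whichever of the two allowed forms $\mathpzc{Z}$ takes), so that in both cases $\mathpzc{Z} = \left[\begin{smallmatrix} a & b \\ b & a\end{smallmatrix}\right]$. I would also write $\mathpzc{R} = \diag[r_{1},\,r_{2}]$ with $r_{1},r_{2}\in\mathbb{R}$.

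Next, I would multiply out $\mathpzc{Z}\mathpzc{R}\mathpzc{Z}$ explicitly. The off-diagonal entry comes out to $ab(r_{1}+r_{2})$, and the diagonal entries come out to $a^{2}r_{1}+b^{2}r_{2}$ and $b^{2}r_{1}+a^{2}r_{2}$. Setting $\mathpzc{Z}\mathpzc{R}\mathpzc{Z}=-\mathpzc{R}$ then gives
\begin{align*}
ab(r_{1}+r_{2}) &= 0, \\
a^{2}r_{1}+b^{2}r_{2} &= -r_{1}, \\
b^{2}r_{1}+a^{2}r_{2} &= -r_{2}.
\end{align*}

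The decisive step is to show $a\ne 0$ and $b\ne 0$, so that the off-diagonal equation forces $r_{2}=-r_{1}$. Here I would use the hypotheses $\bar{z}\in\Lambda$ and $\bar{z}\ne i$: since $\im(\bar{z})>0$, we have $\bar{z}\ne\pm 1$, hence $a=\frac{\bar{z}^{2}-1}{2\bar{z}}\ne 0$; and $b=0$ would require $\bar{z}^{2}=-1$, i.e.\ $\bar{z}=\pm i$, but $\im(\bar{z})>0$ together with $\bar{z}\ne i$ rules this out. This is the only place the hypothesis $\bar{z}\ne i$ is actually used, and it is the main (small) obstacle in the argument.

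Finally, setting $\tau = r_{1}$ and $r_{2}=-\tau$, I would verify that the two diagonal equations are automatically satisfied via the identity $a^{2}-b^{2}=-1$, which follows from
\begin{equation*}
a^{2}-b^{2}=\frac{(\bar{z}^{2}-1)^{2}-(\bar{z}^{2}+1)^{2}}{4\bar{z}^{2}} = \frac{(-2)(2\bar{z}^{2})}{4\bar{z}^{2}} = -1.
\end{equation*}
This yields $\mathpzc{R}=\diag[\tau,-\tau]$, completing the proof. Note the argument is independent of which of the two choices for $\mathpzc{Z}$ is made, since the sign of $b$ enters only through $b^{2}$ in the diagonal equations and through $ab$ in the off-diagonal equation (whose vanishing is not affected by the sign).
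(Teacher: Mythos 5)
Your proof is correct and follows essentially the same route as the paper's: write $\mathpzc{Z}$ in the common form, expand $\mathpzc{Z}\mathpzc{R}\mathpzc{Z}$, and use the off-diagonal equation together with $z_{11}z_{12}\ne 0$ (which the paper leaves as "straightforward" and you spell out) to force $\tau_{2}=-\tau_{1}$. Your closing consistency check via $a^{2}-b^{2}=-1$ is a nice but unnecessary addition, since the lemma only asserts the form of $\mathpzc{R}$ under the given hypothesis.
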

\begin{proof}
We write  $\mathpzc{Z}=\begin{bmatrix}
z_{11} &z_{12}\\
z_{12} &z_{11}
\end{bmatrix}$, where $z_{11}=\frac{\bar{z}^{2}-1}{2\bar{z}}$ and $z_{12}=\pm \frac{\bar{z}^{2}+1}{2\bar{z}}$. By assumption,  $\mathpzc{R}$ is a real diagonal matrix, so we write $\mathpzc{R}=\begin{bmatrix}
\tau_{1} &0\\
0 &\tau_{2}
\end{bmatrix}$, where $\tau_{1}\in\mathbb{R}$ and $\tau_{2}\in\mathbb{R}$. Then it follows from $\mathpzc{Z}\mathpzc{R}\mathpzc{Z}=-\mathpzc{R}$ that 
\begin{align*}
\begin{bmatrix}
z_{11} &z_{12}\\
z_{12} &z_{11}
\end{bmatrix}\begin{bmatrix}
\tau_{1} &0\\
0 &\tau_{2}
\end{bmatrix}\begin{bmatrix}
z_{11} &z_{12}\\
z_{12} &z_{11}
\end{bmatrix}&=-\begin{bmatrix}
\tau_{1} &0\\
0 &\tau_{2}
\end{bmatrix},\\
\begin{bmatrix}
z_{11}\tau_{1} &z_{12}\tau_{2}\\
z_{12}\tau_{1} &z_{11}\tau_{2}
\end{bmatrix}\begin{bmatrix}
z_{11} &z_{12}\\
z_{12} &z_{11}
\end{bmatrix}&=-\begin{bmatrix}
\tau_{1} &0\\
0 &\tau_{2}
\end{bmatrix},\\
\begin{bmatrix}
z_{11}^{2}\tau_{1}+z_{12}^{2}\tau_{2} &z_{11}z_{12}\left(\tau_{1}+\tau_{2}\right)\\
z_{11}z_{12}\left(\tau_{1}+\tau_{2}\right) &z_{12}^{2}\tau_{1}+z_{11}^{2}\tau_{2}
\end{bmatrix}&=-\begin{bmatrix}
\tau_{1} &0\\
0 &\tau_{2}
\end{bmatrix}.
\end{align*}
Hence we have $z_{11}z_{12}\left(\tau_{1}+\tau_{2}\right)=0$. Since $\bar{z}\ne i$ and $\bar{z}\in\Lambda$, it is straightforward to show $z_{11}z_{12}\ne 0$. Therefore, we have $\tau_{1}=-\tau_{2}$. This completes the proof. 
 \end{proof}
 
  \begin{lem} \label{lemeigen}
Given $\bar{z}\ne i$ and $\bar{z}\in\Lambda$, let $\mathpzc{Z} \in \left\{\begin{bmatrix}\frac{\bar{z}^{2}-1}{2\bar{z}} &\frac{\bar{z}^{2}+1}{2\bar{z}}\\ \frac{\bar{z}^{2}+1}{2\bar{z}} &\frac{\bar{z}^{2}-1}{2\bar{z}} \end{bmatrix},\quad\begin{bmatrix}\frac{\bar{z}^{2}-1}{2\bar{z}} &-\frac{\bar{z}^{2}+1}{2\bar{z}}\\ -\frac{\bar{z}^{2}+1}{2\bar{z}} &\frac{\bar{z}^{2}-1}{2\bar{z}} \end{bmatrix} \right\}$. Suppose $\xi$ is a real eigenvector of $\mathpzc{Z}$. Then $\xi$ is of the form  $\xi=\begin{bmatrix}
\tau \\
 \pm\tau
\end{bmatrix}$, where $\tau\ne 0$ and $\tau\in \mathbb{R}$. 
\end{lem}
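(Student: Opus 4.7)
The plan is to exploit the special $2\times 2$ structure of $\mathpzc{Z}$, which takes the form $\begin{bmatrix}a & b\\ b & a\end{bmatrix}$ with $a=\frac{\bar{z}^{2}-1}{2\bar{z}}$ and $b=\pm\frac{\bar{z}^{2}+1}{2\bar{z}}$. The first thing I will check is that $b\ne 0$. Indeed, $b=0$ would force $\bar{z}^{2}+1=0$, i.e., $\bar{z}=\pm i$; combined with $\bar{z}\in\Lambda$ (so $\im(\bar{z})>0$) this would yield $\bar{z}=i$, contradicting the hypothesis $\bar{z}\ne i$. This is the one delicate point where the assumptions on $\bar{z}$ are used, just as in the proof of Lemma~\ref{lemdiagonal}.

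Next I will write a real eigenvector as $\xi=[\xi_{1}\;\xi_{2}]^{\top}\in\mathbb{R}^{2}$ with (possibly complex) eigenvalue $\lambda$, so that the eigenvalue equation $\mathpzc{Z}\xi=\lambda\xi$ becomes
\begin{align*}
a\xi_{1}+b\xi_{2} &=\lambda\xi_{1},\\
b\xi_{1}+a\xi_{2} &=\lambda\xi_{2}.
\end{align*}
Adding and subtracting these two equations gives $(a+b)(\xi_{1}+\xi_{2})=\lambda(\xi_{1}+\xi_{2})$ and $(a-b)(\xi_{1}-\xi_{2})=\lambda(\xi_{1}-\xi_{2})$. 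If both $\xi_{1}+\xi_{2}$ and $\xi_{1}-\xi_{2}$ were nonzero, we could cancel them to obtain $\lambda=a+b=a-b$, which would force $b=0$, contradicting what was established above. Hence at least one of $\xi_{1}+\xi_{2}$, $\xi_{1}-\xi_{2}$ must vanish, and this gives $\xi_{2}=\pm\xi_{1}$.

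Setting $\tau:=\xi_{1}\in\mathbb{R}$, we conclude $\xi=[\tau\;\pm\tau]^{\top}$. Since an eigenvector is by definition nonzero, $\tau\ne 0$, which completes the argument. There is no serious obstacle: the only subtlety is the verification $b\ne 0$ from the hypotheses on $\bar{z}$, after which the rest of the argument is the standard observation that a symmetric matrix of the form $\begin{bmatrix}a & b\\ b & a\end{bmatrix}$ with $b\ne 0$ has eigenvectors parallel to $[1\;1]^{\top}$ and $[1\;-1]^{\top}$, both of which happen to be real regardless of whether $a,b$ are real or complex.
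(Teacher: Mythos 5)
Your proof is correct and follows essentially the same route as the paper's: write $\mathpzc{Z}$ as $\begin{bmatrix}a&b\\ b&a\end{bmatrix}$, verify $b\ne 0$ from $\bar z\ne i$ and $\im(\bar z)>0$, and add the two components of the eigenvalue equation to force $\xi_2=\pm\xi_1$ (the paper adds the equations and substitutes $\lambda=a+b$ back into the first one, whereas you add and subtract; this is only a cosmetic difference). No gaps.
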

\begin{proof}
We  write  $\mathpzc{Z}=\begin{bmatrix}
z_{11} &z_{12}\\
z_{12} &z_{11}
\end{bmatrix}$, where $z_{11}=\frac{\bar{z}^{2}-1}{2\bar{z}}$ and $z_{12}=\pm \frac{\bar{z}^{2}+1}{2\bar{z}}$.
Suppose $\xi=\begin{bmatrix}
\tau_{1} \\
\tau_{2}
\end{bmatrix}$, where $\tau_{1}\in\mathbb{R}$ and $\tau_{2}\in\mathbb{R}$, is a real eigenvector of $\mathpzc{Z}$, i.e.,  $\begin{bmatrix}
z_{11} &z_{12}\\
z_{12} &z_{11}
\end{bmatrix}\xi=\lambda \xi$. Then we have
\begin{numcases}{}
z_{11}\tau_{1}+z_{12}\tau_{2}=\lambda\tau_{1}, \label{lem11}\\
z_{12}\tau_{1}+z_{11}\tau_{2}=\lambda\tau_{2}. \label{lem12}
\end{numcases}
Adding~\eqref{lem11} and~\eqref{lem12} gives $
\left(z_{11}+z_{12}\right)\left(\tau_{1}+\tau_{2}\right)=\lambda \left(\tau_{1}+\tau_{2}\right)$. 
If $\tau_{1}\ne -\tau_{2}$, then it follows that $\lambda= z_{11}+z_{12}$. 
Substituting this into~\eqref{lem11}, we have $z_{12}\tau_{2}=z_{12}\tau_{1}$. 
Since $\bar{z}\ne i $ and $\bar{z}\in\Lambda$, it is straightforward to show $z_{12}\ne 0$. As a result, we have $\tau_{1}= \tau_{2}$. Therefore, we conclude that  $\tau_{2}=\pm \tau_{1} \ne 0$. This completes the proof. 
\end{proof}

 \noindent\textbf{Proof of Theorem~\ref{thm1}}
 \begin{proof}
\emph{Necessity}. It has been proved in Proposition~\ref{rmkvacuum} that the $(2\aleph+1)$-mode vacuum state can be prepared by a quantum harmonic  oscillator chain subject to the  constraints~\ref{constraint1} and~\ref{constraint2}. So we first show that the corresponding Gaussian graph matrix $Z=iI_{2\aleph+1}$ 
can be written in the form of~\eqref{thmZ}. Let us choose $\bar{z}=i$, $\tilde{Z}_{j}=iI_{2}$, $j=1,2,\cdots, \aleph$, $\mathcal{P}_{2}=I_{2\aleph}$, $\bar{R}=\diag\left[
1,\;\; 
-1,\;\; 
2,\; \;
-2,\;\; 
\cdots ,\;\; 
\aleph,\; \;
-\aleph\right]$, $\wp=\begin{bmatrix}1 &-1 &1 &-1 &\cdots &1 &-1\end{bmatrix}^{\top}$, and $\bar{\delta}_{j}=\tilde{\delta}_{j}=1$, $j=1,2,\cdots, \aleph$. The resulting matrix $Z$ calculated from~\eqref{thmZ} is exactly $Z=iI_{2\aleph+1}$. Therefore, the $(2\aleph+1)$-mode vacuum state is included in the parametrization~\eqref{thmZ} as a special case. 

Next we consider $(2\aleph+1)$-mode non-vacuum pure Gaussian states. Suppose a $(2\aleph+1)$-mode non-vacuum pure Gaussian state is  generated in a $(2\aleph+1)$-mode linear quantum harmonic  oscillator chain subject to the two constraints~\ref{constraint1} and~\ref{constraint2}. We will show that the Gaussian graph matrix $Z$ of this non-vacuum pure Gaussian state can be written in the form of~\eqref{thmZ}.  Since only the $(\aleph+1)$th oscillator of the chain is coupled to the reservoir, it follows from~\eqref{C} that the matrix $P$ is of the form $
P =\begin{bmatrix}
0_{1\times \aleph} &\tau_{p} &0_{1\times \aleph}
\end{bmatrix}^{\top}$, where $\tau_{p}\ne 0$ and $\tau_{p}\in \mathbb{C}$ 
and  the Gaussian graph matrix $Z$ is of the form 
\begin{align}
Z=\begin{bmatrix}
Z_{11} &0_{\aleph\times 1} &Z_{12}\\
0_{1\times \aleph} &\bar{z}  &0_{1\times \aleph}\\
Z_{12}^{\top} &0_{\aleph\times 1} &Z_{22} 
\end{bmatrix}, \label{Zform}
\end{align} 
where $\bar{z}=Z_{\left((\aleph+1),(\aleph+1)\right)}$ is the $\left((\aleph+1),(\aleph+1)\right)$ element of the Gaussian graph matrix $Z$. Since $\im(Z)>0$,  it follows that $\bar{z}\in\Lambda$. 
The constraint~\ref{constraint1} implies that the matrix $G$  in~\eqref{G} satisfies 
\begin{numcases}{}
-XR+\Gamma Y^{-1}=0,\label{GammaXRY}\\
XRX+YRY-\Gamma Y^{-1}X-XY^{-1}\Gamma^{\top}=R,   \label{R}
\end{numcases}
where $R=\begin{bmatrix}
\omega_{1} &g_{1}  &       &0\\
g_{1}  &\omega_{2}  &\ddots \\
           &\ddots    &\ddots  &g_{2\aleph}\\
 0         &          &g_{2\aleph} &\omega_{2\aleph+1} 
\end{bmatrix}$. Since the system  generates the state, we have $g_{j}\ne 0$, $j=1,2,\cdots, 2\aleph$, since otherwise, the system contains an isolated quantum subsystem which is not strictly stable. As a result, $R$ is an \emph{unreduced} real symmetric tridiagonal matrix. 
From~\eqref{GammaXRY}, we have $\Gamma=XRY$. Substituting this into~\eqref{R} yields $YRY-XRX=R$. Combining this with $\Gamma+\Gamma^{\top}=0$ gives $ZRZ=-R$. 
From~\eqref{Zform}, we note that 
\begin{align}
 Z  =\mathcal{P}_{1}^{\top}\begin{bmatrix}
 \bar{z}  &0_{1\times 2\aleph}\\
0_{2\aleph\times 1} &\breve{Z}
\end{bmatrix}\mathcal{P}_{1}, \label{Ztran}
\end{align} where $\mathcal{P}_{1}=\begin{bmatrix}
0_{1\times \aleph} &1 &0_{1\times \aleph}\\
I_{\aleph} &0_{\aleph\times 1} &0_{\aleph\times \aleph}\\
0_{\aleph\times \aleph} &0_{\aleph\times 1} &I_{\aleph}
\end{bmatrix}$ and $\breve{Z}\triangleq\begin{bmatrix}
Z_{11} &Z_{12}\\
Z_{12}^{\top} &Z_{22} 
\end{bmatrix}$. We also have 
\begin{align}
 R= \mathcal{P}_{1}^{\top}\begin{bmatrix}
\omega_{\aleph+1 } &\breve{R}_{21}^{\top}  \\
\breve{R}_{21}     &\breve{R}_{22}   
\end{bmatrix}\mathcal{P}_{1}, \label{Rtran}
\end{align}  where $\breve{R}_{22}=
\left[ {\begin{array}{*{20}c}
\begin{matrix}
\omega_{1} &g_{1}  &       &0\\
g_{1}  &\omega_{2}  &\ddots \\
           &\ddots    &\ddots  &g_{\aleph-1}\\
 0         &          &g_{\aleph-1} &\omega_{\aleph} 
\end{matrix} &\vline &0_{\aleph\times \aleph}   \\
\hline
0_{\aleph\times \aleph} &\vline &\begin{matrix}
\omega_{\aleph+2} &g_{\aleph+2}  &       &0\\
g_{\aleph+2}  &\omega_{\aleph+3}  &\ddots \\
           &\ddots    &\ddots  &g_{2\aleph}\\
 0         &          &g_{2\aleph} &\omega_{2\aleph+1} 
\end{matrix} 
\end{array} } \right]$ and  $\breve{R}_{21}=\begin{bmatrix}
0_{ (\aleph-1)  \times 1} \\ g_{ \aleph } \\ g_{ \aleph+1 } \\ 0_{(\aleph-1)\times 1}
\end{bmatrix}$.  
Recall that $ZRZ=-R$. It follows from~\eqref{Ztran} and~\eqref{Rtran} that
\begin{align*}
\begin{bmatrix}
 \bar{z}  &0_{1\times 2\aleph}  \\
0_{2\aleph\times 1} &\breve{Z} 
\end{bmatrix}
\begin{bmatrix}
\omega_{\aleph+1 } &\breve{R}_{21}^{\top}  \\
\breve{R}_{21}    &\breve{R}_{22}    
\end{bmatrix}\begin{bmatrix}
 \bar{z}  &0_{1\times 2\aleph}  \\
0_{2\aleph\times 1} &\breve{Z} 
\end{bmatrix}=-\begin{bmatrix}
\omega_{\aleph+1 } &\breve{R}_{21}^{\top}  \\
\breve{R}_{21}    &\breve{R}_{22}    
\end{bmatrix}.
\end{align*} 
That is,
\begin{numcases}{}
 \bar{z}  \omega_{ \aleph+1 } \bar{z}  = - \omega_{ \aleph+1 }, \label{pf10}\\
\breve{Z} \breve{R}_{21} \bar{z}  = -\breve{R}_{21}, \label{pf11}\\
\breve{Z} \breve{R}_{22}  \breve{Z}  = -\breve{R}_{22} . \label{pf12}
\end{numcases}

Since $\breve{R}_{22} =\breve{R}_{22}^{\top}$ is a block diagonal matrix, it can be diagonalized as $
\breve{R}_{22} =\begin{bmatrix}
\mathcal{Q}_{11}^{\top}\ohill{R}_{11}\mathcal{Q}_{11}  &0_{\aleph\times \aleph}\\
0_{\aleph\times \aleph} &\mathcal{Q}_{22}^{\top}\ohill{R}_{22}\mathcal{Q}_{22}  
\end{bmatrix}$, where $\mathcal{Q}_{11}$ and $\mathcal{Q}_{22}$ are real orthogonal matrices, and $\ohill{R}_{11}$ and $\ohill{R}_{22}$ are real diagonal matrices. Let $\mathcal{Q}\triangleq \diag\big[
\mathcal{Q}_{11},\;\mathcal{Q}_{22}\big]$. Then we have $
\breve{R}_{22} = \mathcal{Q}^{\top} \begin{bmatrix}
\ohill{R}_{11}  &0_{\aleph\times \aleph}\\
 0_{\aleph\times \aleph} &\ohill{R}_{22}  
\end{bmatrix}\mathcal{Q}$. Let $\ohill{Z}\triangleq \mathcal{Q}\breve{Z} \mathcal{Q}^{\top}$. The equations~\eqref{pf11} and~\eqref{pf12} are transformed into
\begin{numcases}{}
\ohill{Z} \mathcal{Q}  \breve{R}_{21}\bar{z}  = -\mathcal{Q} \breve{R}_{21}, \label{pf21}\\
\ohill{Z} \begin{bmatrix}
\ohill{R}_{11}  &0_{\aleph\times \aleph}\\
 0_{\aleph\times \aleph} &\ohill{R}_{22}  
\end{bmatrix} \ohill{Z}  = - \begin{bmatrix}
\ohill{R}_{11}  &0_{\aleph\times \aleph} \\
 0_{\aleph\times \aleph} &\ohill{R}_{22}  
\end{bmatrix}. \label{pf22}
\end{numcases}

From \eqref{rankconstraint},  we know $(Q,\; P)$ is controllable. Recall that 
$Q =-iRY+Y^{-1}\Gamma=-iRY+Y^{-1}(-YRX)=-RZ 
 =-\mathcal{P}_{1}^{\top} \begin{bmatrix}
\omega_{\aleph+1 } &\breve{R}_{21}^{\top}  \\
\breve{R}_{21}    &\breve{R}_{22}    
\end{bmatrix} \begin{bmatrix}
 \bar{z}  &0_{1\times 2\aleph}\\
0_{2\aleph\times 1} &\breve{Z}
\end{bmatrix}\mathcal{P}_{1}$ and $P  =\big[
0_{1\times \aleph} \;\tau_{p}\; 0_{1\times \aleph}
\big]^{\top}$. 
It follows from Lemma~4 in~\cite{MWPY16:cdc} that $\left(-\begin{bmatrix}
\omega_{\aleph+1 } &\breve{R}_{21}^{\top}  \\
\breve{R}_{21}    &\breve{R}_{22}    
\end{bmatrix} \begin{bmatrix}
 \bar{z}  &0_{1\times 2\aleph}\\
0_{2\aleph\times 1} &\breve{Z}
\end{bmatrix},\;\mathcal{P}_{1}P \right)$ is controllable. Since $\mathcal{P}_{1}P=\begin{bmatrix}
\tau_{p} &0_{1\times 2\aleph}
\end{bmatrix}^{\top}$, it follows from Lemma~5 in~\cite{MWPY16:cdc} that $(-\breve{R}_{22}\breve{Z},\; -\breve{R}_{21}\bar{z})$ is controllable. By Lemma~6 in~\cite{MPW15:arxiv},  $-\breve{R}_{22} \breve{Z}$ is a non-derogatory matrix.   Then following similar arguments as in the proof of Theorem~1 in~\cite{MWPY16:cdc}, we obtain  
\begin{align*}
\ohill{Z}=\mathcal{P}_{2}^{\top}\bar{Z}\mathcal{P}_{2},\quad \bar{Z}=\diag[\tilde{Z}_{1},\cdots,\tilde{Z}_{ \aleph} ],
\end{align*}
where $\mathcal{P}_{2}\in\mathbb{R}^{ 2\aleph \times 2\aleph }$ is a permutation matrix, $\tilde{Z}_{j}\in \left\{\begin{bmatrix}\frac{\bar{z}^{2}-1}{2\bar{z}} &\frac{\bar{z}^{2}+1}{2\bar{z}}\\ \frac{\bar{z}^{2}+1}{2\bar{z}} &\frac{\bar{z}^{2}-1}{2\bar{z}} \end{bmatrix},\quad\begin{bmatrix}\frac{\bar{z}^{2}-1}{2\bar{z}} &-\frac{\bar{z}^{2}+1}{2\bar{z}}\\ -\frac{\bar{z}^{2}+1}{2\bar{z}} &\frac{\bar{z}^{2}-1}{2\bar{z}} \end{bmatrix} \right\}$, $j=1,2,\cdots,   \aleph$.  Then the equations~\eqref{pf21} and~\eqref{pf22} are transformed into  
\begin{numcases}{}
\bar{Z}\mathcal{P}_{2}\mathcal{Q} \breve{R}_{21}\bar{z}  = -\mathcal{P}_{2}\mathcal{Q} \breve{R}_{21},\label{pf31}\\
\bar{Z} \bar{R}\bar{Z}  = - \bar{R},  \label{pf32}
\end{numcases}
where $
\bar{R}\triangleq\mathcal{P}_{2}\begin{bmatrix}
\ohill{R}_{11}  &0_{\aleph\times \aleph}\\
 0_{\aleph\times \aleph} &\ohill{R}_{22}  
\end{bmatrix}\mathcal{P}_{2}^{\top}$ is a real diagonal matrix.  
By assumption, the $(2\aleph+1)$-mode pure Gaussian state generated in the quantum harmonic  oscillator chain is a non-vacuum state. If $\bar{z}= i$, from the analysis above we have $\tilde{Z}_{j}= iI_{2}$, $j=1,2,\cdots,  \aleph$. In this case, it can be further derived that $Z=iI_{2\aleph+1}$ which corresponds to the vacuum state. Hence we have $\bar{z}\ne i$. It follows from~\eqref{pf10} that $\omega_{ \aleph+1 }=0$. According to Lemma~\ref{lemdiagonal}, Eq.~\eqref{pf32} implies that $\bar{R}$ is of the form $
\bar{R}=\diag\left[
r_{1},\;\;
-r_{1},\;\;
r_{2},\;\;
-r_{2},\;\;
\cdots ,\;\;
r_{\aleph},\;\;
-r_{\aleph}\right]$,
where $r_{j}\in \mathbb{R}$, $j=1,2,\cdots, \aleph$. Next we show that $r_{j} \ne 0$, $j=1,2,\cdots, \aleph$, and $|r_{j}|\ne |r_{k}|$ whenever $j\ne k$. Suppose there exists $r_{j} = 0$. Then $-\bar{R}\bar{Z}$ has a  diagonal block $0_{2\times 2}$. In this case, it can be shown that $-\bar{R}\bar{Z}$ is  a derogatory matrix. But we already know that $-\breve{R}_{22} \breve{Z}=-\mathcal{Q}^{\top}\begin{bmatrix}
\ohill{R}_{11}  &0_{\aleph\times \aleph} \\
 0_{\aleph\times \aleph} &\ohill{R}_{22}  
\end{bmatrix}\ohill{Z}  \mathcal{Q}=-\mathcal{Q}^{\top}\mathcal{P}_{2}^{\top}\bar{R}\bar{Z}\mathcal{P}_{2}\mathcal{Q}$ is a non-derogatory matrix.  According to Lemma~5 in~\cite{MPW15:arxiv}, $-\bar{R}\bar{Z}$ must be a non-derogatory matrix. So we reach a contradiction.  Therefore, we have $r_{j} \ne 0$, $j=1,2,\cdots, \aleph$. To show $|r_{j}|\ne |r_{k}|$, for example, we assume 
$|r_{1}|=|r_{2}|\ne 0$. Then it follows from~\eqref{pf32} that
$
\diag\big[
\tilde{Z}_{1},\;\tilde{Z}_{2}\big]\diag[
r_{1}, \;\; -r_{1}, \;\;
r_{2}, \;\;-r_{2}]\diag\big[
\tilde{Z}_{1},\;\tilde{Z}_{2}\big]  = - \diag[
r_{1}, \;\; -r_{1}, \;\;
r_{2}, \;\;-r_{2}]$.
Then we have $\left(\diag[
r_{1}, \;\; -r_{1}, \;\;
r_{2}, \;\;-r_{2}] \diag\big[
\tilde{Z}_{1},\;\tilde{Z}_{2}\big]\right)^{2}=-r_{1}^{2}I_{4}$. Since $r_{1}\ne 0$,  it follows from Lemma~2 in~\cite{MPW15:arxiv}  that $\left(\diag[
r_{1}, \;\; -r_{1}, \;\;
r_{2}, \;\;-r_{2}]\diag\big[
\tilde{Z}_{1},\;\tilde{Z}_{2}\big]\right) $ is diagonalizable and its eigenvalues are either $ir_{1}$ or $-ir_{1}$. In this case,  $\left(\diag[
r_{1}, \;\; -r_{1}, \;\;
r_{2}, \;\;-r_{2}]\diag\big[
\tilde{Z}_{1},\;\tilde{Z}_{2}\big]\right) $  cannot be a non-derogatory matrix. Then it is straightforward to show that the whole matrix $-\bar{R}\bar{Z}$ is not a non-derogatory matrix. Again, we reach a contradiction.  Therefore, we  have  $|r_{j}|\ne |r_{k}|$ whenever $j\ne k$.
Since $\bar{R}=\mathcal{P}_{2}\begin{bmatrix}
\ohill{R}_{11}  &0_{\aleph\times \aleph}\\
 0_{\aleph\times \aleph} &\ohill{R}_{22}  
\end{bmatrix}\mathcal{P}_{2}^{\top}$, it follows that 
\begin{align*}
\ohill{R}_{11}&= \begin{bmatrix}
I_{\aleph} &0_{\aleph\times \aleph}
\end{bmatrix}\mathcal{P}_{2}^{\top}\bar{R}\mathcal{P}_{2}\begin{bmatrix}
I_{\aleph} \\
0_{\aleph\times \aleph}
\end{bmatrix},\\ 
\ohill{R}_{22}&= \begin{bmatrix}
0_{\aleph\times \aleph} &I_{\aleph}
\end{bmatrix}\mathcal{P}_{2}^{\top}\bar{R}\mathcal{P}_{2}\begin{bmatrix}
0_{\aleph\times \aleph}\\
I_{\aleph}
\end{bmatrix}. 
\end{align*}

Let $\wp\triangleq\mathcal{P}_{2}\mathcal{Q}\breve{R}_{21}$. Then it follows from~\eqref{pf31} that $\wp$ is a real eigenvector of $\bar{Z}$ associated with the eigenvalue $-\frac{1}{\bar{z}}$. We next show that $\wp$ has no zero entries. Recall that $\left(-\breve{R}_{22}\breve{Z},-\breve{R}_{21}\bar{z}\right)$ is controllable, i.e., $\left(-\mathcal{Q}^{\top}\mathcal{P}_{2}^{\top}\bar{R}\bar{Z}\mathcal{P}_{2}\mathcal{Q},-\breve{R}_{21}\bar{z}\right)$ is controllable. According to Lemma~4 in~\cite{MWPY16:cdc}, $\left(-\bar{R}\bar{Z},-\mathcal{P}_{2}\mathcal{Q}\breve{R}_{21}\bar{z}\right)$ is controllable. That is, $\left(-\bar{R}\bar{Z},-\wp\bar{z}\right)$ is controllable. Suppose $\wp=\begin{bmatrix}\wp_{1} &\wp_{2} &\cdots &&\wp_{2 \aleph}\end{bmatrix}^{\top}$.  It then  follows from Lemma~6 in~\cite{MWPY16:cdc} that 
$
\left(-\diag\big[
r_{j},\;-r_{j}\big]\tilde{Z}_{j},-\begin{bmatrix}\wp_{2j-1} & \wp_{2j} \end{bmatrix}^{\top}\bar{z}\right)$  is controllable,   $j=1,2,\cdots,\aleph$.   
Hence we have $\begin{bmatrix}\wp_{2j-1} & \wp_{2j} \end{bmatrix}^{\top}\ne 0_{2\times 1}$. 
Since $\wp$ is a real eigenvector of $\bar{Z}$ and $\begin{bmatrix}\wp_{2j-1} & \wp_{2j} \end{bmatrix}^{\top}\ne 0_{2\times 1}$, it follows that $\begin{bmatrix}\wp_{2j-1} & \wp_{2j} \end{bmatrix}^{\top}$ is a real eigenvector of $\tilde{Z}_{j}$. It follows from Lemma~\ref{lemeigen}  that $ \wp_{2j-1} =\pm\wp_{2j}$. Then we have $\wp_{2j-1} \ne 0$ and $\wp_{2j}\ne 0$,   $j=1,2,\cdots,\aleph$. That is, $\wp$ has no zero entries.

Let $\bar{\mathfrak{q}}_{\aleph}$ be the last column of  $\mathcal{Q}_{11}$  and let $\tilde{\mathfrak{q}}_{1}$ be the first column of $\mathcal{Q}_{22}$. Recall that $\breve{R}_{21}=\begin{bmatrix}
0_{(\aleph-1) \times 1}\\ g_{ \aleph } \\   g_{ \aleph+1 }\\ 0_{(\aleph-1)\times 1}
\end{bmatrix}$. So we have $\wp=\mathcal{P}_{2}\mathcal{Q}\breve{R}_{21}= \mathcal{P}_{2}\begin{bmatrix}
\bar{\mathfrak{q}}_{\aleph}g_{\aleph}\\
\tilde{\mathfrak{q}}_{1} g_{ \aleph+1 }
\end{bmatrix}$. Then it follows that
$\bar{\mathfrak{q}}_{\aleph}= \frac{\frac{1}{g_{\aleph}}\begin{bmatrix}
I_{\aleph} &0_{\aleph\times \aleph}
\end{bmatrix}\mathcal{P}_{2}^{\top}\wp}{\norm{\frac{1}{g_{\aleph}} \begin{bmatrix}
I_{\aleph} &0_{\aleph\times \aleph}
\end{bmatrix}\mathcal{P}_{2}^{\top}\wp}}=\pm \frac{ \begin{bmatrix}
I_{\aleph} &0_{\aleph\times \aleph}
\end{bmatrix}\mathcal{P}_{2}^{\top}\wp}{\norm{  \begin{bmatrix}
I_{\aleph} &0_{\aleph\times \aleph}
\end{bmatrix}\mathcal{P}_{2}^{\top}\wp}}$, and  $
\tilde{\mathfrak{q}}_{1}= \frac{\frac{1}{g_{\aleph+1}}\begin{bmatrix}
0_{\aleph\times \aleph} & I_{\aleph}
\end{bmatrix}\mathcal{P}_{2}^{\top}\wp}{\norm{\frac{1}{g_{\aleph+1}}\begin{bmatrix}
0_{\aleph\times \aleph} & I_{\aleph}
\end{bmatrix}\mathcal{P}_{2}^{\top}\wp}}
= \pm\frac{ \begin{bmatrix}
0_{\aleph\times \aleph} & I_{\aleph}
\end{bmatrix}\mathcal{P}_{2}^{\top}\wp}{\norm{ \begin{bmatrix}
0_{\aleph\times \aleph} & I_{\aleph}
\end{bmatrix}\mathcal{P}_{2}^{\top}\wp}}$. 
Recall that $
\breve{R}_{22} =\begin{bmatrix}
\mathcal{Q}_{11}^{\top}\ohill{R}_{11}\mathcal{Q}_{11}  &0_{\aleph\times \aleph}\\
0_{\aleph\times \aleph} &\mathcal{Q}_{22}^{\top}\ohill{R}_{22}\mathcal{Q}_{22}  
\end{bmatrix}$ and both $\mathcal{Q}_{11}^{\top}\ohill{R}_{11}\mathcal{Q}_{11}$ and $\mathcal{Q}_{22}^{\top}\ohill{R}_{22}\mathcal{Q}_{22}$ are \emph{unreduced} real symmetric tridiagonal matrices.  Using Lemma~\ref{lemunreduced}, there exist $\bar{J}=\diag[\bar{\delta}_{1},\cdots,\bar{\delta}_{\aleph}]$, $\bar{\delta}_{j}=\pm 1$ and $\tilde{J}=\diag[\tilde{\delta}_{1},\cdots,\tilde{\delta}_{\aleph}]$, $\tilde{\delta}_{j}=\pm 1$, such that
$\mathcal{Q}_{11}=\textbf{Alg2}_{\mathcal{Q}_{+}}(\ohill{R}_{11}, \bar{\mathfrak{q}}_{\aleph}\bar{\delta}_{\aleph})\bar{J}$, and $\mathcal{Q}_{22} =\textbf{Alg1}_{\mathcal{Q}_{+}}(\ohill{R}_{22}, \tilde{\mathfrak{q}}_{1}\tilde{\delta}_{1})\tilde{J}$. 
Combining all the results above, we conclude that the Gaussian graph matrix $Z$ of the non-vacuum pure Gaussian state  satisfies
\begin{align*}
Z=\mathcal{P}_{1}^{\top}\begin{bmatrix}
\bar{z} &0_{1\times 2\aleph}\\
0_{ 2\aleph \times 1 } & \mathcal{Q}^{\top}\mathcal{P}_{2}^{\top} \bar{Z}  \mathcal{P}_{2} \mathcal{Q}
\end{bmatrix}\mathcal{P}_{1},\; \bar{Z}=\diag[\tilde{Z}_{1},\cdots,\tilde{Z}_{ \aleph }],  
\end{align*}
where $ \bar{z}\in \Lambda $ and $ \bar{z}\ne i $, $\tilde{Z}_{j}\in\left\{\begin{bmatrix}\frac{\bar{z}^{2}-1}{2\bar{z}} &\frac{\bar{z}^{2}+1}{2\bar{z}}\\ \frac{\bar{z}^{2}+1}{2\bar{z}} &\frac{\bar{z}^{2}-1}{2\bar{z}} \end{bmatrix},\; \;\begin{bmatrix}\frac{\bar{z}^{2}-1}{2\bar{z}} &-\frac{\bar{z}^{2}+1}{2\bar{z}}\\ -\frac{\bar{z}^{2}+1}{2\bar{z}} &\frac{\bar{z}^{2}-1}{2\bar{z}} \end{bmatrix} \right\}$, $j=1,2,\cdots, \aleph$, $\mathcal{P}_{1}=\begin{bmatrix}
0_{1\times \aleph} &1 &0_{1\times \aleph}\\
I_{\aleph} &0_{\aleph\times 1} &0_{\aleph\times \aleph}\\
0_{\aleph\times \aleph} &0_{\aleph\times 1} &I_{\aleph}
\end{bmatrix}$, $\mathcal{P}_{2}$ is a $2\aleph\times 2\aleph$ permutation matrix, $\mathcal{Q}=\diag\left[
\mathcal{Q}_{11},\;\mathcal{Q}_{22} 
\right]$ is a real orthogonal matrix with 
\begin{align*}
\mathcal{Q}_{11}&=\textbf{Alg2}_{\mathcal{Q}_{+}}(\ohill{R}_{11}, \bar{\mathfrak{q}}_{\aleph}\bar{\delta}_{\aleph})\diag\left[\bar{\delta}_{1},\; \cdots,\;\bar{\delta}_{\aleph}\right],\; \bar{\delta}_{j}=\pm 1,  \\
\mathcal{Q}_{22}&=\textbf{Alg1}_{\mathcal{Q}_{+}}(\ohill{R}_{22}, \tilde{\mathfrak{q}}_{1}\tilde{\delta}_{1})\diag\left[\tilde{\delta}_{1},\;\; \cdots,\;\;\tilde{\delta}_{\aleph}\right], \; \tilde{\delta}_{j}=\pm 1,  \\
\ohill{R}_{11}&= \begin{bmatrix}
I_{\aleph} &0_{\aleph\times \aleph}
\end{bmatrix}\mathcal{P}_{2}^{\top}\bar{R}\mathcal{P}_{2}\begin{bmatrix}
I_{\aleph} &0_{\aleph\times \aleph}
\end{bmatrix}^{\top}, \\ 
\ohill{R}_{22}&= \begin{bmatrix}
0_{\aleph\times \aleph} &I_{\aleph}
\end{bmatrix}\mathcal{P}_{2}^{\top}\bar{R}\mathcal{P}_{2}\begin{bmatrix}
0_{\aleph\times \aleph} &I_{\aleph}
\end{bmatrix}^{\top}, \\
\bar{R}&=\diag\left[
r_{1},\;\; 
-r_{1},\;\; 
r_{2},\; \;
-r_{2},\;\; 
\cdots ,\;\; 
r_{\aleph},\; \;
-r_{\aleph}\right], \notag\\
&\text{with}\; r_{j}\in \mathbb{R}, \; \;  r_{j} \ne 0,\;\;  |r_{j}|\ne |r_{k}|\;  \text{ whenever}\;\; j\ne k,  \\
\bar{\mathfrak{q}}_{\aleph}&= \pm \frac{ \begin{bmatrix}
I_{\aleph} &0_{\aleph\times \aleph}
\end{bmatrix}\mathcal{P}_{2}^{\top}\wp}{\norm{  \begin{bmatrix}
I_{\aleph} &0_{\aleph\times \aleph}
\end{bmatrix}\mathcal{P}_{2}^{\top}\wp}},  \\
\tilde{\mathfrak{q}}_{1}&=  \pm\frac{ \begin{bmatrix}
0_{\aleph\times \aleph} & I_{\aleph}
\end{bmatrix}\mathcal{P}_{2}^{\top}\wp}{\norm{ \begin{bmatrix}
0_{\aleph\times \aleph} & I_{\aleph}
\end{bmatrix}\mathcal{P}_{2}^{\top}\wp}}, \\
\wp&\in \mathbb{R}^{2 \aleph \times 1} \text{  is a real eigenvector having no zero entries} \notag  \\ 
&\quad \text{ associated with the eigenvalue  $-\frac{1}{\bar{z}}$ of  $\bar{Z}$.}  
\end{align*}
This completes the necessity proof.

\emph{Sufficiency}. We prove the sufficiency by construction. We will construct a quantum harmonic  oscillator chain that satisfies the constraints~\ref{constraint1} and~\ref{constraint2}, and also generates the pure Gaussian state specified by~\eqref{thmZ}.  Since  $\wp$ has no zero entries, it follows from~\eqref{qn} and~\eqref{q1} that $\bar{\mathfrak{q}}_{\aleph}$ and $\tilde{\mathfrak{q}}_{1}$ have no zero entries. Since $r_{j}\in \mathbb{R}$, $r_{j} \ne 0$, and $|r_{j}|\ne |r_{k}|$ whenever $j \ne k$, it follows from~\eqref{themR11} and~\eqref{themR22} that $\ohill{R}_{11}$ and $\ohill{R}_{22}$ are both  real diagonal matrices with distinct nonzero diagonal entries. Using Lemma~6 in~\cite{MWPY16:cdc}, it follows that $(\ohill{R}_{11}, \bar{\mathfrak{q}}_{\aleph}\bar{\delta}_{n})$ and $(\ohill{R}_{22}, \tilde{\mathfrak{q}}_{1}\tilde{\delta}_{1})$ are both controllable. It follows from  Lemma~\ref{uniqueqn} that the matrix $\ohill{R}_{11}$ and the vector $\bar{\mathfrak{q}}_{\aleph}\bar{\delta}_{\aleph}$ uniquely determine a real orthogonal matrix $\textbf{Alg2}_{\mathcal{Q}_{+}}(\ohill{R}_{11}, \bar{\mathfrak{q}}_{\aleph}\bar{\delta}_{\aleph})$. Similarly, it follows from Lemma~\ref{uniqueq1} that the matrix $\ohill{R}_{22}$ and the vector $\tilde{\mathfrak{q}}_{1}\tilde{\delta}_{1}$ uniquely determine  a real orthogonal matrix $\textbf{Alg1}_{\mathcal{Q}_{+}}(\ohill{R}_{22}, \tilde{\mathfrak{q}}_{1}\tilde{\delta}_{1})$. Therefore, the matrices $\mathcal{Q}_{11}$ and $\mathcal{Q}_{22}$ in \eqref{Q11} and~\eqref{Q22} are well defined. 
Let us choose $
R=\mathcal{P}_{1}^{\top}\begin{bmatrix}
0 &\breve{R}_{21}^{\top}\\
\breve{R}_{21} & \mathcal{Q}^{\top}\mathcal{P}_{2}^{\top} \bar{R}  \mathcal{P}_{2} \mathcal{Q}
\end{bmatrix}\mathcal{P}_{1}$,
where  $\breve{R}_{21}=\begin{bmatrix}
0_{(\aleph-1) \times 1}\\
\bar{\mathfrak{q}}_{\aleph}^{\top} \begin{bmatrix}I_{\aleph} &0_{\aleph\times \aleph}
\end{bmatrix}\mathcal{P}_{2}^{\top}\wp \\
\tilde{\mathfrak{q}}_{1}^{\top} \begin{bmatrix}
0_{\aleph\times \aleph} & I_{\aleph}
\end{bmatrix}\mathcal{P}_{2}^{\top}\wp\\
0_{(\aleph-1)\times 1}\end{bmatrix}$, $\Gamma=XRY$ and $P=\begin{bmatrix}
0_{\aleph\times 1}\\
\tau_{p}\\
0_{\aleph\times 1}
\end{bmatrix}$, where $\tau_{p}\ne 0$ and $\tau_{p}\in \mathbb{C}$  in~\eqref{G} and~\eqref{C}. We next show that the resulting linear quantum system with $\hat{H}=\frac{1}{2}\hat{x}^{\top}G\hat{x}$ and $\hat{L}=C\hat{x}$ satisfies the constraints~\ref{constraint1} and~\ref{constraint2}, and also generates the pure Gaussian state with Gaussian graph matrix~\eqref{thmZ}.   Obviously, we have $R=R^{\top}$. Next we show $ZRZ=-R$. We note that 
\begin{align}
ZRZ =&\mathcal{P}_{1}^{\top}\begin{bmatrix}
\bar{z} &0_{1\times 2\aleph}\\
0_{2\aleph\times 1} & \mathcal{Q}^{\top}\mathcal{P}_{2}^{\top} \bar{Z}  \mathcal{P}_{2} \mathcal{Q}
\end{bmatrix}\begin{bmatrix}
0 &\breve{R}_{21}^{\top}\\
\breve{R}_{21} & \mathcal{Q}^{\top}\mathcal{P}_{2}^{\top} \bar{R}  \mathcal{P}_{2} \mathcal{Q}
\end{bmatrix} \notag \\
&\quad \quad \quad \begin{bmatrix}
\bar{z} &0_{1\times 2\aleph}\\
0_{2\aleph\times 1} & \mathcal{Q}^{\top}\mathcal{P}_{2}^{\top} \bar{Z}  \mathcal{P}_{2} \mathcal{Q}
\end{bmatrix}\mathcal{P}_{1}  \notag \\
=&\mathcal{P}_{1}^{\top}\begin{bmatrix}
0 &\bar{z}\breve{R}_{21}^{\top}\\
\mathcal{Q}^{\top}\mathcal{P}_{2}^{\top} \bar{Z}  \mathcal{P}_{2} \mathcal{Q}
\breve{R}_{21} & \mathcal{Q}^{\top}\mathcal{P}_{2}^{\top} \bar{Z} \bar{R} \mathcal{P}_{2} \mathcal{Q}
\end{bmatrix}
\notag \\
&\quad \quad \quad \begin{bmatrix}
\bar{z} & 0_{1\times 2\aleph}\\
0_{2\aleph\times 1} & \mathcal{Q}^{\top}\mathcal{P}_{2}^{\top} \bar{Z}  \mathcal{P}_{2} \mathcal{Q}
\end{bmatrix}\mathcal{P}_{1}  \notag\\
=&\mathcal{P}_{1}^{\top}\begin{bmatrix}
0 &\bar{z}\breve{R}_{21}^{\top}\mathcal{Q}^{\top}\mathcal{P}_{2}^{\top} \bar{Z}  \mathcal{P}_{2} \mathcal{Q}\\
\mathcal{Q}^{\top}\mathcal{P}_{2}^{\top} \bar{Z}  \mathcal{P}_{2} \mathcal{Q}
\breve{R}_{21}\bar{z}  & \mathcal{Q}^{\top}\mathcal{P}_{2}^{\top} \bar{Z}  \bar{R} \bar{Z}  \mathcal{P}_{2} \mathcal{Q}
\end{bmatrix}\mathcal{P}_{1}. \label{suffzrz}
\end{align}

Since $\bar{Z}=\diag[\tilde{Z}_{1},\cdots,\tilde{Z}_{ \aleph }]$, where $\tilde{Z}_{j}\in \left\{\begin{bmatrix}\frac{\bar{z}^{2}-1}{2\bar{z}} &\frac{\bar{z}^{2}+1}{2\bar{z}}\\ \frac{\bar{z}^{2}+1}{2\bar{z}} &\frac{\bar{z}^{2}-1}{2\bar{z}} \end{bmatrix},\quad\begin{bmatrix}\frac{\bar{z}^{2}-1}{2\bar{z}} &-\frac{\bar{z}^{2}+1}{2\bar{z}}\\ -\frac{\bar{z}^{2}+1}{2\bar{z}} &\frac{\bar{z}^{2}-1}{2\bar{z}} \end{bmatrix} \right\}$, $j=1,2,\cdots, \aleph$, and $\bar{R}=\diag\left[
r_{1},\;\; 
-r_{1},\;\; 
r_{2},\; \;
-r_{2},\;\; 
\cdots ,\;\; 
r_{\aleph},\; \;
-r_{\aleph}\right]$, it is straightforward to show that $\bar{Z}  \bar{R} \bar{Z}=- \bar{R} $. From~\eqref{Q11} and~\eqref{Q22}, it can be shown that the last column of $\mathcal{Q}_{11}$ is $\bar{\mathfrak{q}}_{\aleph}$ and the first column of $\mathcal{Q}_{22}$ is $\tilde{\mathfrak{q}}_{1}$. Hence we have
\footnotesize
\begin{align*}
\mathcal{P}_{2} \mathcal{Q}\breve{R}_{21} =\mathcal{P}_{2} \begin{bmatrix}
\mathcal{Q}_{11}  &0_{\aleph\times \aleph}\\
0_{\aleph\times \aleph} &\mathcal{Q}_{22} 
\end{bmatrix}\begin{bmatrix}
0_{(\aleph-1) \times 1}\\
\bar{\mathfrak{q}}_{\aleph}^{\top} \begin{bmatrix}I_{\aleph} &0_{\aleph\times \aleph}
\end{bmatrix}\mathcal{P}_{2}^{\top}\wp \\
\tilde{\mathfrak{q}}_{1}^{\top} \begin{bmatrix}
0_{\aleph\times \aleph} & I_{\aleph}
\end{bmatrix}\mathcal{P}_{2}^{\top}\wp\\
0_{(\aleph-1)\times 1}\end{bmatrix} 
= \mathcal{P}_{2}\begin{bmatrix}
\bar{\mathfrak{q}}_{\aleph}  \left(  \bar{\mathfrak{q}}_{\aleph}^{\top} \begin{bmatrix}I_{\aleph} &0_{\aleph\times \aleph}
\end{bmatrix}\mathcal{P}_{2}^{\top}\wp  \right)
\\
\tilde{\mathfrak{q}}_{1}\left(  \tilde{\mathfrak{q}}_{1}^{\top} \begin{bmatrix}
0_{\aleph\times \aleph} & I_{\aleph}
\end{bmatrix}\mathcal{P}_{2}^{\top}\wp \right)
\end{bmatrix}  \\
=\mathcal{P}_{2}\begin{bmatrix}
\frac{ \begin{bmatrix}
I_{\aleph} &0_{\aleph\times \aleph}
\end{bmatrix}\mathcal{P}_{2}^{\top}\wp \left( \begin{bmatrix}
I_{\aleph} &0_{\aleph\times \aleph}
\end{bmatrix}\mathcal{P}_{2}^{\top}\wp \right)^{\top}}{\left( \norm{ \begin{bmatrix}
I_{\aleph} &0_{\aleph\times \aleph}
\end{bmatrix}\mathcal{P}_{2}^{\top}\wp}\right)^{2}}   
\begin{bmatrix}I_{\aleph} &0_{\aleph\times \aleph}
\end{bmatrix}\mathcal{P}_{2}^{\top}\wp 
\\
\frac{ \begin{bmatrix}
0_{\aleph\times \aleph} & I_{\aleph}
\end{bmatrix}\mathcal{P}_{2}^{\top}\wp\left(  \begin{bmatrix}
0_{\aleph\times \aleph} & I_{\aleph}
\end{bmatrix}\mathcal{P}_{2}^{\top}\wp\right)^{\top}}{\left(\norm{ \begin{bmatrix}
0_{\aleph\times \aleph} & I_{\aleph}
\end{bmatrix}\mathcal{P}_{2}^{\top}\wp}\right)^{2}}
\begin{bmatrix}
0_{\aleph\times \aleph} & I_{\aleph}
\end{bmatrix}\mathcal{P}_{2}^{\top}\wp 
\end{bmatrix}  
= \mathcal{P}_{2} \begin{bmatrix}
   \begin{bmatrix}
I_{\aleph} &0_{\aleph\times \aleph}
\end{bmatrix}\mathcal{P}_{2}^{\top}\wp\\
  \begin{bmatrix}
0_{\aleph\times \aleph} & I_{\aleph}
\end{bmatrix}\mathcal{P}_{2}^{\top}\wp
\end{bmatrix} =\wp. 
\end{align*}
\normalsize 
It then follows that 
\begin{align*}
\mathcal{Q}^{\top}&\mathcal{P}_{2}^{\top} \bar{Z}  \mathcal{P}_{2} \mathcal{Q}
\breve{R}_{21} \bar{z} =\mathcal{Q}^{\top}\mathcal{P}_{2}^{\top} \bar{Z} \wp   \bar{z} 
=-\mathcal{Q}^{\top}\mathcal{P}_{2}^{\top}\wp  \\
=&-\begin{bmatrix}
\mathcal{Q}_{11}^{\top} &0_{\aleph\times \aleph}\\
0_{\aleph\times \aleph} &\mathcal{Q}_{22}^{\top}
\end{bmatrix}\mathcal{P}_{2}^{\top}\wp   
=-\begin{bmatrix}
\mathcal{Q}_{11}^{\top}\begin{bmatrix}I_{\aleph} &0_{\aleph\times \aleph}
\end{bmatrix}\mathcal{P}_{2}^{\top}\wp \\
\mathcal{Q}_{22}^{\top}\begin{bmatrix} 0_{\aleph\times \aleph} &I_{\aleph}
\end{bmatrix}\mathcal{P}_{2}^{\top}\wp
\end{bmatrix}   \\
=&-\begin{bmatrix}
0_{(\aleph-1) \times 1}\\
\bar{\mathfrak{q}}_{\aleph}^{\top} \begin{bmatrix}I_{\aleph} &0_{\aleph\times \aleph}
\end{bmatrix}\mathcal{P}_{2}^{\top}\wp \\
\tilde{\mathfrak{q}}_{1}^{\top} \begin{bmatrix}
0_{\aleph\times \aleph} & I_{\aleph}
\end{bmatrix}\mathcal{P}_{2}^{\top}\wp\\
0_{(\aleph-1)\times 1}\end{bmatrix}  =-\breve{R}_{21},    
\end{align*}
where we use the fact that both $\mathcal{Q}_{11}$ and $\mathcal{Q}_{22}$ are real orthogonal matrices, so their columns are mutually orthogonal.  
Substituting the above equation into~\eqref{suffzrz} and noting that $\bar{Z}  \bar{R} \bar{Z}=- \bar{R} $, we obtain $ZRZ=-R$. Recall that $Z=X+iY$. It then follows that $XRY=-YRX$, i.e., $\Gamma=-\Gamma^{\top}$. Next we show that the rank condition~\eqref{rankconstraint} holds. That is, we need to show $(Q,\;P)$ is controllable. We have 
\small
\begin{align*}
Q&=-iRY+Y^{-1}\Gamma=-RZ\\
&=-\mathcal{P}_{1}^{\top}\begin{bmatrix}
0 &\breve{R}_{21}^{\top}\\
\breve{R}_{21} & \mathcal{Q}^{\top}\mathcal{P}_{2}^{\top} \bar{R}  \mathcal{P}_{2} \mathcal{Q}
\end{bmatrix}\begin{bmatrix}
\bar{z} &0_{1\times 2\aleph}\\
0_{2\aleph\times 1} & \mathcal{Q}^{\top}\mathcal{P}_{2}^{\top} \bar{Z}  \mathcal{P}_{2} \mathcal{Q}
\end{bmatrix}\mathcal{P}_{1}\\
&=-\mathcal{P}_{1}^{\top}\begin{bmatrix}
0 &\breve{R}_{21}^{\top}\mathcal{Q}^{\top}\mathcal{P}_{2}^{\top} \bar{Z}  \mathcal{P}_{2} \mathcal{Q}\\
\breve{R}_{21}  \bar{z} & \mathcal{Q}^{\top}\mathcal{P}_{2}^{\top} \bar{R}    \bar{Z}  \mathcal{P}_{2} \mathcal{Q}  
\end{bmatrix}\mathcal{P}_{1}.
\end{align*}
\normalsize
According to Lemma~4 in~\cite{MWPY16:cdc}, it suffices to  show $\left(\begin{bmatrix}
0 &\breve{R}_{21}^{\top}\mathcal{Q}^{\top}\mathcal{P}_{2}^{\top} \bar{Z}  \mathcal{P}_{2} \mathcal{Q}\\
\breve{R}_{21}  \bar{z} & \mathcal{Q}^{\top}\mathcal{P}_{2}^{\top} \bar{R}    \bar{Z}  \mathcal{P}_{2} \mathcal{Q}  
\end{bmatrix},\;\mathcal{P}_{1}P\right)$ is controllable. Since $\mathcal{P}_{1} P= \begin{bmatrix}\tau_{p} \\ 0_{2\aleph \times 1} \end{bmatrix}$, according to Lemma~5 in~\cite{MWPY16:cdc}, it suffices to  show that $\left(\mathcal{Q}^{\top}\mathcal{P}_{2}^{\top} \bar{R}   \bar{Z}  \mathcal{P}_{2} \mathcal{Q},\;\; \breve{R}_{21}\bar{z} \right)$ is controllable. Again using Lemma~4 in~\cite{MWPY16:cdc}, it suffices to  show  $\left(\bar{R}\bar{Z},\;\; \mathcal{P}_{2} \mathcal{Q} \breve{R}_{21} \bar{z} \right)$ is controllable. That is, we need to  show  $\left(\bar{R}\bar{Z},\;\;\wp \bar{z} \right)$ is controllable. Recall that $\bar{Z}=\diag[\tilde{Z}_{1},\cdots,\tilde{Z}_{ \aleph }]$. Let $\tilde{R}_{j}\triangleq\begin{bmatrix}
r_{j} &0\\
0 &-r_{j}
\end{bmatrix}$, $r_{j}\ne 0$, and $\tilde{\wp}_{j}\triangleq\begin{bmatrix}
\wp_{2j-1}\\ \wp_{2j}
\end{bmatrix}$   where $\wp_{j}$ is the $j$th element of $\wp$. According to Lemma~6 in~\cite{MWPY16:cdc}, it suffices to show $\left(\tilde{R}_{j}\tilde{Z}_{j},\; \tilde{\wp}_{j} \bar{z} \right)$, $j=1,2,\cdots, \aleph$, are all controllable and $\tilde{R}_{j}\tilde{Z}_{j} $, $j=1,2,\cdots, \aleph$, have no common eigenvalues. Since $\wp$  is a real eigenvector having no zero entries associated with the eigenvalue  $-\frac{1}{\bar{z}}$ of  $\bar{Z}$, it follows that $\tilde{\wp}_{j}$ is a real eigenvector having no zero entries associated with the eigenvalue  $-\frac{1}{\bar{z}}$ of  $\tilde{Z}_{j}$. Therefore, we have 
\begin{align*}
&\rank\left(\left[\tilde{\wp}_{j} \bar{z} \;\;\;\;\tilde{R}_{j}\tilde{Z}_{j}\tilde{\wp}_{j}\bar{z} \right]\right)=\rank\left(\left[\tilde{\wp}_{j}\bar{z} \;\;\;\;-\tilde{R}_{j}\tilde{\wp}_{j}\right]\right)\\
&= \rank\left(\begin{bmatrix}
\wp_{2j-1}\bar{z} &-r_{j}\wp_{2j-1}\\
\wp_{2j}\bar{z} &r_{j}\wp_{2j}
\end{bmatrix}\right)=2,\quad j=1,2,\cdots, \aleph. 
\end{align*}
It follows that  $\left(\tilde{R}_{j}\tilde{Z}_{j},\; \tilde{\wp}_{j} \bar{z} \right)$, $j=1,2,\cdots, \aleph$, is controllable. In addition, since $\tilde{Z}_{j}\in\left\{\begin{bmatrix}\frac{\bar{z}^{2}-1}{2\bar{z}} &\frac{\bar{z}^{2}+1}{2\bar{z}}\\ \frac{\bar{z}^{2}+1}{2\bar{z}} &\frac{\bar{z}^{2}-1}{2\bar{z}} \end{bmatrix},\; \;\begin{bmatrix}\frac{\bar{z}^{2}-1}{2\bar{z}} &-\frac{\bar{z}^{2}+1}{2\bar{z}}\\ -\frac{\bar{z}^{2}+1}{2\bar{z}} &\frac{\bar{z}^{2}-1}{2\bar{z}} \end{bmatrix} \right\}$,  it is straightforward to show $(\tilde{R}_{j}\tilde{Z}_{j})^{2}=-\tilde{R}_{j}^{2}=-r_{j}^{2}I_{2}$. Since $r_{j}\ne 0$, using Lemma~2 in~\cite{MPW15:arxiv}, it follows that the matrix $ \tilde{R}_{j}\tilde{Z}_{j} $ is diagonalizable and its eigenvalues are either $ r_{j} i$ or $-r_{j} i$. Since $|r_{j}|\ne |r_{k}|$ whenever $ j\ne k$, hence  $\tilde{R}_{j}\tilde{Z}_{j} $, $j=1,2,\cdots, \aleph$, have no common eigenvalues. By Lemma~6 in~\cite{MWPY16:cdc}, we have established that $\left(\bar{R} \bar{Z}, \;\;\wp \bar{z}  \right)$ is controllable. Then we conclude that $(Q,P)$ is controllable. Hence the resulting linear quantum system is strictly stable and  generates the pure Gaussian state with  Gaussian graph matrix~\eqref{thmZ}. 
Finally, for the matrix $R$,  using~\eqref{themR11} and~\eqref{themR22}, we have 
\begin{align*}
&\mathcal{Q}^{\top}\mathcal{P}_{2}^{\top} \bar{R}  \mathcal{P}_{2} \mathcal{Q}\\
=&   \begin{bmatrix}
\mathcal{Q}_{11}^{\top} &0_{\aleph\times \aleph} \\
0_{\aleph\times \aleph} &\mathcal{Q}_{22}^{\top}
\end{bmatrix} \begin{bmatrix}
\ohill{R}_{11} &0_{\aleph\times \aleph} \\
0_{\aleph\times \aleph} &\ohill{R}_{22}
\end{bmatrix} \begin{bmatrix}
\mathcal{Q}_{11} &0_{\aleph\times \aleph} \\
0_{\aleph\times \aleph} &\mathcal{Q}_{22}
\end{bmatrix}\\
= & \begin{bmatrix}
\mathcal{Q}_{11}^{\top} \ohill{R}_{11} \mathcal{Q}_{11}  &0_{\aleph\times \aleph}  \\
0_{\aleph\times \aleph} &\mathcal{Q}_{22}^{\top} \ohill{R}_{22} \mathcal{Q}_{22} 
\end{bmatrix}. 
\end{align*}
It follows from~\eqref{Q11} and~\eqref{Q22} that $\mathcal{Q}_{11}^{\top} \ohill{R}_{11} \mathcal{Q}_{11}$ and $\mathcal{Q}_{22}^{\top} \ohill{R}_{22} \mathcal{Q}_{22}$ are unreduced real symmetric tridiagonal matrices. It is straightforward to show that the chosen $R$ is an unreduced real symmetric tridiagonal matrix. Substituting $R$, $\Gamma$ and $P$ into~\eqref{G} and~\eqref{C}, we obtain the system Hamiltonian $\hat{H}=\frac{1}{2}\hat{x}^{\top}G\hat{x}=\frac{1}{2}\hat{x}^{\top}\begin{bmatrix}R &0_{(2\aleph+1)\times (2\aleph+1)} \\0_{(2\aleph+1)\times (2\aleph+1)}  &R \end{bmatrix}\hat{x}$, which satisfies the first constraint~\ref{constraint1}. In addition, the system–-reservoir coupling vector $\hat{L}$    is given by 
$\hat{L}= C\hat{x} =P^{\top}\left[-Z\;\; I_{2\aleph+1}\right]\hat{x} = -\tau_{p}\bar{z}\hat{q}_{\aleph+1}+\tau_{p}\hat{p}_{\aleph+1}$, which satisfies the second constraint~\ref{constraint2}. Thus the resulting system satisfies the constraints~\ref{constraint1} and~\ref{constraint2}, and also generates the pure Gaussian state specified by~\eqref{thmZ}. This completes  the sufficiency proof. 
\end{proof}

\section*{References}

\begin{thebibliography}{10}
\providecommand{\url}[1]{#1}
\csname url@samestyle\endcsname
\providecommand{\newblock}{\relax}
\providecommand{\bibinfo}[2]{#2}
\providecommand{\BIBentrySTDinterwordspacing}{\spaceskip=0pt\relax}
\providecommand{\BIBentryALTinterwordstretchfactor}{4}
\providecommand{\BIBentryALTinterwordspacing}{\spaceskip=\fontdimen2\font plus
\BIBentryALTinterwordstretchfactor\fontdimen3\font minus
  \fontdimen4\font\relax}
\providecommand{\BIBforeignlanguage}[2]{{%
\expandafter\ifx\csname l@#1\endcsname\relax
\typeout{** WARNING: IEEEtran.bst: No hyphenation pattern has been}%
\typeout{** loaded for the language `#1'. Using the pattern for}%
\typeout{** the default language instead.}%
\else
\language=\csname l@#1\endcsname
\fi
#2}}
\providecommand{\BIBdecl}{\relax}
\BIBdecl

\bibitem{BP03:book}
S.~L. Braunstein and A.~K. Pati, \emph{Quantum Information with Continuous
  Variables}.\hskip 1em plus 0.5em minus 0.4em\relax Springer, 2003.

\bibitem{AI07:jpa}
G.~Adesso and F.~Illuminati, ``Entanglement in continuous-variable systems:
  recent advances and current perspectives,'' \emph{Journal of Physics A:
  Mathematical and Theoretical}, vol.~40, no.~28, pp. 7821--7880, 2007.

\bibitem{weedbrook12:rmp}
C.~Weedbrook, S.~Pirandola, R.~Garc\'ia-Patr\'on, N.~J. Cerf, T.~C. Ralph,
  J.~H. Shapiro, and S.~Lloyd, ``{G}aussian quantum information,''
  \emph{Reviews of Modern Physics}, vol.~84, no.~2, pp. 621--669, 2012.

\bibitem{A06:prl}
G.~Adesso, ``Generic entanglement and standard form for {N}-mode pure
  {G}aussian states,'' \emph{Physical Review Letters}, vol.~97, no.~13, p.
  130502, 2006.

\bibitem{SSM88:pra}
R.~Simon, E.~C.~G. Sudarshan, and N.~Mukunda, ``{G}aussian pure states in
  quantum mechanics and the symplectic group,'' \emph{Physical Review A},
  vol.~37, no.~8, pp. 3028--3038, 1988.

\bibitem{SMD94:pra}
R.~Simon, N.~Mukunda, and B.~Dutta, ``Quantum-noise matrix for multimode
  systems: ${U}(n)$ invariance, squeezing, and normal forms,'' \emph{Physical
  Review A}, vol.~49, no.~3, pp. 1567--1583, 1994.

\bibitem{PCZ96:prl}
J.~F. Poyatos, J.~I. Cirac, and P.~Zoller, ``Quantum reservoir engineering with
  laser cooled trapped ions,'' \emph{Physical Review Letters}, vol.~77, no.~23,
  pp. 4728--4731, 1996.

\bibitem{WC14:pra}
M.~J. Woolley and A.~A. Clerk, ``Two-mode squeezed states in cavity
  optomechanics via engineering of a single reservoir,'' \emph{Physical Review
  A}, vol.~89, no.~6, p. 063805, 2014.

\bibitem{P04:jpa}
J.~S. Prauzner-Bechcicki, ``Two-mode squeezed vacuum state coupled to the
  common thermal reservoir,'' \emph{Journal of Physics A: Mathematical and
  General}, vol.~37, no.~15, pp. L173--L181, 2004.

\bibitem{Y05:pra}
N.~Yamamoto, ``Parametrization of the feedback {H}amiltonian realizing a pure
  steady state,'' \emph{Physical Review A}, vol.~72, no.~2, p. 024104, 2005.

\bibitem{KBDKMZ08:pra}
B.~Kraus, H.~P. B\"uchler, S.~Diehl, A.~Kantian, A.~Micheli, and P.~Zoller,
  ``Preparation of entangled states by quantum {M}arkov processes,''
  \emph{Physical Review A}, vol.~78, no.~4, p. 042307, 2008.

\bibitem{VWC09:np}
F.~Verstraete, M.~M. Wolf, and J.~I. Cirac, ``Quantum computation and
  quantum-state engineering driven by dissipation,'' \emph{Nature Physics},
  vol.~5, no.~9, pp. 633--636, 2009.

\bibitem{KY12:pra}
K.~Koga and N.~Yamamoto, ``Dissipation-induced pure {G}aussian state,''
  \emph{Physical Review A}, vol.~85, no.~2, p. 022103, 2012.

\bibitem{Y12:ptrsa}
N.~Yamamoto, ``Pure {G}aussian state generation via dissipation: a quantum
  stochastic differential equation approach,'' \emph{Philosophical Transactions
  of the Royal Society A: Mathematical, Physical and Engineering Sciences},
  vol. 370, no. 1979, pp. 5324--5337, 2012.

\bibitem{IY13:pra}
Y.~Ikeda and N.~Yamamoto, ``Deterministic generation of {G}aussian pure states
  in a quasilocal dissipative system,'' \emph{Physical Review A}, vol.~87,
  no.~3, p. 033802, 2013.

\bibitem{MWPY14:msc}
S.~Ma, M.~J. Woolley, I.~R. Petersen, and N.~Yamamoto, ``Preparation of pure
  {G}aussian states via cascaded quantum systems,'' in \emph{Proceedings of
  IEEE Conference on Control Applications (CCA)}, October 2014, pp. 1970--1975.

\bibitem{MWPY16:arxiv}
\BIBentryALTinterwordspacing
S.~Ma, M.~J. Woolley, I.~R. Petersen, and N.~Yamamoto, ``Cascade and locally
  dissipative realizations of linear quantum systems for pure {G}aussian state
  covariance assignment,'' \emph{arXiv:1604.03182}, 2016. [Online]. Available:
  \url{http://arxiv.org/abs/1604.03182}
\BIBentrySTDinterwordspacing

\bibitem{MPW15:arxiv}
\BIBentryALTinterwordspacing
S.~Ma, I.~R. Petersen, and M.~J. Woolley, ``Linear quantum systems with
  diagonal passive {H}amiltonian and a single dissipative channel,''
  \emph{arXiv:1511.04929}, 2015. [Online]. Available:
  \url{http://arxiv.org/abs/1511.04929}
\BIBentrySTDinterwordspacing

\bibitem{WM10:book}
H.~M. Wiseman and G.~J. Milburn, \emph{Quantum Measurement and Control}.\hskip
  1em plus 0.5em minus 0.4em\relax Cambridge University Press, 2010.

\bibitem{BP02:book}
H.~P. Breuer and F.~Petruccione, \emph{The Theory of Open Quantum
  Systems}.\hskip 1em plus 0.5em minus 0.4em\relax Oxford University Press,
  2002.

\bibitem{ZLV15:pra}
S.~Zippilli, J.~Li, and D.~Vitali, ``Steady-state nested entanglement
  structures in harmonic chains with single-site squeezing manipulation,''
  \emph{Physical Review A}, vol.~92, no.~3, p. 032319, 2015.

\bibitem{AEP02:pra}
K.~Audenaert, J.~Eisert, and M.~B. Plenio, ``Entanglement properties of the
  harmonic chain,'' \emph{Physical Review A}, vol.~66, no.~4, p. 042327, 2002.

\bibitem{BR04:pra}
A.~Botero and B.~Reznik, ``Spatial structures and localization of vacuum
  entanglement in the linear harmonic chain,'' \emph{Physical Review A},
  vol.~70, no.~5, p. 052329, 2004.

\bibitem{MWPY16:cdc}
\BIBentryALTinterwordspacing
S.~Ma, M.~J. Woolley, I.~R. Petersen, and N.~Yamamoto, ``Pure {G}aussian
  quantum states from passive {H}amiltonians and an active local dissipative
  process,'' \emph{arXiv:1608.02698}, 2016. [Online]. Available:
  \url{http://arxiv.org/abs/1608.02698}
\BIBentrySTDinterwordspacing

\bibitem{BF06:JPA}
F.~Benatti and R.~Floreanini, ``Entangling oscillators through environment
  noise,'' \emph{Journal of Physics A: Mathematical and General}, vol.~39,
  no.~11, pp. 2689--2699, 2006.

\bibitem{PSL09:pra}
S.~Pirandola, A.~Serafini, and S.~Lloyd, ``Correlation matrices of two-mode
  bosonic systems,'' \emph{Physical Review A}, vol.~79, no.~5, p. 052327, 2009.

\bibitem{LS15:jpa}
V.~Link and W.~T. Strunz, ``Geometry of {G}aussian quantum states,''
  \emph{Journal of Physics A: Mathematical and Theoretical}, vol.~48, no.~27,
  p. 275301, 2015.

\bibitem{MFL11:pra}
N.~C. Menicucci, S.~T. Flammia, and P.~van Loock, ``Graphical calculus for
  {G}aussian pure states,'' \emph{Physical Review A}, vol.~83, no.~4, p.
  042335, 2011.

\bibitem{WD05:prl}
H.~M. Wiseman and A.~C. Doherty, ``Optimal unravellings for feedback control in
  linear quantum systems,'' \emph{Physical Review Letters}, vol.~94, no.~7, p.
  070405, 2005.

\bibitem{ZJK96:book}
K.~Zhou, J.~C. Doyle, and K.~Glover, \emph{Robust and Optimal Control}.\hskip
  1em plus 0.5em minus 0.4em\relax Prentice Hall, 1996.

\bibitem{KST99:LAA}
D.~Kulkarni, D.~Schmidt, and S.~K. Tsui, ``Eigenvalues of tridiagonal
  pseudo--{T}oeplitz matrices,'' \emph{Linear Algebra and its Applications},
  vol. 297, no. 1--3, pp. 63--80, 1999.

\bibitem{HJ12:book}
R.~A. Horn and C.~R. Johnson, \emph{Matrix Analysis}, 2nd~ed.\hskip 1em plus
  0.5em minus 0.4em\relax Cambridge University Press, 2012.

\bibitem{P98:book}
B.~N. Parlett, \emph{The Symmetric Eigenvalue Problem}.\hskip 1em plus 0.5em
  minus 0.4em\relax Society for Industrial and Applied Mathematics, 1998.

\bibitem{GV96:book}
G.~H. Golub and C.~F.~V. Loan, \emph{Matrix Computations}.\hskip 1em plus 0.5em
  minus 0.4em\relax Johns Hopkins University Press, 1996.

\bibitem{VW02:pra}
G.~Vidal and R.~F. Werner, ``Computable measure of entanglement,''
  \emph{Physical Review A}, vol.~65, no.~3, p. 032314, 2002.

\bibitem{P05:prl}
M.~B. Plenio, ``The logarithmic negativity: a full entanglement monotone that
  is not convex,'' \emph{Physical Review Letters}, vol.~95, no.~9, p. 090503,
  2005.

\bibitem{ASI04:pra}
G.~Adesso, A.~Serafini, and F.~Illuminati, ``Extremal entanglement and
  mixedness in continuous variable systems,'' \emph{Physical Review A},
  vol.~70, no.~2, p. 022318, 2004.

\bibitem{PEDC05:prl}
M.~B. Plenio, J.~Eisert, J.~Drei\ss{}ig, and M.~Cramer, ``Entropy,
  entanglement, and area: analytical results for harmonic lattice systems,''
  \emph{Physical Review Letters}, vol.~94, no.~6, p. 060503, 2005.

\bibitem{CEPD06:pra}
M.~Cramer, J.~Eisert, M.~B. Plenio, and J.~Drei\ss{}ig, ``Entanglement-area law
  for general bosonic harmonic lattice systems,'' \emph{Physical Review A},
  vol.~73, no.~1, p. 012309, 2006.

\bibitem{EP10:arxiv}
\BIBentryALTinterwordspacing
J.~Eisert and T.~Prosen, ``Noise-driven quantum criticality,''
  \emph{arXiv:1012.5013}, 2010. [Online]. Available:
  \url{http://arxiv.org/abs/1012.5013}
\BIBentrySTDinterwordspacing

\end{thebibliography}

\end{document}